\def\VersionLong{}
\def\VersionFinal{}
	\newcommand{\LongVersion}[1]{#1}
	\newcommand{\ShortVersion}[1]{}
	\newcommand{\LongVersion}[1]{}
	\newcommand{\ShortVersion}[1]{#1}
\definecolor{darkblue}{rgb}{0.0,0.0,0.6}
\definecolor{darkgreen}{rgb}{0, 0.5, 0}
\definecolor{darkpurple}{rgb}{0.7, 0, 0.7}
\definecolor{darkblue}{rgb}{0, 0, 0.7}
\crefname{line}{\text{line}}{\text{lines}} %
\crefname{item}{\text{item}}{\text{items}} %
\crefname{example}{\text{Example}}{\text{Examples}} %
\crefname{assumption}{\text{Assumption}}{\text{Assumptions}} %
\crefname{algorithm}{\text{Algorithm}}{\text{Algorithms}}
\newcommand{\gennote}[4][]{\todo[linecolor=#3,backgroundcolor=#3!25,bordercolor=#3#1]{#4: #2}}
\newcommand{\mw}[1]{\gennote{#1}{orange}{MW}}
\newcommand{\ks}[1]{{\gennote{#1}{purple}{KS}}}
\newcommand{\reviewer}[2]{{\gennote{``#2''}{purple}{Reviewer #1}}}
\tikzstyle{defproblem} = [
\newcommand{\recallResult}[2]
{%
	\smallskip

	\noindent\fcolorbox{black}{green!15}{
		\begin{minipage}{.95\columnwidth}
			\noindent\textbf{\cref{#1} (recalled).}
			{\em{}#2}
		\end{minipage}
	}

	\smallskip
}
\tikzstyle{rqanswer} = [
\newenvironment{ienumeration}
	{\begin{inparaenum}[\itshape i\upshape)]}
	{\end{inparaenum}}
 \newenvironment{myitemize}
	{\ifdefined\VersionLong\begin{itemize}\else\begin{inparaitem}[]\fi}
	{\ifdefined\VersionLong\end{itemize}\else\end{inparaitem}\fi}
\newcommand{\N}{\mathbb{N}}
\newcommand{\powerset}[1]{\mathcal{P}({#1})}
\newcommand{\finpowerset}[1]{\mathcal{P}_{\mathrm{fin}}({#1})}
\newcommand{\setdiff}{\triangle}
\newcommand{\flatten}[1]{{\mathrm{elem}({#1})}}
\newcommand{\KTrue}{\ensuremath{\mathit{true}}}
\newcommand{\word}[1][]{w#1}
\newcommand{\INPUT}{\Sigma}
\newcommand{\OUTPUT}{\Gamma}
\newcommand{\emptyword}{\epsilon}
\newcommand{\Autom}{\mathcal{A}}
\newcommand{\M}{\mathcal{M}}
\newcommand{\init}{\mathit{init}}
\newcommand{\loc}{q}
\newcommand{\Loc}{Q}
\newcommand{\initLoc}{\loc_{\init}}
\newcommand{\guard}{\varphi}
\newcommand{\transition}{\delta}
\newcommand{\Lg}{\mathcal{L}}
\newcommand{\Algebra}{\mathcal{B}}
\newcommand{\domain}{\mathfrak{D}_{\Algebra}}
\newcommand{\Predicates}{{\Psi}_{\Algebra}}
\newcommand{\sem}[1]{\llbracket{ #1 }\rrbracket}
\newcommand{\Table}{T}
\newcommand{\prefix}{s}
\newcommand{\Prefixes}{S}
\newcommand{\NextPrefixes}{R}
\newcommand{\Suffixes}{E}
\newcommand{\partitioning}{P}
\newcommand{\Partitions}{\Pi_{\Algebra}}
\newcommand{\partition}{\pi}
\newcommand{\prefixes}{\mathrm{prefixes}}
\newcommand{\cex}{\mathit{cex}}
\newcommand{\targetLg}{\Lg_{\mathrm{tgt}}}
\newcommand{\hypothesis}{\Autom_{\mathrm{hyp}}}
\newcommand{\targetM}{\M_{\mathrm{tgt}}}
\newcommand{\hypothesisM}{\M_{\mathrm{hyp}}}
\newcommand{\hypothesisMe}{\M_{\mathrm{hyp}}^e}
\newcommand{\hypothesisMeNext}{{\hypothesisMe}'}
\newcommand{\hypothesisMNext}{{\M_{\mathrm{hyp}}}'}
\newcommand{\SigmaEf}{\Sigma_E^{\mathit{final}}}
\newcommand{\row}{\mathit{row}}
\newcommand{\trans}{\delta_t}
\newcommand{\outp}{\delta_o}
\newcommand{\transe}{\delta^e_t}
\newcommand{\outpe}{\delta^e_o}
\newcommand{\LambdaM}{\Lambda^*_{M}}
\newcommand{\eMealy}{{\M^e}}
\newcommand{\targetP}{\partition_{\mathrm{tgt}}}
\newcommand{\na}{\mathit{na}}
\newcommand{\MH}{\textsf{MH}}
\newcommand{\ATGS}{\textsf{ATGS}}
\newcommand{\comparisonBad}{\cellcolor{red!25}\bf}
\newcommand{\comparisonGood}{\cellcolor{green!25}\bf}
 	\definecolor{colorok}{RGB}{80,80,150}
	\definecolor{colorok}{RGB}{0,0,0}
\newcommand{\eg}{\textcolor{colorok}{e.\,g.,}\xspace}
\newcommand{\ie}{\textcolor{colorok}{i.\,e.,}\xspace}
\newcommand{\st}{\textcolor{colorok}{s.t.}\xspace}
\newcommand{\wrt}{\textcolor{colorok}{w.r.t.}\xspace}
\def\orcidID#1{\smash{\href{https://orcid.org/#1}{\protect\raisebox{-1.25pt}{\protect\includegraphics{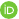}}}}}
\title{Active Learning of Symbolic Mealy Automata}
\titlerunning{Active Learning of Symbolic Mealy Automata}
\author{
\ifdefined\VersionAnonymous%
\else
Kengo Irie\inst{1}\and
Masaki Waga\orcidID{0000-0001-9360-7490/}\inst{1,2}\and
Kohei Suenaga\orcidID{0000-0002-7466-8789}\inst{1}\LongVersion{\thanks{%
    This is the author (and extended) version of the manuscript of the same name, published in the proceedings of the 22nd International Colloquium
on Theoretical Aspects of Computing (ICTAC 2025).
    The final version is available at \url{www.springer.com}.
    }%
}
\fi
}
\institute{%
\ifdefined\VersionAnonymous%
\else
Graduate School of Informatics, Kyoto University, Kyoto, Japan
\and
National Institute of Informatics, Tokyo, Japan
\fi
}
\begin{document}

\maketitle              %
\ifdefined\VersionFinal
\else
\pagestyle{plain}
\fi
\begin{abstract}
We propose $\LambdaM$---an active learning algorithm that learns symbolic Mealy automata, which support infinite input alphabets and multiple output characters.
Each of these two features has been addressed separately in prior work.
Combining these two features poses a challenge in learning the outputs corresponding to potentially infinite sets of input characters at each state.
To address this challenge, we introduce the notion of \emph{essential input characters}, a finite set of input characters that is sufficient to learn the output function of a symbolic Mealy automaton. 
$\LambdaM$ maintains an underapproximation of the essential input characters and refines this set during learning.
We prove that $\LambdaM$ terminates under certain assumptions.
Moreover, we provide upper and lower bounds for the query complexity.
Their similarity suggests the tightness of the bounds.
We empirically demonstrate that $\LambdaM$ is 
\begin{ienumeration}
    \item efficient regarding the number of queries on practical benchmarks and 
    \item scalable according to evaluations with randomly generated benchmarks.
\end{ienumeration}

 {\keywords{Automata learning \and Symbolic automata \and Mealy automata}}
\end{abstract}

\section{Introduction}

\emph{Active automata learning}~\cite{DBLP:journals/iandc/Angluin87} is the problem of exactly identifying an unknown automaton via a finite number of queries.
Since the seminal work of Angluin~\cite{DBLP:journals/iandc/Angluin87}, \LongVersion{active automata learning}\ShortVersion{it} has received much attention from both the machine learning theory and system verification communities.
In the context of verification, for example, it is used to identify an automaton representing the behavior of an unknown system for testing~\cite{DBLP:conf/forte/PeledVY99,DBLP:conf/dagstuhl/Meinke16,DBLP:conf/hybrid/Waga20} or controller synthesis~\cite{DBLP:journals/tase/ZhangFL20}.

The L* algorithm~\cite{DBLP:journals/iandc/Angluin87}, the best-known active DFA learning algorithm, infers the minimum DFA recognizing the target regular language $\targetLg$ using \emph{membership} and \emph{equivalence} queries: 
In a membership query, the learner asks if a word belongs to $\targetLg$; in an equivalence query, the learner asks if the DFA built by the learner---called a \emph{hypothesis DFA}---\LongVersion{correctly }recognizes $\targetLg$.
L* is proved to infer the minimum DFA that recognizes $\targetLg$ within a polynomial number of queries.

Although active automata learning is known to be theoretically interesting and practically useful,
classical algorithms are\LongVersion{ too} restrictive for real-world systems due to the following gaps:
\begin{ienumeration}
\item real-world systems usually take inputs of infinite values (\eg{} numbers), and modeling them as state machines typically requires manual or automatic identification of alphabet abstraction (\eg{}\LongVersion{ identification of} guard predicates), and
\item real-world systems usually produce multi-valued outputs, and we need to learn a state machine representing a \emph{function} rather than a \emph{language}.
\end{ienumeration}

On the one hand, Drews and D'Antoni proposed\LongVersion{ the} $\Lambda^*$\LongVersion{ algorithm}~\cite{DBLP:conf/tacas/DrewsD17} to address the first challenge by extending L*.
$\Lambda^*$ learns a \emph{symbolic finite automaton (s-FA)},\LongVersion{ which is} an automaton with predicate-labeled transitions to handle \emph{large} or even \emph{infinite} input alphabets.
The learner in $\Lambda^*$ constructs a hypothesis automaton by
\begin{ienumeration}
    \item learning an automaton (called an \emph{evidence automaton}) over a finite subset of the alphabet using a variant of L* and
    \item learning predicates from concrete characters to generalize the evidence automaton into a symbolic finite automaton.
\end{ienumeration}
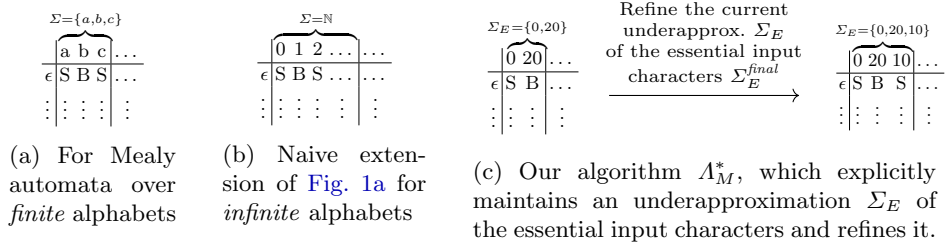
\begin{figure}[t]
    \begin{minipage}{0.18\textwidth}
        \centering
        
        \scalebox{.8}{
         \begin{tabular}{l|ccc|c}
              \multicolumn{5}{l}{\hspace{0.2em}$\overbrace{\hspace{2.7em}}^{\INPUT = \{a, b, c\}}$\vspace{-0.5em}} \\
              & a & b & c  & \dots \\\hline
              $\epsilon$ & S & B & S & \dots\\
              \vdots & \vdots & \vdots & \vdots & \vdots 
        \end{tabular}
        }
        \subcaption{For Mealy automata over \emph{finite} alphabets}
        \label{figure:Mealy_style_learning:finiteOutputs}
    \end{minipage}
    \hfill
    \begin{minipage}{0.22\textwidth}
        \centering
        \scalebox{.8}{
        \begin{tabular}{l|cccc|c}
            \multicolumn{6}{l}{\hspace{0.7em}$\overbrace{\hspace{4.2em}}^{\INPUT = \mathbb{N}}$\vspace{-0.5em}} \\
            & 0 & 1 &2 & \dots & \dots \\\hline
            $\epsilon$ & S & B & S & \dots & \dots\\
            \vdots & \vdots & \vdots & \vdots & \vdots & \vdots 
        \end{tabular}
        }
        \subcaption{Naive extension of \cref{figure:Mealy_style_learning:finiteOutputs} for \emph{infinite} alphabets}
        \label{figure:Mealy_style_learning:infiniteOutputs}
    \end{minipage}
    \hfill
    \begin{minipage}{0.5\textwidth}
        \centering
        \begin{tikzpicture}
            \node at (2.8, 0) {
            \scalebox{.8}{
              \begin{tabular}{l|cc|c}
                \multicolumn{4}{l}{\hspace{-0.2em}$\overbrace{\hspace{2.0em}}^{\Sigma_E = \{0, 20\}}$\vspace{-0.5em}} \\
                  & 0 & 20 & \dots \\\hline
                  $\epsilon$ & S & B & \dots \\
                  \vdots & \vdots & \vdots & \vdots \\
                \end{tabular}
                }
              };

            \scriptsize
            \draw[->] (3.8,-.25) -- (6.3, -.25) node[midway, above, align = center] {Refine the current\\ underapprox.\ $\Sigma_E$\\ of the essential input\\ characters $\SigmaEf$};
            \normalsize

            \node at (7.5, 0) {
            \scalebox{.7}{
                \begin{tabular}{l|ccc|c}
                \multicolumn{5}{l}{\hspace{-0.2em}$\overbrace{\hspace{3.2em}}^{\Sigma_E = \{0, 20, 10\}}$\vspace{-0.5em}} \\
                  & 0 & 20 & 10&\dots \\\hline
                  $\epsilon$ & S & B &S & \dots\\
                  \vdots & \vdots & \vdots & \vdots & \vdots \\
                \end{tabular}
                }
              };
        \end{tikzpicture}
        \subcaption{Our algorithm $\LambdaM$, which explicitly maintains an underapproximation $\Sigma_E$ of the essential input characters and refines it.}
        \label{figure:Mealy_style_learning:LambdaStarM}
    \end{minipage}
    
    \caption{Observation tables for learning Mealy-style automata. The cell indexed by $(s, e)$ contains the output for the input $s \cdot e$. The algorithm in~\cite{DBLP:phd/de/Niese2003} identifies the output function\LongVersion{ of Mealy automata} by recording the output for each input character (\cref{figure:Mealy_style_learning:finiteOutputs}), which does not work for infinite alphabets (\cref{figure:Mealy_style_learning:infiniteOutputs}). Instead, we learn the output function focusing on the essential characters $\Sigma_E$\LongVersion{, which we refine through learning} (\cref{figure:Mealy_style_learning:LambdaStarM}).}
    \label{figure:Mealy_style_learning}
\end{figure}

On the other hand, Niese~\cite{DBLP:phd/de/Niese2003} extended L* to learn a \emph{deterministic Mealy automaton} to address the second challenge.
\cref{figure:Mealy_style_learning:finiteOutputs} shows the idea of learning Mealy-style outputs with a data structure called an \emph{observation table}. 
Each cell in the observation table is indexed by a row--column pair $(s, e) \in \Sigma^* \times \Sigma^*$ of words, and the cell indexed by $(s, e)$ contains the output of the target Mealy automaton after processing $s \cdot e$.
The row indexed by $s$ corresponds to the state reached by consuming $s$, and
by recording the output after processing $s \cdot a$ for each $a \in \Sigma$, Niese's algorithm identifies a Mealy-style output function.
\paragraph{Contribution: Active learning of symbolic Mealy automata}
We propose an active learning algorithm $\LambdaM$---an extension of $\Lambda^*$---that learns \emph{symbolic Mealy automata (s-MAs)}.
An s-MA is a special case of a \emph{symbolic finite transducer}~\cite{DBLP:conf/popl/VeanesHLMB12} with predicate-labeled transitions, like s-FAs, and outputs associated with transitions, like Mealy automata.
To the best of our knowledge, this is the first active learning algorithm for automata that supports general Boolean algebras for input predicates and Mealy-style multiple outputs with the minimality guarantee.

$\LambdaM$ is based on the automata learning algorithms mentioned above.
However, it is more than a naive combination of them.
In particular, the major challenge is that Niese's algorithm requires recording output characters for every input character, which is infeasible when dealing with an infinite input alphabet (\cref{figure:Mealy_style_learning:infiniteOutputs}).

Our central observation is that, for learnable symbolic automata, tracking only a finite subset of the input alphabet, which we call \emph{essential input characters} $\SigmaEf$, is sufficient for learning.
Furthermore, we found that tracking $\SigmaEf$ alone is enough to learn the output function of an s-MA.
Following this observation, $\LambdaM$, refines an underapproximation $\Sigma_E$ to identify $\SigmaEf$ and learns the output function by focusing on $\Sigma_E$ (\cref{figure:Mealy_style_learning:LambdaStarM}).%
\begin{figure}[tbp]
    \centering
    \begin{tikzpicture} [node distance = 1.4cm, on grid, auto,scale=0.85,every node/.style={transform shape}]
        \node (table) at (-3.5, -0.5) {\scriptsize
            \begin{tabular}{l|cc|c}
               \multicolumn{4}{c}{\hspace{3.9em}$\overbrace{\hspace{2.3em}}^{\Sigma_E = \{0,20\}}$\vspace{-0.3em}} \\
                & 0 & 20 & $0 0$\\\hline
                $\epsilon$ ($\to$ $q_0$) & S & B  & S\\
                0 ($\to$ $q_1$) & S & B & P\\
                $0 0$ ($\to$ $q_2$) & P & P & P \\
                $0 0 0$ ($\to$ $q_3$) & P & P & S \\\hline
                \vdots & \vdots & \vdots  & \vdots 
            \end{tabular}};

        \node (q0) [state,minimum size=0.5cm, initial above, initial text = {}] {\scriptsize$q_0$};
        \node (q1) [state, right = of q0, minimum size=0.5cm] {\scriptsize$q_1$};
        \node (q2) [state, below = of q1,minimum size=0.5cm] {\scriptsize$q_2$};
        \node (q3) [state, below = of q0,minimum size=0.5cm] {\scriptsize$q_3$};
 
        \path [-stealth]
            (q0) edge  [font=\scriptsize]node {$0\mid \text{S}$} (q1) 
            (q0) edge [loop left,font=\scriptsize]  node {$20\mid \text{B}$}()
            (q1) edge [loop right,font=\scriptsize] node {$20\mid \text{B}$}()
            (q1) edge  [bend left,font=\scriptsize ]node {$0\mid \text{S}$} (q2)
            (q2) edge  [bend left,font=\scriptsize ]node {$20\mid \text{P}$} (q1)
            (q2) edge [font=\scriptsize]  node {$0\mid \text{P}$}(q3)
            (q3) edge  [font=\scriptsize]node {$0,20\mid \text{P}$}   (q0)
        ;

        \node (q'0) [state, initial above, initial text = {}, minimum size=0.5cm] at (5.5,0) {\scriptsize $q_0$};
        \node (q'1) [state, node distance=1.8cm, right = of q'0, minimum size=0.5cm] {\scriptsize$q_1$};
        \node (q'2) [state, below = of q'1, minimum size=0.5cm] {\scriptsize$q_2$};
        \node (q'3) [state, below = of q'0, minimum size=0.5cm] {\scriptsize$q_3$};
 
        \path [-stealth]
            (q'0) edge  [font=\scriptsize]node {$[0,20)\mid \text{S}$} (q'1) 
            (q'0) edge [loop left,font=\scriptsize]  node {$[20,\infty)\mid \text{B}$}()
            (q'1) edge [loop right,font=\scriptsize] node {$[20,\infty)\mid \text{B}$}()
            (q'1) edge  [bend left,font=\scriptsize]node {$[0,20)\mid \text{S}$} (q'2)
            (q'2) edge  [bend left,font=\scriptsize ]node {$[20,\infty)\mid \text{P}$} (q'1)
            (q'2) edge [font=\scriptsize]  node {$[0,20)\mid \text{P}$}(q'3)
            (q'3) edge  [font=\scriptsize]node {$[0,\infty)\mid \text{P}$}   (q'0);

        \path [->, ultra thick, color=blue, bend right]
            (-2.1, -1.7) edge node[below=0.2,pos=0.1,align=center, color=black] {(1) Infer $\hypothesisMe$ over $\Sigma_E$} (-0.45, -1.7)
            (1.8, -1.7) edge node[below=0.3,pos=0.95,align=center, color=black] {(2) Build $\hypothesisM$ by predicate construction} (5.0, -1.7)
        ;
    \end{tikzpicture}
    \caption{Outline of $\LambdaM$. A Mealy automaton $\hypothesisMe$ is constructed from the observation table~(1), and the input character on each transition of $\hypothesisMe$ is generalized into a predicate, such as an interval~(2).}%
    \label{figure:outline_algorithm}
\end{figure}
\Cref{figure:outline_algorithm} outlines $\LambdaM$.
First, we use a variant of Niese's algorithm to infer a Mealy automaton $\hypothesisMe$ over $\Sigma_E$.
Then, following the idea of $\Lambda^*$, we generalize $\hypothesisMe$ into an s-MA $\hypothesisM$, which is a hypothesis automaton in $\LambdaM$.
The refinement of $\Sigma_E$ occurs through equivalence queries.

We prove the termination of the $\LambdaM$ algorithm and provide a detailed analysis of its query complexity.
By explicitly formulating the essential characters $\SigmaEf$, we give a detailed analysis of the number of explored inputs and improve the complexity analysis of the query compared to~\cite{DBLP:conf/tacas/DrewsD17}.
In \cite{DBLP:conf/tacas/DrewsD17}, the upper bound of the number of equivalence queries is quadratic in the number of states, whereas we provide a linear bound.
We also give a theoretical \emph{lower} bound of the number of equivalence queries. 
To the best of our knowledge, this is the first lower bound in the context of active learning of symbolic automata.
Our lower bound supports the tightness of our complexity analysis.
We implemented the $\LambdaM$ algorithm and conducted experiments to evaluate its efficiency.
Our experiments suggest that the practical query complexity is significantly lower than the theoretical upper bound because the discovery of states and essential characters usually occurs in parallel.

Overall, our contributions are summarized as follows. %
\begin{itemize}
      \item We propose the $\LambdaM$ algorithm for active learning of symbolic Mealy automata. The central idea is to explicitly maintain a set of essential inputs.
      \item We give upper and lower bounds of the number of queries required in $\LambdaM$.
      \item Our experiments show that, despite the theoretical lower bound for a specific\LongVersion{ corner} case, $\LambdaM$ typically requires much fewer queries than the theoretical bounds.
\end{itemize}

\ShortVersion{\paragraph{Related Work}}\LongVersion{\subsection{Related Work}\label{section:related_work}}
\begin{table}[tbp]
 \caption{Active learning algorithms for automata with an infinite input alphabet.}%
 \label{table:related_algorithms}
 \centering
 \scriptsize
 \begin{tabular}{c c c c c}
  \toprule
  & Outputs & Learned Predicates & \begin{tabular}{c}
       Learnable  \\
       Boolean Algebra 
  \end{tabular} & Minimal? \\
  \midrule
  \textbf{Ours} & \comparisonGood{} \begin{tabular}{c}
Multiple Constants \\
(Mealy-style)
\end{tabular}   & \comparisonGood{} Concrete & \comparisonGood{} General & \comparisonGood{} Yes \\
  \cite{DBLP:conf/tacas/DrewsD17,DBLP:conf/cav/ArgyrosD18}& \comparisonBad{} \begin{tabular}{c}
Only Boolean \\
(Moore-style)
\end{tabular} & \comparisonGood{} Concrete & \comparisonGood{} General & \comparisonGood{} Yes \\
  \cite{DBLP:conf/vmcai/HowarSM11} & \comparisonGood{} \begin{tabular}{c}
Multiple Constants \\
(Mealy-style)
\end{tabular} & \comparisonBad{} Abstract & \comparisonBad{} Fixed & \comparisonGood{} Yes\\
  \cite{DBLP:conf/icissp/LathouwersEH20} & \comparisonGood{} \begin{tabular}{c}
Constants and Identity Function\\
(Mealy-style)
\end{tabular} & \comparisonGood{} Concrete & \comparisonBad{} Fixed & \comparisonBad{} No \\

\cite{DBLP:conf/pts/AartsJU10} & \comparisonGood{} \begin{tabular}{c}
Any Handcrafted Functions\\
(Mealy-style)
\end{tabular} & \comparisonBad{} \begin{tabular}{c}
       N/A\\
      (No Learning)
\end{tabular}& \comparisonBad{}  \begin{tabular}{c}
       N/A\\
      (No Learning)
\end{tabular}& \comparisonGood{} Yes \\
\cite{DBLP:journals/corr/MensM15} & \comparisonBad{} \begin{tabular}{c}
Only Boolean\\
(Moore-style)
\end{tabular} & \comparisonGood{} Concrete & \comparisonBad{} Fixed & \comparisonGood{} Yes \\
  \bottomrule
 \end{tabular}
\end{table}
Several algorithms have been proposed for actively learning finite-state automata with an infinite input alphabet.
\Cref{table:related_algorithms} summarizes some of these algorithms.
In~\cite{DBLP:conf/tacas/DrewsD17}, Drews and D'Antoni presented the $\Lambda^*$ algorithm to learn symbolic finite automata.
$\Lambda^*$ is a combination of L*~\cite{DBLP:journals/iandc/Angluin87} and predicate learning with a \emph{partitioning function}. %
Our algorithm is an extension of $\Lambda^*$~\cite{DBLP:conf/tacas/DrewsD17} for Mealy-style multiple outputs. 
For the complexity analysis of $\Lambda^*$, they introduced the notion of learnability of a Boolean algebra and classified the Boolean algebras with respect to the learning complexity.
Compared with their query complexity analysis, ours is finer thanks to the theoretical gadgets (\eg{} $\Sigma_{E}^{\mathit{final}}$) we introduce.

In~\cite{DBLP:conf/cav/ArgyrosD18}, Argyros and D'Antoni presented another algorithm, called MAT*, to learn symbolic finite automata.
MAT* is a combination of TTT~\cite{DBLP:conf/rv/IsbernerHS14} and \emph{active} predicate learning, %
unlike the \emph{passive} predicate learning in $\Lambda^*$ and our algorithm $\LambdaM$.
Extending\LongVersion{ our algorithm} $\LambdaM$ to active predicate learning is a viable future direction.

In~\cite{DBLP:conf/vmcai/HowarSM11}, Howar et al.\ proposed alphabet abstraction to extend existing automata learning algorithms to handle infinite input alphabets.
However, the learned predicates are defined in terms of the target automaton's evaluation, making them abstract.
Additionally, their algorithm can only handle abstract predicates over a Boolean algebra defined by the target automaton. 

In~\cite{DBLP:conf/icissp/LathouwersEH20}, Lathouwers et al. proposed a learning algorithm for symbolic finite transducers.
Their algorithm supports both constant outputs, like those in $\LambdaM$, and identity functions as outputs. %
However, their algorithm is limited to a specific Boolean algebra known as \emph{equality algebra}.
Therefore, it cannot handle, for example, the benchmarks in \cref{section:experiments:practical}.
Moreover, their learning algorithm uses \emph{Moore}-style criteria to identify the state space, while the learned transducer has \emph{Mealy}-style outputs.
As a result, it essentially learns an automaton with Moore-style outputs, and
the learned transducer may not be minimal. %
In \cite{DBLP:conf/pts/AartsJU10}, Aarts et al. proposed a learning algorithm for symbolic Mealy machines, finite automata with abstraction to handle infinite or large alphabets. 
Their method relies on externally provided abstractions, which presumably must be handcrafted.
In contrast, our method automatically learns abstractions through queries to an oracle. 
\LongVersion{
}
In~\cite{DBLP:journals/corr/MensM15}, a learning algorithm for symbolic finite automata over ordered domains was proposed.
In contrast, our algorithm and its termination proof apply to more general Boolean algebras, \eg{} the equality algebra\LongVersion{ and the propositional algebra}.

\LongVersion{In~\cite{DBLP:conf/popl/MoermanS0KS17}, a learning algorithm for \emph{nominal automata}~\cite{DBLP:journals/corr/BojanczykKL14}, another class of automata for infinite alphabets, was proposed.
A comparison of the expressive power of symbolic and nominal automata is a future work.}
\section{Preliminaries}\label{section:preliminaries}
Let $\N$ be the set of naturals.
For a set $X$, we let $\powerset{X}$ be the powerset of $X$ and $\finpowerset{X}$ be the finite powerset of $X$.
For a set $X$,
we denote its size by $|X|$.
For a set $X$ of \emph{characters}, we let $X^*$ be the set of \emph{words} over $X$, \ie{} $X^* = \bigcup_{n = 0}^{\infty} X^n$, where $X^n$ is the set of length-$n$ sequences over $X$ ($n \in \N$). 
The word of length $0$ (\ie{} the empty word) is denoted by $\emptyword$.
We write $X^+$ for $X^* \setminus \{ \epsilon \}$.
For words $w ,w' \in X^*$, we denote the concatenation of $w$ and $w'$ by $w \cdot w'$.
For a set $X$ and for a finite sequence $L = l_1 l_2 \dots l_n \in ({\powerset{X}})^*$ of subsets of $X$, we write $\flatten{L}$ for $\bigcup_{i = 1}^{n} l_{i}$.
For sets $X, Y$, we write $X \setdiff Y$ for their symmetric difference, \ie{} $X \setdiff Y = (X \setminus Y) \cup (Y \setminus X)$.

\subsection{Symbolic Finite Automata and Symbolic Mealy Automata}

\begin{LongVersionBlock}
 \begin{definition}
 [DFA]
 A \emph{deterministic finite automaton (DFA)} is a 5-tuple $\Autom = (\INPUT, \Loc, \initLoc, F,\delta)$, where $\INPUT$ is a finite set of input characters, $\Loc$ is a non-empty finite set of states, $\initLoc \in \Loc$ is the initial state, $F \subseteq \Loc$ is the set of accepting states, and $\delta \colon \Loc \times \INPUT \to \Loc$ is the transition function.
\end{definition}
\end{LongVersionBlock}

\LongVersion{\begin{definition}
 [Deterministic Mealy Automata]}
 A \emph{deterministic Mealy automaton} is a 6-tuple $\M = (\INPUT,\Loc,\initLoc,\OUTPUT,\trans,\outp)$, where $\INPUT$ is a finite set of \emph{input characters}, $\Loc$ is a non-empty finite set of \emph{states}, $\initLoc \in \Loc$ is the \emph{initial state}, $\OUTPUT$ is a finite set of \emph{output characters}, $\trans \colon \Loc \times \INPUT \to \Loc$ is the \emph{transition function}, and $\outp \colon \Loc \times \INPUT \to \OUTPUT$ is the \emph{output function}. %
\LongVersion{\end{definition}}%
\LongVersion{
}%
\LongVersion{By abuse of notation, we extend the transition function $\trans \colon \Loc \times \INPUT \to \Loc$ to $\trans \colon \Loc \times \INPUT^* \to \Loc$ by $\trans(q,\epsilon) = q$ and $\trans(q,w\cdot a) = \trans(\trans(q,w),a)$, where $w \in \INPUT^*$ and $a \in \INPUT$. We also extend the output function $\outp \colon \Loc \times \INPUT \to \OUTPUT$ to $\outp \colon \Loc \times \INPUT^+ \to \OUTPUT$
by $\outp(q,a\cdot w) = \outp(\trans(q,a),w)$, where $a \in \INPUT$ and $w \in \INPUT^+$.}%
\ShortVersion{By abuse of notation, we extend $\trans$ and $\outp$ for words in the standard manner.}
We also define the function $\M \colon \INPUT^+ \to \OUTPUT$ by $\M(w) := \outp(\initLoc,w)$.
In \emph{symbolic} finite automata~\cite{DBLP:conf/popl/DAntoniV14}, transitions are labeled by predicates over an \emph{effective Boolean algebra}\footnote{We use the definition of Boolean algebras used in the context of symbolic automata, \eg{}~\cite{DBLP:conf/tacas/DrewsD17}, for clear comparison with existing results. In a lattice-theoretic definition, this is often formulated as a Boolean algebra $\Predicates$ together with a homomorphism $\sem{\_}$}.
\begin{definition}
    [Effective Boolean Algebra]
    A \emph{Boolean algebra} $\Algebra$ is a tuple $\Algebra = (\domain, \Predicates,\bot,\top,\lor,\land,\neg,\sem{\_})$, where $\domain$ is the domain, $\Predicates$ is a set of predicates closed under Boolean connectives, $\bot \in \Predicates$ is the bottom predicate, $\top \in \Predicates$ is the top predicate, and $\sem{\_} \colon \Predicates \to \powerset{\domain}$ is a denotation function. It satisfies the following conditions:
    \begin{ienumeration}
        \item $\sem{\bot} = \emptyset$;
        \item $\sem{\top} = \domain$; and
        \item For all $\varphi, \psi \in \Predicates$, $\sem{\varphi \lor \psi} = \sem{\varphi} \cup \sem{\psi}$, $\sem{\varphi \land \psi} = \sem{\varphi} \cap \sem{\psi}$, $\sem{\neg \varphi} = \domain \setminus \sem{\varphi}$.
    \end{ienumeration}
    A Boolean algebra $\Algebra$ is \emph{effective} if the denotation function is computable.
\end{definition}

\begin{example}
    [Equality algebra]%
    \label{example:equality-algebra}
    For any set $\domain$, the \emph{equality algebra} over $\domain$ is defined as follows. 
    The set of predicates is the Boolean closure of $\{\lambda x. x = i \mid i \in \domain\}$.
    The denotation function %
    is such that $\sem{\lambda x. x = i} = \{i\}$.
\end{example}
\begin{example}
    [Interval algebra]\label{example:interval-algebra}
    The \emph{interval algebra} over naturals is defined as follows. The domain is $\domain = \mathbb{N}$. 
    The set of predicates is the Boolean closure of $\{[i,j)\mid i,j \in \domain\} \cup \{[i, +\infty) \mid i \in \domain\}$.
    The denotation function %
    is such that $\sem{[i,j)} = \{k \in \domain \mid i \leq k < j\} $ and $\sem{[i,+\infty)} = \{k \in \domain \mid i \leq k \}$.
\end{example}
\begin{definition}
 [s-FA]
 A \emph{symbolic finite automaton (s-FA)} $\Autom$ is a tuple $\Autom = (\Algebra, \Loc, \initLoc, F, \transition)$, where
 $\Algebra$ is an effective Boolean algebra,
 $\Loc$ is a non-empty finite set of states,
 $\initLoc \in \Loc$ is the initial state,
 $F \subseteq \Loc$ is the set of accepting states, and
 $\transition \subseteq \Loc \times \Predicates \times \Loc $ is the transition relation.
\end{definition}

For $a \in \domain$ and $\loc_1, \loc_2 \in \Loc$,
we write $\loc_1 \xrightarrow{a} \loc_2$ if there is $\guard$ such that $a \in \sem{\guard}$ and $(\loc_1, \guard, \loc_2) \in \transition$.
A word $w = a_1 a_2 \ldots a_k$ is \emph{accepted} by an s-FA $\Autom$ if there are $q_1, q_2, \dots, q_k$ such that $\loc_0 =\initLoc$, $\loc_{i-1} \xrightarrow{a_i} \loc_i$ for each $i \in \{1,2\dots,k\}$, and $\loc_k \in F$.
The language $\Lg(\Autom)$ of an s-FA $\Autom$ is the set of words accepted by $\Autom$. %
An s-FA $\Autom$ is \emph{deterministic}
if $\loc_1 \neq \loc_2$ implies $\sem{\guard_{1} \land \guard_{2}} = \emptyset$ for any $(\loc, \guard_1, \loc_1), (\loc, \guard_2, \loc_2) \in \transition$.
An s-FA $\Autom$ is \emph{complete} if  $\sem{\bigvee_{(\loc,\guard_{i},\loc_{i}) \in \transition} \guard_{i} }= \domain$ for all $\loc \in \Loc$.
\LongVersion{In this paper, we consider only deterministic and complete s-FA.  

}
A list\LongVersion{\footnote{We define partitions as lists of predicates rather than sets of predicates for the sake of s-FA learning.}} $\partition = \varphi_{1} \dots \varphi_{k} \in \Predicates^{*}$ of predicates is called a \emph{partition} if for any $i, j \in \{1,2,\dots,k \}$, $i \neq j$ implies $\sem{\guard_{i} \land \guard_{j}} = \emptyset$ and $\sem{\bigvee_{i = 1}^k \guard_{i}} = \domain$. We write $\Partitions$ for the set of partitions. For any deterministic and complete s-FA $\Autom$ and for any state $\loc$ of $\Autom$, the list of the predicates associated with the transitions leaving from $\loc$ constitutes a partition.
In this paper, we study \emph{symbolic Mealy automata (s-MAs)}, which is a special case of \emph{symbolic finite transducers}~\cite{DBLP:conf/popl/VeanesHLMB12}.
Intuitively, an s-MA is a Mealy automaton over a possibly infinite input alphabet, where the transitions are labeled by predicates over a Boolean algebra.
\begin{definition}
[s-MA]
A \emph{symbolic Mealy automaton (s-MA)} $\M$ is a tuple $\M = (\Algebra,\Loc,\initLoc,\OUTPUT,\transition)$, where
 $\Algebra$ is an effective Boolean algebra,
 $\Loc$ is a non-empty finite set of states,
 $\initLoc \in \Loc$ is the initial state,
 $\OUTPUT$ is a finite set of output characters, and 
 $\transition \subseteq \Loc \times \Predicates \times \Loc \times \OUTPUT$ is the transition-output relation.
 We write $q_1 \xrightarrow{a\mid o} q_2$ for %
 $(\loc_1,\guard,\loc_2,o) \in \transition$ and $a \in \sem{\guard}$.
\end{definition}
An s-MA $\M$ is \emph{deterministic} if $q_1\neq q_2$ or $o_1 \neq o_2$ implies $\llbracket \varphi_{1}\land \varphi_{2} \rrbracket = \emptyset$ for any $(q,\varphi_1,q_1,o_1)$, $(q,\varphi_2,q_2,o_2) \in \delta$.  An s-MA $\M$ is \emph{complete} if $\sem{\bigvee_{(q,\varphi_{i},q_{i},o_{i}) \in \delta} \varphi_{i}} = \domain$ for any $q \in \Loc$.
If $\M$ is deterministic and complete, we define the \emph{transition function} $\trans \colon \Loc \times \domain \to \Loc$ and the \emph{output function} $\outp \colon \Loc \times \domain \to \OUTPUT$ of $\M$ as follows: $\trans(q,a) = q'$ and $\outp(q,a) = o$ if there exists $\varphi$ such that $(q, \varphi, q', o) \in \delta$ and $a \in \llbracket \varphi \rrbracket$\LongVersion{; notice that $\trans$ and $\outp$ are well-defined since $\M$ is deterministic and complete}.
We extend $\trans$ and $\outp$ to words $w \in \domain^+$ as follows: $\trans \colon \Loc \times \domain^* \to \Loc$ is defined by $\trans(q,\epsilon) = q$ and $\trans(q,wa) = \trans(\trans(q,w),a)$; %
$\outp \colon \Loc \times \domain^+ \to \OUTPUT$ is defined by  $\outp(q,aw) = \outp(\trans(q,a),w)$. %
For any word $w \in \domain^+$, we write
$\M(w)$ for $\outp(\initLoc,w)$. 
In this paper, we consider only deterministic and complete s-MAs.

\subsection{The $\Lambda^*$ Algorithm for Learning s-FAs}\label{section:lambda_star}

$\Lambda^*$~\cite{DBLP:conf/tacas/DrewsD17} is an algorithm for active learning of s-FAs.
Given an effective Boolean algebra $\Algebra$, $\Lambda^*$ %
learns an s-FA recognizing the target language $\targetLg$ using two kinds of queries to an oracle: \emph{membership} and \emph{equivalence} queries.
In a membership query, the learner submits a word $\word \in \domain^*$ to the oracle; the oracle then answers whether $w \in \targetLg$ or not.
In an equivalence query, the learner submits a hypothesis s-FA $\hypothesis$ to the oracle;
the oracle returns either
``\KTrue'' indicating $\Lg(\hypothesis) = \targetLg$ or a counterexample $\cex \in \domain^*$ satisfying $\cex \in \Lg(\hypothesis) \setdiff \targetLg$.
To construct a hypothesis s-FA, $\Lambda^*$ first learns an \emph{evidence automaton} $\Autom^e$, whose transitions are labeled with some elements of $a \in \domain$.
$\Lambda^*$ uses a variant of L* to learn an evidence automaton\LongVersion{, possibly with partial transition functions}.
Then, $\Lambda^*$ applies a \emph{partitioning function} to convert the\LongVersion{ concrete} characters appearing in\LongVersion{ the evidence automaton} $\Autom^e$ into %
predicates. %
 
$\Lambda^*$ uses a data structure called \emph{observation tables} to learn an evidence automaton. An observation table consists of a finite set of prefixes, a finite set of suffixes, and a 2-D array that keeps information on the acceptance of the words obtained by concatenating each prefix and each suffix.
The acceptance is checked by asking membership queries. %

\begin{definition}
    [Observation Table for $\Lambda^*$]
An \emph{observation table} $T$ is a tuple $T = (\domain, S, R, E, f)$, where $\domain$ is the domain of the Boolean algebra $\Algebra$, $S,R,E \subseteq \domain^*$\LongVersion{ are finite subsets of words over $\domain$}, and $f \colon (S\cup R) \times E \to \{0,1\}$ is a function where $f(w,e)= 1$ if $w \cdot e \in \targetLg$, and $f(w,e) = 0$ otherwise. In addition, the following conditions hold:
\begin{ienumeration}
    \item $S\cup R$ is prefix-closed;
    \item $\emptyword \in S$;
    \item $E$ is suffix-closed (thus, we have $\emptyword \in E$);
    \item For all $s\in S$, there exists a character $a \in \domain$ such that $s\cdot a \in S\cup R$.

\end{ienumeration}
\end{definition} 

For $w \in S\cup R$, we let $\row(w) = \{e \in E \mid f(w,e) = 1\}$.
$\Lambda^*$ increases $S$, $R$, and $E$ until the observation table becomes \emph{cohesive}, which is the condition necessary for constructing an evidence automaton\footnote{Evidence-closedness is not necessary for constructing a well-defined hypothesis but is necessary for \cref{theo:symbcomp}, which can also reduce the number of equivalence queries.}.

\begin{ShortVersionBlock}
    \begin{definition}
    [Cohesiveness]
    Let $T = (\domain, S, R ,E, f)$ be an observation table.
    $T$ is \emph{closed} if for any $r \in R$, there exists $s \in S$ satisfying $\row(s) = \row(r)$.
    $T$ is \emph{consistent} if for any $w_1,w_2 \in S\cup R$ satisfying $\row(w_1) = \row(w_2)$ and for any $a \in \domain$, $w_1 \cdot a,w_2 \cdot a \in  S\cup R$ implies $\row(w_1\cdot a) = \row(w_2\cdot a)$.
    $T$ is \emph{evidence-closed} if for any $s \in S$ and $e \in E$, $s\cdot e \in S\cup R$ holds.
    $T$ is \emph{cohesive} if it is closed, consistent and evidence-closed.
\end{definition}
\end{ShortVersionBlock}
\begin{LongVersionBlock}
\begin{definition}
    [Cohesiveness]
    An observation table $T = (\domain, S, R, E, f)$ is \emph{cohesive} if it satisfies the following conditions.
    \begin{description}
        \item[Closedness] For all $r \in R$ there exists $s \in S$ such that $\row(s) = \row(r)$.
        \item[Consistency] For all $w_1,w_2 \in S\cup R$ such that $\row(w_1) = \row(w_2)$, for all $a \in \domain$ such that $w_1 \cdot a,w_2 \cdot a \in  S\cup R$, $\row(w_1\cdot a) = \row(w_2\cdot a)$.
        \item[Evidence-closedness] For all $s \in S$ and $e \in E$, $s\cdot e \in S\cup R$.
    \end{description}
\end{definition}
\end{LongVersionBlock}

From a cohesive observation table, one can construct a (possibly incomplete) DFA called an \emph{evidence automaton}. %
\LongVersion{\begin{definition}
    [Evidence Automata] \label{EvidenceAutom}
     For a cohesive observation table $T = (\domain, S, R ,E, f)$,
     the \emph{evidence automaton}\LongVersion{ $\Autom^e$ defined from $T$} is $\Autom^e = (\domain,\Loc,\initLoc,F,\delta^e)$, where
\begin{myitemize}
    \item $\Loc = \{\row(s) \mid s \in S \}$,
    \item $\initLoc = \row(\epsilon)$,
    \item $F = \{ \row(s) \mid s \in S$ and $ f(s,\epsilon) = 1\}$, and
    \item $\delta^e(\row(w),a) = \row(w\cdot a)$ for $w \in S$ and $a \in \domain$ satisfying $w \cdot a \in S \cup R$.
\end{myitemize}
\end{definition}}%
\LongVersion{

}%
\LongVersion{In $\Lambda^*$}\ShortVersion{Then}, a \emph{partitioning function} $\partitioning$ is used
to construct an s-FA from an evidence automaton. %
Given a list of pairwise disjoint sets of characters, $\partitioning$ returns a partition consistent with the list.
\begin{definition}
[Partitioning Function]\label{partition}
For an effective Boolean algebra $\Algebra$, a \emph{partitioning function} over $\Algebra$ is a function $P \colon (\finpowerset{\domain})^{*} \to \Partitions$ that satisfies the following:
Let $L_{\domain} = l_{1}\ldots l_{k}\in (\finpowerset{\domain})^*$ satisfying $l_i \cap l_j = \emptyset$ for any $i,j \in \{1,\dots,k\}$ with $i \neq j$;
We have $|L_{\domain}| = |P(L_{\domain})|$;
For $\varphi_{1} \ldots \varphi_{k} = P(L_{\domain})$,
we have
\begin{ienumeration}
\item $a \in \sem{\varphi_{i}}$ for any $i \in \{1,2,\dots,k \}$ and $a\in l_{i}$ and                
\item\label{item:partitioning-stability} for any $L'_{\domain} = l'_{1}\ldots l'_{k}$, if $l_i \subseteq l'_i \subseteq \sem{\varphi_i}$ for all $i \in \{1,2,\dots,k\}$, then $\partitioning(L_{\domain}) = \partitioning(L'_{\domain})$.
\end{ienumeration}
\end{definition}
\LongVersion{Condition~\ref{item:partitioning-stability}) in \cref{partition} is to guarantee the stability of the learning procedure.}

For example, the function outlined in \cref{algorithm:partitioning_interval_algebra} is a partitioning function for the interval algebra over naturals introduced in \cref{example:interval-algebra}.
For $l_1 = \{2, 7, 10\}$ and $l_2 = \{5\}$, \cref{algorithm:partitioning_interval_algebra} returns $\varphi_1 = [0, 5) \cup [7, \infty)$ and $\varphi_2 = [5, 7)$.

\begin{algorithm}[tbp]
    \caption{A partitioning function for the interval algebra over $\N$}\label{algorithm:partitioning_interval_algebra}
 \LongVersion{\small}\ShortVersion{\footnotesize}
    \KwIn{A list $l_{1}\ldots l_{k}$ of pairwise disjoint finite sets $l_i \subseteq \N$}
    $\varphi_{1}, \varphi_2, \dots, \varphi_{k} \gets \bot, \bot, \dots, \bot$;\,
    $b \gets +\infty$\;
    \While{$\exists i \in \{1,2,\dots,k\}.\, l_i \neq \emptyset$} {
        \KwPop{} the maximum $a \in l_i$ in $l_{1}\ldots l_{k}$ \KwFrom{} $l_i$\;
        $\varphi_i \gets \varphi_i \lor [a, b)$;\,
        $b \gets a$;\,
        $i' \gets i$\;
    }
    $\varphi_{i'} \gets \varphi_{i'} \lor [0, a)$\;
    \KwReturn{} $\varphi_{1} \varphi_2 \dots \varphi_{k}$
\end{algorithm}

$\Lambda^*$ constructs a s-FA from an evidence automaton $\Autom^e = (\domain,\Loc,\initLoc,F,$ $\delta^e)$ by the following algorithm, which we call \texttt{sepPred}\label{seppred}.
Let $P$ be a partitioning function over $\Algebra$.
For each $q \in \Loc$, let $l_{q}\colon \Loc \to \finpowerset{\domain}$ be\LongVersion{ such that $l_q(q')$ is the set of the characters $a$ such that $q \xrightarrow{a} q'$ (\ie{}} $l_{q}(q') = \{ a \in \domain \mid q' = \delta^e(q, a)\}$\LongVersion{)}.
Then, we give the list of sets
$L_{q}=l_{q}(q_1) l_{q}(q_2) \dots l_{q}(q_n)$, where $\Loc = \{q_1,q_2, \ldots,q_n\}$, to the partitioning function $\partitioning$.
Let\LongVersion{ the list} $\guard_{q}^{q_1} \dots \guard_{q}^{q_n} = P(L_{q})$.
Finally, we add each $(q, \guard_{q}^{q_i}, q_i)$ to $\delta$ unless $\guard_{q}^{q_i} = \bot$. 
 \SetKwFunction{FMakeClosed}{makeClosed}
 \SetKwFunction{FMakeConsistent}{makeConsistent}
 \SetKwFunction{FMakeEvidenceClosed}{makeEvidenceClosed}
 \SetKwFunction{FGenerateHypothesis}{generateHypothesis}
 \SetKwFunction{FEqQuery}{\ensuremath{\texttt{eq}_{\targetLg}}}
\begin{LongVersionBlock}
 \begin{algorithm}[tb]
    \caption{\LongVersion{The }$\Lambda^*$\LongVersion{ algorithm}~\cite{DBLP:conf/tacas/DrewsD17} for active learning of s-FAs}\label{alg1}
    \footnotesize
    \KwIn{A Boolean algebra $\Algebra$ with domain $\domain$ and a partitioning function $\partitioning$}
    \KwOut{An s-FA $\hypothesis$ satisfying $\Lg(\hypothesis) = \targetLg$}
 $\Prefixes \gets \{\emptyword\}$;\, $\NextPrefixes \gets \{ a \}$ for some $a \in \domain$;\, $\Suffixes \gets \{\emptyword\}$\label{initialize}\;
 \textbf{Construct the initial observation table $T = (\domain,\Prefixes,\NextPrefixes,\Suffixes,f)$}\label{initaltable}\;
 \While{\KTrue}{
    \While{$\Table$ is not cohesive}{\label{cohesive}
        \lIf{$\Table$ is not closed}{
            $\Table \gets \FMakeClosed{\Table}$\label{close}
        }
        \lElseIf{$\Table$ is not consistent}{
            $\Table \gets \FMakeConsistent{\Table}$\label{make-consisitent}
        }
        \lElseIf{$\Table$ is not evidence-closed}{
            $\Table \gets \FMakeEvidenceClosed{\Table}$\label{evidence-close}
        }
    }\label{cohesiveend}
    $\hypothesis \gets \FGenerateHypothesis{\Table, \partitioning}$\label{construct-hypothesis}\tcp*[f]{Predicates are inferred by $\partitioning$}\;
    \Switch{$\FEqQuery(\hypothesis)$}{
        \lCase{\KTrue} {
            \KwReturn{} $\hypothesis$\label{return}
        }
        \lCase{$\cex \in \domain^*$} {
            \KwPush{} $\prefixes(\cex) \setminus (\Prefixes \cup \NextPrefixes)$ \KwTo{} $\NextPrefixes$\label{process-counterexample}
        }
    }
 }
 \end{algorithm}
 \cref{alg1} outlines $\Lambda^*$. 
 It first initializes $S$ to $\lbrace \epsilon \rbrace$, $R$ to $\lbrace a \rbrace $, and $E$ to $\lbrace \epsilon \rbrace$, where $a$ is an arbitrary character in $\domain$ (\cref{initialize}), and constructs the initial observation table $T = (\domain,S,R,E,f)$ (\cref{initaltable}). 
 \reviewer{3}{* Algorithm 2: There is no explanation at all what the make(-) operations do.}
 \mw{I do not think we have enough space to write more than this.}
 Then, the operations $\FMakeClosed$, $\FMakeConsistent$, and $\FMakeEvidenceClosed$ are applied repeatedly
 until the observation table is cohesive (\crefrange{cohesive}{cohesiveend}).
 Once we obtain a cohesive observation table, 
 the learner constructs
 an s-FA $\hypothesis$ %
 from the observation table %
 and poses an equivalence query with it %
 (\cref{construct-hypothesis}). 
 If the oracle returns ``\KTrue'',  $\Lambda^*$ returns $\hypothesis$ and terminates (\cref{return}).
Otherwise, the oracle returns a counterexample $\cex \in \Lg(\hypothesis) \setdiff \targetLg$; then, $\Lambda^*$ adds all the prefixes of $\cex$ to $R$, except for those already in $S \cup R$ (\cref{process-counterexample}), and repeats this process. %
\end{LongVersionBlock}

\subsection{Learnability of Boolean Algebras}\label{learnability}

In~\cite{DBLP:conf/tacas/DrewsD17},
Drews and D'Antoni discussed the learnability of s-FA using
a mathematical notion called a \emph{generator}, which was used to capture the learnability of a Boolean algebra. 
They formulated the learnability of Boolean algebras with respect to the generator $g$ as \emph{$s_g$-learnability} and used it to discuss the termination and complexity of the entire algorithm.
A generator $g$ is a function that models one-step refinement of the sampled data $L = l_1\dots l_h$ to be given to a partitioning function to learn a partition.
Intuitively, $L' = g(L, \partitioning(L), \targetP)$ represents a part of the learning behavior of $\Lambda^*$, where $L$ corresponds to $L_q$ in \texttt{sepPred}:
the learner constructs the predicates in a hypothesis automaton using $\partitioning(L)$ and refines $L$ via queries to the oracle, which has the knowledge of the target partition $\targetP$.
The definition of generators captures this behavior: if a hypothesis partition $\partitioning(L)$ generated from $L$ is different from the target partition $\targetP$, the returned $L'$ refines $L$ with a witness of $\partitioning(L) \neq \targetP$.
\begin{definition}
 [Generator~\cite{DBLP:conf/tacas/DrewsD17}]\label{generator}
 For a Boolean algebra $\Algebra$ and a partitioning function $\partitioning$,
 a \emph{generator} is a function $g \colon {(\finpowerset{\domain})}^{*} \times \Partitions \times \Partitions \to (\finpowerset{\domain})^{*}$ \st{}, for any $L = l_1\dots l_h  \in {(\finpowerset{\domain})}^{*}$, $\targetP \in \Partitions$, and $L' = g(L, \partitioning(L), \targetP) = l_1\dots l_{h'} $:
    \begin{itemize}
    \item if $l_{1}, l_{2} \dots l_{h}$ are pairwise disjoint, then $l'_{1}, l'_{2} \dots l'_{h'}$ are also pairwise disjoint;
    \item %
    $\bigcup_{i \in \{1,2,\dots,h\}} l_i \subseteq \bigcup_{i' \in \{1,2,\dots,h'\}} l'_{i'}$;
    \item $L'$ is a refined partition with respect to $L$, \ie{} for any $a, a' \in \domain$ satisfying $a \in l_i$ and $a' \in l_j$ for some $i, j \in \{1,2,\dots,h\}$ satisfying $i \neq j$,
          there is no $i' \in \{1,2,\dots,h'\}$ satisfying $a \in l'_{i'}$ and $a' \in l'_{i'}$; and
    \item Exactly one of the following conditions holds:
          \begin{itemize}
           \item $\partitioning(L) \neq \targetP$, $\bigcup_{i \in \{1,2,\dots,h\}} l_i \subsetneq \bigcup_{i' \in \{1,2,\dots,h'\}} l'_{i'}$, and $h = h'$;
           \item $\partitioning(L) \neq \targetP$, $\bigcup_{i \in \{1,2,\dots,h\}} l_i \subseteq \bigcup_{i' \in \{1,2,\dots,h'\}} l'_{i'}$, and $h < h'$; or
           \item $\partitioning(L) = \targetP$, $\bigcup_{i \in \{1,2,\dots,h\}} l_i \subseteq \bigcup_{i' \in \{1,2,\dots,h'\}} l'_{i'}$, and $h = h'$.
          \end{itemize}
    \end{itemize}
\end{definition}

For a Boolean algebra $\Algebra$, its partitioning function $\partitioning$, and a partition $\targetP \in \Partitions$,
we inductively define $L^{\targetP}_{0}, L^{\targetP}_{1}, \dots$ as
\begin{myitemize}
 \item $L^{\targetP}_{0} = \emptyset$ and
 \item $L^{\targetP}_{i} = g(L^{\targetP}_{i-1}, \partitioning(L^{\targetP}_{i-1}), \targetP)$.
\end{myitemize}
A generator $g$ for $\Algebra$ and $\partitioning$ is \emph{finitely exhaustive} for $\targetP \in \Partitions$ if,
for the above sequence $L^{\targetP}_{0}, L^{\targetP}_{1}, \dots$, there is $i \in \N$ such that
$L^{\targetP}_{i} = L^{\targetP}_{j}$ for any $j \geq i$; we write $L^{\targetP}_{\infty}$ for such $L^{\targetP}_{i}$.
By the definition of generators, $\partitioning(L^{\targetP}_{\infty}) = \targetP$ holds.
Finally, we present \emph{$s_g$-learnability}, which defines the difficulty of learning partitions consisting of the predicates in the Boolean algebra $\Algebra$ using a generator $g$ and a partitioning function $P$. Intuitively, for $\targetP \in \Partitions$, $s_g$ is an upper bound of the number of samples required to learn $\targetP$ with $g$. %

\begin{definition}
 [$s_g$-learnability~\cite{DBLP:conf/tacas/DrewsD17}]
 Let $\Algebra$ be a Boolean algebra, $P$ be a partitioning function, and $g$ be a generator. 
 $\Algebra$ is \emph{$s_g$-learnable} with $\partitioning$ 
 if for any $\targetP \in \Partitions$, 
 $g$ is finitely exhaustive for $\targetP$. 
\end{definition}
\ks{What is the definition of $\flatten{L}$?}
\mw{This is the set obtained by flattening a list of sets. This is defined at the beginning of \cref{section:preliminaries}}

For instance, the equality algebra (\cref{example:equality-algebra}) for any finite alphabet and interval algebra (\cref{example:interval-algebra}) for integers are $s_g$-learnable for any generator $g$ for some partitioning functions.
Moreover, for any $s_g$ learnable Boolean algebras, their product is also $s_g$ learnable.
See~\cite{DBLP:conf/tacas/DrewsD17} for other examples.

\section{Active Learning of Symbolic Mealy Automata}

We present our algorithm, $\LambdaM$, for learning s-MAs. 
Our algorithm is based on $\Lambda^*$ with an extension to 
learn an output function that maps a pair of a state and an input character to an output character.

In the learning algorithm~\cite{DBLP:phd/de/Niese2003} for (non-symbolic) Mealy automata, an output function is learned %
by recording the output for \emph{all} the elements of the input alphabet for each state in an observation table.
However, since the input alphabet of an s-MA may be infinite, this technique does not work. 

To overcome this issue, we maintain a finite subset $\Sigma_E$ of the input alphabet 
to learn an output function and refine it as learning progresses.
Concretely, we introduce a new condition \emph{output-closedness} of an observation table and introduce a set $\Sigma_E$ to record the essential characters.

\subsection{Observation Table and Algorithm Description}

Given a Boolean algebra $\Algebra$, $\LambdaM$ learns an s-MA that returns the same output for all inputs as the target $\targetM$ using two kinds of queries to an oracle: 
In an \emph{output query}, the learner submits a word $w \in \domain^+$ to the oracle and obtains an output character $\targetM(w)$. In an \emph{equivalence query}, the learner submits a hypothesis s-MA $\hypothesisM$ to the oracle, and the oracle returns either ``\KTrue'' indicating that for all $w \in \domain^+$, we have $\hypothesisM(w) = \targetM(w)$, or there is a counterexample $\cex \in \domain^+$ such that $\hypothesisM(\cex) \neq \targetM(\cex)$.

$\LambdaM$ also uses an observation table. Unlike $\Lambda^*$, an observation table keeps information on an output character of target $\targetM$ instead of acceptance and non-acceptance. Moreover, the column index is changed from $E$ to $\Sigma_E \cup E$.

\begin{definition}
[Observation Table for $\LambdaM$]\label{mot}
An \emph{observation table} $T$ is a tuple $T = (\domain, S, R, \Sigma_E, E, f)$, where $\domain$ is the domain of the Boolean algebra, $S,R,E \subseteq \domain^*$\LongVersion{ are finite subsets of words}, $\Sigma_E \subseteq \domain$\LongVersion{ is a finite subset of characters}, and
$f \colon (S\cup R)\times (\Sigma_E \cup E) \to \OUTPUT$ is\LongVersion{ a function} such that $f(w, e)= \targetM(w\cdot e)$. In addition, the following conditions hold:
\begin{ienumeration}
    \item $S\cup R$ is prefix-closed;
    \item $\Sigma_E \cup E$ is suffix-closed;
    \item $\epsilon \in S$ and $\Sigma_E$ is non-empty. %
\end{ienumeration} 
\end{definition}

$\LambdaM$ updates the observation table and constructs an s-MA. For this, the table must satisfy certain properties. We call such a table \emph{cohesive}. 
For $w \in S\cup R$, %
we let $\row(w)\colon \Sigma_E \cup E \to \OUTPUT$ be $\row(w)(e) = f(w,e)$.

\begin{ShortVersionBlock}
    \begin{definition}
        [Cohesiveness]
        Let $T = (\domain, S, R, E,f)$ be an observation table.
        $T$ is \emph{closed} if for any $r \in R$, there is $s \in S$ such that $\row(s) = \row(r)$.
        $T$ is \emph{consistent} if for any $w_1,w_2 \in S\cup R$ satisfying $\row(w_1) = \row(w_2)$, for all $a \in \domain$, $w_1 \cdot a,w_2 \cdot a \in  S\cup R$ implies $\row(w_1\cdot a) = \row(w_2\cdot a)$.
        $T$ is \emph{evidence-closed} if for any $s \in S$ and $e \in \Sigma_E$, $s\cdot e \in S\cup R$ holds. 
        $T$ is \emph{output-closed} if for any $w \in S\cup R$ and for any $a \in \domain$, $w\cdot a \in S\cup R$ implies $a \in \Sigma_E$.
        $T$ is \emph{cohesive} if it is closed, consistent, evidence-closed, and output-closed.
    \end{definition}    
\end{ShortVersionBlock}
\begin{LongVersionBlock}
    \begin{definition}
        [Cohesiveness]
        An observation table $T = (\domain, S, R, E,f)$ is \emph{cohesive} if it satisfies the following conditions.
        \begin{description}
            \item[Closedness] For all $r \in R$ there exists $s \in S$ such that $\row(s) = \row(r)$.
            \item[Consistency] For all $w_1,w_2 \in S\cup R$ such that $\row(w_1) = \row(w_2)$, for all $a \in \domain$ such that $w_1 \cdot a,w_2 \cdot a \in  S\cup R$, $\row(w_1\cdot a) = \row(w_2\cdot a)$.
            \item [Evidence-Closedness] For all $s \in S$ and $e \in \Sigma_E$, $s\cdot e \in S\cup R$. 
            \item [Output-Closedness] For all $w \in S\cup R$, for all $a \in \domain$, if $w\cdot a \in S\cup R $ then $a \in \Sigma_E$.
        \end{description}
    \end{definition}    
\end{LongVersionBlock}
\todo{If the space allows, we want to briefly remark that the definition of evidence-closedness is updated.}
 Together with prefix-closedness of $S \cup R$,\LongVersion{ the new condition} output-closedness guarantees that all characters appearing in the observation table are in $\Sigma_E$. 
Unlike $\Lambda^*$, evidence-closedness focuses on $\Sigma_E$ rather than on the entire column indices\LongVersion{ $\Sigma_E \cup E$}.

From a cohesive observation table, we construct a deterministic Mealy automaton called \emph{evidence Mealy automaton}.
The input alphabet of an evidence Mealy automaton in $\LambdaM$ is \emph{not} $\domain$ but $\Sigma_E$.
Since the observation table is evidence-closed,
$\row(s\cdot a)$ and $f(s,a)$ are defined for all $s \in S$ and for all $a \in  \Sigma_E$, and thus,
an evidence Mealy automaton is \emph{complete}.

\begin{definition}
 [Evidence Mealy Automata]\label{construction}
 For a cohesive observation table $T = (\domain, S, R, \Sigma_E, E, f)$,
 the \emph{evidence Mealy automaton} $\hypothesisMe$ is $\hypothesisMe = (\Sigma_E,\Loc,\initLoc,\OUTPUT,\transe,\outpe)$, where
\begin{myitemize}
\item $\Loc = \{\row(s) \mid s \in S \}$,
\item $\initLoc = \row(\epsilon)$,
\item $\transe(\row(s),a) = \row(s\cdot a)$ for all $s \in S$, $a \in \Sigma_E$, and
\item $\outpe(\row(s),a) = f(s,a)$
for all $s \in S$, $a \in \Sigma_E$.
\end{myitemize}
\end{definition}

Let $\hypothesisMe = (\Sigma_E,\Loc,\initLoc,\OUTPUT,\transe,\outpe)$ be an evidence Mealy automaton and $\partitioning$ be a partitioning function. An s-MA $\hypothesisM = (\Algebra,\Loc,\initLoc,\OUTPUT,\delta)$ is defined by generating partitions using $\partitioning$. The following algorithm constructs the transition-output relation. 
For each $q \in \Loc$, let $l_{q}\colon \Loc \times \OUTPUT \to \finpowerset{\domain}$ be the function mapping the target state $q' \in \Loc$ and\LongVersion{ the output character} $o \in \OUTPUT$ to the set of characters to transit and output from $q$, \ie{} $l_{q}(q',o) = \{ a \in \domain \mid q' = \transe(q, a)$ and $ o= \outpe(q,a)\}$.
Then, the list of the set
$L_{q}=l_{q}(q_1,o_1)l_{q}(q_1,o_2)\ldots l_{q}(q_1,o_m) l_{q}(q_2,o_1) \dots l_{q}(q_n,o_m)$, where $\Loc = \{q_1,q_2, \ldots,q_n\}$ and $\OUTPUT = \{o_1,o_2,\ldots,o_m\}$, is given to the partitioning function.
For the list of predicates $\guard_{q,o_1}^{q_1} \dots \guard_{q,o_m}^{q_1} \guard_{q,o_1}^{q_2} \dots \guard_{q,o_m}^{q_n} = P(L_{q})$,
we add each $(q, \guard_{q,o_j}^{q_i}, q_i,o_j)$ to $\delta$ unless $\guard_{q,o_j}^{q_i} = \bot$.

Finally, we confirm that an s-MA constructed from a cohesive observation table has the minimum number of states.
These results are natural and similar to well-known properties in a classical setting.
However, they are not immediately obtained from standard results, \eg{} due to the subtleties related to the use of the partitioning function to construct an s-MA from an evidence Mealy automaton.

\newcommand{\symboliccompatibilityStatement}{ Let $T=(\domain, S, R, \Sigma_E, E, f)$ be a cohesive observation table, and $\partitioning$ be a partitioning function. The s-MA $\hypothesisM$ constructed from $T$ and $\partitioning$ is \emph{symbolic compatible} with $T$, \ie{} for all $s \in S\cup R$, for all $e \in \Sigma_E \cup E$, we have $\hypothesisM(s\cdot e) = f(s, e)$.
}
\begin{theorem}
    [Symbolic Compatibility]\label{theo:symbcomp}
\symboliccompatibilityStatement{}
\qed{}
\end{theorem}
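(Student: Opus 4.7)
The plan is to prove the claim by induction on $|e|$, after first establishing an auxiliary state-tracking lemma asserting that for every $s \in S \cup R$ the state reached by $\hypothesisM$ after reading $s$ from $\initLoc$ is $\row(s)$ (identified, via closedness, with $\row(\hat s)$ for some $\hat s \in S$ when $s \in R$). I would prove this lemma by an inner induction on $|s|$. The base $s = \epsilon$ is immediate because $\initLoc = \row(\epsilon)$. For the step, decompose $s = s'' \cdot a$: prefix-closedness of $S \cup R$ gives $s'' \in S \cup R$, and output-closedness forces $a \in \Sigma_E$. Using closedness, pick $\hat s'' \in S$ with $\row(\hat s'') = \row(s'')$; evidence-closedness then yields $\hat s'' \cdot a \in S \cup R$, and consistency gives $\row(s'' \cdot a) = \row(\hat s'' \cdot a)$. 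The defining property of the partitioning function---namely that $a \in \sem{\varphi_{q,o}^{q'}}$ whenever $a \in l_q(q', o)$---then forces $\hypothesisM$ to take $\row(\hat s'')$ on input $a$ to $\row(\hat s'' \cdot a)$, with output $f(\hat s'', a) = f(s'', a)$.

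With the state-tracking lemma in hand, the base case $e = \epsilon$ of the main induction (necessarily with $s$ non-empty) is immediate: decomposing $s = s'' \cdot a$ and invoking the lemma together with the partitioning function yields $\hypothesisM(s) = f(s'', a) = \targetM(s) = f(s, \epsilon)$. For the step, write $e = a \cdot e'$; suffix-closedness of $\Sigma_E \cup E$ gives $e' \in \Sigma_E \cup E$. Using closedness, pick $\hat s \in S$ with $\row(\hat s) = \row(s)$: the state-tracking lemma forces $\hypothesisM(s \cdot e) = \hypothesisM(\hat s \cdot e)$, and row equivalence forces $f(\hat s, e) = f(s, e)$ because $e \in \Sigma_E \cup E$ is a column of $T$. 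It therefore suffices to prove the claim for the reduced pair $(\hat s, e)$ with $\hat s \in S$. When $a \in \Sigma_E$, evidence-closedness gives $\hat s \cdot a \in S \cup R$, and the inductive hypothesis applied to $(\hat s \cdot a, e')$ concludes this case via $\hypothesisM(\hat s \cdot a \cdot e') = f(\hat s \cdot a, e') = \targetM(\hat s \cdot a \cdot e') = f(\hat s, a \cdot e')$.

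The hard part will be the case where the leading character $a$ of $e$ lies outside $\Sigma_E$: the transition of $\hypothesisM$ on $a$ is then dictated solely by which predicate of the partition at $\row(\hat s)$ contains $a$, rather than by any direct entry of the table. My plan is to resolve this by invoking the stability clause of \cref{partition} together with consistency, arguing that every such $a$ falls in the same predicate as some $a^\star \in l_{\row(\hat s)}(q', o)$, so that $\hypothesisM$ produces the same transition and output on $a$ as on $a^\star$ and the analysis reduces to the $\Sigma_E$ subcase above. The cleanest packaging is likely a single strengthened induction hypothesis, parameterised over arbitrary row-witnesses in $S$, so that the closedness substitution need not be redone at each level of the recursion.
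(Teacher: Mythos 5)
Your overall route---a state-tracking lemma proved by induction on $|s|$, followed by an induction on $|e|$---is essentially the paper's proof, which runs the same two inductions but factors them through the evidence Mealy automaton $\hypothesisMe$ over the alphabet $\Sigma_E$ (its \cref{lemma:reach} and \cref{theo:comp}) and then transfers the conclusion to $\hypothesisM$ in one line using the first clause of \cref{partition}: every $a \in l_q(q',o)$ satisfies $a \in \sem{\guard_{q,o}^{q'}}$, so $\hypothesisMe$ and $\hypothesisM$ agree on all of $\Sigma_E^+$. Up to that repackaging, your base case and your use of closedness, consistency, evidence-closedness and prefix-/output-closedness in the inductive steps match the paper's argument.

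The problem is your announced ``hard part.'' First, the case where the leading character $a$ of $e$ lies outside $\Sigma_E$ is vacuous: prefix-closedness together with output-closedness gives $S, R \subseteq \Sigma_E^*$, and the paper's preservation property gives $E \subseteq \Sigma_E^*$ as well (the only operation adding to $E$ is $\FMakeConsistent$, which prepends the last character of a word of $S \cup R$---hence a character of $\Sigma_E$---to an existing column). So every character of every $s \cdot e$ with $s \in S\cup R$ and $e \in \Sigma_E \cup E$ already lies in $\Sigma_E$; you should record this fact rather than case-split on it. Second, and more importantly, the resolution you sketch would not work if the case were real: the stability condition~\ref{item:partitioning-stability}) of \cref{partition} only says that enlarging each sample $l_i$ within $\sem{\varphi_i}$ leaves the output of the partitioning function unchanged; it provides no link between the predicate into which an unseen character $a \notin \Sigma_E$ happens to fall and the value $f(s, a\cdot e') = \targetM(s \cdot a \cdot e')$ recorded in the table. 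The hypothesis is in general \emph{wrong} on characters outside $\Sigma_E$---that is exactly what equivalence-query counterexamples detect and what $\FMakeOutputClosed$ repairs---so no consistency argument can force agreement there. Drop the phantom case, justify $E \subseteq \Sigma_E^*$, and the rest of your plan goes through.
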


\newcommand{\symbolicminimalityStatement}
{Let $T$ be a cohesive observation table, and $P$ be a partitioning function. The 
s-MA $\hypothesisM$ constructed from $T$ and $P$ has the minimum number of states among s-MAs symbolic compatible with $T$.}
\begin{theorem}
    [Minimality]\label{theo:symbolicminimality}
\symbolicminimalityStatement{}
\qed{}
\end{theorem}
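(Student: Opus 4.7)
The plan is to adapt the classical Myhill--Nerode-style minimality argument for L*/Mealy learning to the symbolic setting. The key observation is that by construction $\hypothesisM$ has exactly $n := |\{\row(s) \mid s \in S\}|$ states, so it suffices to show that any s-MA $\hypothesisMNext = (\Algebra, \Loc', \initLoc', \OUTPUT, \delta')$ that is symbolic compatible with $T$ must also have at least $n$ states. I would do this by exhibiting an injection from the set of row values (equivalently, the states of $\hypothesisM$) into $\Loc'$.

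More precisely, I would pick, for each row value $r \in \{\row(s) \mid s \in S\}$, a representative $s_r \in S$ with $\row(s_r) = r$, and consider the map $\phi \colon r \mapsto \delta'_t(\initLoc', s_r)$, where $\delta'_t$ denotes the extended transition function of $\hypothesisMNext$. The main step is to prove that $\phi$ is injective. Suppose towards a contradiction that $\row(s_1) \neq \row(s_2)$ but $\delta'_t(\initLoc', s_1) = \delta'_t(\initLoc', s_2) = q'$. Since $\row(s_1) \neq \row(s_2)$, there exists a column $e \in \Sigma_E \cup E$ with $f(s_1, e) \neq f(s_2, e)$. By symbolic compatibility with $T$, $\hypothesisMNext(s_1 \cdot e) = f(s_1, e)$ and $\hypothesisMNext(s_2 \cdot e) = f(s_2, e)$; but by the extension conventions for $\delta'_t$ and $\delta'_o$ we also have $\hypothesisMNext(s_i \cdot e) = \delta'_o(\delta'_t(\initLoc', s_i), e) = \delta'_o(q', e)$ for $i \in \{1,2\}$, yielding $f(s_1, e) = f(s_2, e)$, a contradiction. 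Hence $\phi$ is injective, so $|\Loc'| \geq n = |\Loc|$, establishing minimality.

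The main technical obstacle is handling the degenerate cases that arise because the target function $\targetM$ is defined only on $\domain^+$, not on $\domain^*$. In particular, when $s_r = \epsilon$ and $e = \epsilon$, the cell $f(\epsilon, \epsilon)$ is not meaningful, so the distinguishing suffix $e$ must be chosen from a set guaranteed to produce a nonempty word $s_i \cdot e$; this is ensured because $\Sigma_E$ is nonempty and $\Sigma_E \subseteq \domain$, so at least one column has a single-character suffix, and any disagreement between $\row(s_1)$ and $\row(s_2)$ can be witnessed at a column $e$ for which both $s_1 \cdot e$ and $s_2 \cdot e$ lie in $\domain^+$ (otherwise $s_1 = s_2 = \epsilon$ and the rows would coincide). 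A minor secondary subtlety is that the argument never appeals to how $\hypothesisMNext$ was constructed from a partitioning function---only to its symbolic compatibility with $T$ and to its being deterministic and complete, so that $\delta'_t$ and $\delta'_o$ are well-defined total functions---which keeps the argument independent of $P$ and of the specific predicate representation chosen.
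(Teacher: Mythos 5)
Your proof is correct and follows essentially the same route as the paper's: a pigeonhole/injectivity argument sending distinct row values to distinct states of the competing automaton, refuted via a distinguishing column $e$ with $f(s_1,e)\neq f(s_2,e)$, symbolic compatibility, and the decomposition $\M'(s\cdot e)=\delta'_o(\delta'_t(\initLoc',s),e)$ (which the paper isolates as a separate induction, \cref{symbolic-sufft}). The degenerate case you worry about never arises, since every column index in $\Sigma_E\cup E$ is a nonempty word ($\Sigma_E\subseteq\domain$ and $E$ only ever receives words of the form $a\cdot e$), so $s\cdot e\in\domain^+$ always; otherwise your treatment, including the observation that only compatibility and determinism/completeness of $\M'$ are used, matches the paper's.
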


\SetKwFunction{FMakeOutputClosed}{makeOutputClosed}

\cref{alg2} outlines $\LambdaM$. Given a Boolean algebra $\Algebra$ and a partitioning function $P$, it initializes $S$ to $\{ \epsilon \}$, $R$ to $\{ a \} $, $\Sigma_E$ to $\{ a \}$ and $E$ to $\emptyset$ where $a$ is an arbitrary character in $\domain$ (\cref{initializeM}) and constructs the initial observation table $T = (\domain,S,R,\Sigma_E,E,f)$ (\cref{initaltableM}). In contrast to the initialization in $\Lambda^*$, $\Sigma_E$ is initialized by $\{a\}$ for some $a \in \domain$, and $E$ is initialized by $\emptyset$. This is because $\LambdaM$ learns Mealy-style outputs.

Until the observation table becomes cohesive, the operations $\FMakeClosed$, $\FMakeConsistent$, $\FMakeEvidenceClosed$, and $\FMakeOutputClosed$ are repeatedly applied (\cref{cohesiveM}-\ref{cohesiveendM}).
Each of the operations works as follows. 
\begin{itemize}
    \item $\FMakeClosed$ picks $r \in R$ satisfying $\row(r) \neq \row(s)$ for any $s \in S$ and moves $r$ from $R$ to $S$.
    \item $\FMakeConsistent$ picks $w_1,w_2 \in S\cup R$, $a \in \domain$, and $e \in \Sigma_E \cup E$ satisfying $\row(w_1) = \row(w_2)$, $w_1 \cdot a,w_2 \cdot a \in S\cup R$, and $f(w_1 \cdot a ,e) \neq f(w_2 \cdot a, e)$. Then, it adds $a\cdot  e$ to $E$.
    \item $\FMakeEvidenceClosed$ picks $s \in S$ and $e \in \Sigma_E$ such that $s\cdot e \notin S\cup R$ and adds it to $R$.
    \item $\FMakeOutputClosed$ picks $w \in S\cup R$ and $a \in \domain$ satisfying $w \cdot a \in  S\cup R$ and $a \notin \Sigma_E$. Then it adds $a$ to $\Sigma_E$.
\end{itemize}

Once the observation table becomes cohesive,
$\LambdaM$ constructs an s-MA $\hypothesisM$ from the observation table by
\begin{ienumeration}
 \item building an evidence Mealy automaton and 
 \item applying \texttt{sepPred} to form an s-MA.
\end{ienumeration}
Then, $\LambdaM$ makes an equivalence query with $\hypothesisM$ (\cref{construct-hypothesisM}).
If the oracle returns ``\KTrue'', $\LambdaM$ returns $\hypothesisM$ and terminates (\cref{returnM}).
Otherwise, the oracle returns a counterexample $\cex \in \domain^+$, and $\LambdaM$ adds all prefixes of $\cex \in \domain^+$ to $R$, except those that are already in $S \cup R$ (\cref{process-counterexampleM}) and goes back to \cref{mainloopM}.
 %
%
%
%
%
%
%
%
%

%

%
\SetKwFunction{FMakeClosed}{makeClosed}
\SetKwFunction{FMakeConsistent}{makeConsistent}
\SetKwFunction{FMakeEvidenceClosed}{makeEvidenceClosed}
\SetKwFunction{FMakeOutputClosed}{makeOutputClosed}
\SetKwFunction{FGenerateHypothesis}{generateHypothesis}
\SetKwFunction{FEqQuery}{\ensuremath{\texttt{eq}_{\targetM}}}
\begin{algorithm}[tbp]
    \caption{\LongVersion{The }$\LambdaM$\LongVersion{ algorithm}for active learning of s-MAs}\label{alg2}
    \LongVersion{\small}\ShortVersion{\footnotesize}
    \KwIn{A Boolean algebra $\Algebra$ with domain $\domain$ and a partitioning function $\partitioning$}
    \KwOut{An s-MA $\hypothesisM$ satisfying for all $w \in \domain^+$, $\hypothesisM(w) = \targetM(w)$}
$\Prefixes \gets \{\emptyword\}$;\, $\NextPrefixes \gets \{ a \}$;\, $\Sigma_E \gets \{ a \}$\, for some $a \in \domain$;\, $\Suffixes \gets \emptyset$\label{initializeM}\;
\textbf{Construct the initial observation table $T = (\domain,\Prefixes,\NextPrefixes,\Sigma_E,\Suffixes,f)$}\label{initaltableM}\;
\While{true}{ \label{mainloopM}
    \While{$\Table$ is not cohesive}{\label{cohesiveM}
        \lIf{$\Table$ is not closed}{
            $\Table \gets \FMakeClosed{\Table}$\label{closeM}
        }
        \lElseIf{$\Table$ is not consistent}{
            $\Table \gets \FMakeConsistent{\Table}$\label{make-consisitentM}
        }
        \lElseIf{$\Table$ is not evidence-closed}{
            $\Table \gets \FMakeEvidenceClosed{\Table}$\label{evidence-closeM}
        }
        \lElseIf{$\Table$ is not output-closed}{
            $\Table \gets \FMakeOutputClosed{\Table}$\label{output-closeM}
        }
    }\label{cohesiveendM}
    $\hypothesisM \gets \FGenerateHypothesis{\Table, \partitioning}$\label{construct-hypothesisM}\tcp*[f]{Predicates are inferred by $\partitioning$}\;
    \Switch{$\FEqQuery(\hypothesisM)$}{
        \lCase{true} {
            \KwReturn{} $\hypothesisM$\label{returnM}
        }
        \lCase{$\cex \in \domain^+$} {
            \KwPush{} $\prefixes(\cex) \setminus (\Prefixes \cup \NextPrefixes)$ \KwTo{} $\NextPrefixes$\label{process-counterexampleM}
        }
    }
}
\end{algorithm}

\subsection{Termination and Complexity of $\LambdaM$}
We prove the termination of $\LambdaM$ using generators and $s_g$-learnability defined in \cref{learnability}.
For the proof, we formulate the behavior of the oracle using the generator to learn the partition on the transitions from each state.
Namely, we assume that by fixing an oracle, a generator for learning each partition is implicitly fixed, and the oracle returns counterexamples following the generator.
Thus, for each equivalence query, the oracle returns a counterexample only using the characters returned by the generators.
Formally, we assume that an equivalence oracle, together with the target Mealy automaton, induces a mapping from its state $\loc_i \in \Loc$ to a generator $g_i$, which determines the generator for each partition.

The following formulates the set of characters for learning the partitions in the target s-MA.
Notice that $\SigmaEf$ captures how the counterexamples returned by the oracle are helpful, as well as the difficulty of the learning itself: $\SigmaEf$ blows up if the counterexample returned by the oracle is not helpful or uses redundant characters when answering equivalence queries.

\begin{definition}
 [$l_g$, $\SigmaEf$]
 For a generator $g$ of $\Algebra$ and $\partitioning$ such that 
 $\Algebra$ is $s_g$-learnable with $\partitioning$,
 we let $l_g\colon \Partitions \to \finpowerset{\domain}$ be such that $l_g(\partition) = \flatten{L^{\partition}_{\infty}}$.
 For an s-MA $\targetM$ with states $\Loc$, we let $\SigmaEf =\bigcup_{q_i \in \Loc}{l_{g_i}({\partition}_i)}$, where ${\partition}_i$ is the partition at state $q_i$ of $\targetM$ and $g_i$ is the generator to learn ${\partition}_i$.
\end{definition}
We formulate the behavior of an oracle as follows.
Our formulation is permissive in the sense that we focus only on $L^{\partition}_{\infty}$ rather than on each $L^{\partition}_{j}$.

\begin{definition}
[Follow]
Let $\targetM$ be the target s-MA, with states $\Loc = \{q_1, \dots, q_n\}$.
An oracle \emph{follows} generators $g_1, \dots, g_n$ at $q_1, \dots, q_n$ if any counterexample $\cex$ returned in equivalence queries satisfies $\cex \in (\SigmaEf)^+$. 
\end{definition}

\LongVersion{The following is our main theorem, termination of the $\LambdaM$ algorithm.}

\newcommand{\terminationStatement}{%
   Let $\targetM = (\Algebra,\Loc,\initLoc,\OUTPUT,\delta) $ be the target s-MA, with $\Loc = \{q_1, q_2, \dots, q_n\}$.
   If the oracle follows $g_1, g_2, \dots, g_n$ at $q_1, q_2, \dots, q_n$ and for any $q_i \in \Loc$, $\Algebra$ is $s_{g_i}$-learnable with $\partitioning$, the $\LambdaM$ algorithm terminates with a finite number of output and equivalence queries.
}
\begin{theorem}
 [Termination]\label{theorem:termination}
 \terminationStatement{}
\end{theorem}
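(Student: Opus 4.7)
The plan is to prove termination by bounding the four observation-table components $\Prefixes$, $\NextPrefixes$, $\Sigma_E$, and $\Suffixes$ throughout the run of $\LambdaM$, and then to use a potential argument to bound the number of equivalence queries.

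First, I would bound $|\Prefixes|$ by $n = |\Loc|$. All rows of $\Prefixes$ are pairwise distinct (as $\FMakeClosed$ only promotes words in $\NextPrefixes$ with genuinely new rows), and by \cref{theo:symbcomp} and \cref{theo:symbolicminimality} the number of distinct rows in a cohesive table cannot exceed the number of states of any symbolic-compatible s-MA; in particular, it is bounded by $n$. Second, I would bound $|\Suffixes|$ by $n-1$, since each call to $\FMakeConsistent$ produces a suffix distinguishing two rows in $\Prefixes$ that were previously identified, strictly increasing the number of distinct rows in $\Prefixes$. Third, I would bound $|\Sigma_E|$ by $|\SigmaEf|+1$: the arbitrary character used in \cref{initializeM} accounts for the $+1$, while every subsequent addition by $\FMakeOutputClosed$ pulls a character out of $\NextPrefixes$, which was itself populated only by counterexample prefixes. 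Since the oracle follows $g_1,\dots,g_n$, every counterexample lies in $(\SigmaEf)^+$, so every such character belongs to $\SigmaEf$; and $\SigmaEf = \bigcup_i l_{g_i}(\partition_i)$ is finite because $\Algebra$ is $s_{g_i}$-learnable with $\partitioning$ for each state.

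With these three bounds in place, the cohesiveness-restoring loop terminates: each of its four operations strictly increases one of $|\Prefixes|$, $|\Suffixes|$, $|\Sigma_E|$, or $|\NextPrefixes|$, and $|\NextPrefixes|$ is bounded by $|\Prefixes|\cdot|\Sigma_E|$ plus the contribution from counterexample prefixes. The main obstacle is therefore bounding the number of equivalence queries, and this is where the $s_g$-learnability assumption is essential. My argument would be that each equivalence query returning a counterexample $\cex$ causes, after cohesiveness restoration, a strict increase in at least one of the three bounded potentials $|\Prefixes|$, $|\Suffixes|$, or $|\Sigma_E|$. For if none of them grew, the rebuilt table would determine the same hypothesis $\hypothesisM$ as before; but the rebuilt table now additionally records $f(w, e) = \targetM(w\cdot e)$ for all prefixes $w$ of $\cex$, and by \cref{theo:symbcomp} $\hypothesisM$ would agree with these entries—in particular $\hypothesisM(\cex) = \targetM(\cex)$, contradicting that $\cex$ is a counterexample.

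Since each potential is bounded (by $n$, $n-1$, and $|\SigmaEf|+1$ respectively), only finitely many equivalence queries can return counterexamples before one returns $\KTrue$. The number of output queries performed across all cohesiveness restorations is likewise finite, as each such query fills a new cell of a table of bounded dimensions. Hence $\LambdaM$ terminates with finitely many queries of both kinds. The key subtlety I anticipate in turning this sketch into a full proof is the interaction of $\FMakeOutputClosed$ with the predicate construction \texttt{sepPred}: one must verify carefully that the characters driving partition refinement at each target state $q_i$ form, in the order they are added, a subsequence compatible with the generator $g_i$ reaching $L^{\partition_i}_\infty$, so that the $s_{g_i}$-learnability bound transfers from the abstract generator setting to the behavior of the concrete oracle.
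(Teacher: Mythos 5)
Your proof is correct, and its outer skeleton---bound $|\Prefixes|$, $|\Suffixes|$, and $|\Sigma_E|$, then show that every counterexample forces one of them to grow---matches the paper's. The genuine divergence is in the progress step. The paper splits on whether $\cex \in \Sigma_E^+$: if not, $\FMakeOutputClosed$ enlarges $\Sigma_E$; if so, it proves that $|\Prefixes|$ specifically must grow, via a long chain of lemmas on evidence compatibility, minimality of the evidence Mealy automaton, and a uniqueness result for evidence-compatible Mealy automata with the same number of states, culminating in \cref{bigger}. You instead argue in one stroke by contradiction: if none of the three potentials grew, the rebuilt cohesive table has the same $\Prefixes$, $\Suffixes$, $\Sigma_E$ and hence the identical evidence automaton and identical hypothesis, yet it now contains $\cex$ and all its prefixes, so output-closedness places the last character of $\cex$ in $\Sigma_E$ and \cref{theo:symbcomp} forces $\hypothesisM(\cex) = \targetM(\cex)$, contradicting that $\cex$ is a counterexample. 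This is shorter and bypasses the evidence-minimality machinery entirely. What you give up is the sharper conclusion that each counterexample grows $|\Prefixes|$ or $|\Sigma_E|$ (not merely one of three potentials), which the paper needs for the $n + |\SigmaEf|$ equivalence-query bound in \cref{theorem:complexity}; your version yields only roughly $2n + |\SigmaEf|$, which is immaterial for termination as stated. Two small remarks: your closing worry about generator-compatible orderings is unnecessary under the paper's permissive notion of \emph{follow}---$s_{g_i}$-learnability is used only to make $\SigmaEf$ finite, and the follow assumption alone gives $\Sigma_E \subseteq \SigmaEf$ (plus the seed character, as you correctly note). And when bounding $|\Prefixes|$ you should make explicit that $\targetM$ itself is symbolic compatible with the table (immediate from $f(w,e) = \targetM(w \cdot e)$), which is what lets \cref{theo:symbolicminimality} cap $|\Prefixes|$ at $n$.
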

\begin{proof}[Sketch]
Let $T = (\domain, S, R, \Sigma_E, E, f)$ be an observation table. By symbolic compatibility and minimality, we have $|S| \leq |Q|$. Since for any $\loc_i \in \Loc$, $\Algebra$ is $s_{g_i}$-learnable with $\partitioning$, $\SigmaEf$ is finite. Since the oracle follows the generators, we have $\Sigma_E \subseteq \SigmaEf$.
In \cref{alg2}, the loop from \cref{cohesiveM} terminates due to $|S| \leq |\Loc|$ and $\Sigma_E \subseteq \SigmaEf$.
The loop from \cref{mainloopM} terminates because: 
\begin{ienumeration}
    \item each iteration strictly increases either $|S|$ or $|\Sigma_E|$ and 
    \item if $|S| = |Q|$ and $\Sigma_E = \SigmaEf$, the learner can identify the correct s-MA by symbolic compatibility, minimality, and the condition~\ref{item:partitioning-stability}) in \cref{partition}.
\end{ienumeration}
\qed{}
\end{proof}

We show the query complexity of the $\LambdaM$ algorithm with respect to $|\SigmaEf|$.

\newcommand{\queryComplexityStatement}{
Let $\targetM$ be the target s-MA, $n$ be the number of states of $\targetM$, and $m$ be the maximum length of counterexamples returned from the oracle for equivalence queries.
    If the oracle follows $g_1, \dots, g_n$ at $q_1, \dots, q_n$, and for all $q_i \in \Loc$, $\Algebra$ is $s_{g_i}$-learnable with $\partitioning$,
    the number of output and equivalence queries used by the $\LambdaM$ algorithm is bounded by 
    $(|\SigmaEf| + m + 1) \times n^2 + (2 m + |\SigmaEf| + 1) \times |\SigmaEf| \times n + m \times |\SigmaEf|^2$ and 
    $n + |\SigmaEf|$, respectively.
}
\begin{theorem}
 [Query Complexity of $\LambdaM$]
 \label{theorem:complexity}
 \queryComplexityStatement
 \qed{}
 \end{theorem}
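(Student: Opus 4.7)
My plan is to bound the equivalence and output queries separately, using the structural bounds on the observation table that already underlie the termination argument.

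\textbf{Equivalence queries.} As in the termination proof, each non-final equivalence query strictly increases either $|S|$ or $|\Sigma_E|$: the prefixes of the returned counterexample, once added to $R$, trigger \emph{makeClosed} (introducing a new row into $S$) or \emph{makeOutputClosed} (adding a new character to $\Sigma_E$), since otherwise symbolic compatibility (\cref{theo:symbcomp}) would force the hypothesis to already agree with $\targetM$ on the counterexample. By minimality (\cref{theo:symbolicminimality}) $|S| \le n$, and because the oracle follows $g_1, \ldots, g_n$ we have $|\Sigma_E| \le |\SigmaEf|$. Hence there are at most $(n-1)+(|\SigmaEf|-1)$ counterexample-returning queries, and adding the final \KTrue-answering query yields the bound $n + |\SigmaEf|$.

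\textbf{Output queries.} Every cell $f(w, e)$ of the final observation table is computed by a single output query, so the total is at most $|S \cup R| \cdot |\Sigma_E \cup E|$. For the columns, I have $|\Sigma_E| \le |\SigmaEf|$ as above and $|E| \le n - 1$, because every call to \emph{makeConsistent} adds one suffix that strictly refines the equivalence classes of $\row$ on $S \cup R$, while two words reaching the same state of $\targetM$ share the outputs of all extensions and hence the same row, bounding the number of classes by $n$. Thus $|\Sigma_E \cup E| \le |\SigmaEf| + n$. For the rows, elements of $S \cup R$ are added only by initialization (at most two), by \emph{makeEvidenceClosed} (contributing at most $|S| \cdot |\Sigma_E| \le n |\SigmaEf|$ words $s \cdot e$ over the whole run), and by counterexample processing (at most $m$ new prefixes per equivalence query, totalling at most $m(n + |\SigmaEf|)$). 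So $|S \cup R| \le n + n|\SigmaEf| + m(n + |\SigmaEf|)$. Multiplying the two factors and regrouping by $n^2$, $n|\SigmaEf|$, and $|\SigmaEf|^2$ yields exactly $(|\SigmaEf|+m+1)\,n^2 + (2m+|\SigmaEf|+1)\,|\SigmaEf|\,n + m\,|\SigmaEf|^2$.

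\textbf{Main obstacle.} The subtlest step is the column bound $|E| \le n - 1$: it rests on the invariant that throughout the run — not only at cohesive snapshots — the $\row$-classes on $S \cup R$ refine state-equivalence in $\targetM$, and therefore cannot exceed $n$. Once this semantic invariant is in place, the remainder of the argument is careful bookkeeping over the four table-growing operations, the only delicate point being to avoid double-counting the counterexample-induced additions that simultaneously grow $R$ and (via \emph{makeOutputClosed}) $\Sigma_E$.
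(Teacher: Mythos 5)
Your proposal is correct and follows essentially the same route as the paper's proof: equivalence queries are bounded by observing that each counterexample strictly increases $|S|$ or $|\Sigma_E|$ (capped at $n$ and $|\SigmaEf|$), and output queries are bounded by the table size using $|S|\le n$, $|E|\le n-1$ via the count of \emph{makeConsistent} calls, and $|R|$ bounded by the counterexample prefixes plus the words added by \emph{makeEvidenceClosed}, yielding the same product and expansion. The only differences are cosmetic bookkeeping (you bound $|S\cup R|$ jointly where the paper bounds $|S|$ and $|R|$ separately), so no further comment is needed.
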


\begin{figure}[t]
\centering
\begin{tikzpicture}[node distance=2.4cm,  scale=0.\LongVersion{70}\ShortVersion{60}, on grid, auto, every node/.style={transform shape}]
    \node (q0) [state, initial, initial text = {}] {$q_0$};
    \node (q1) [state, right = of q0] {$q_1$};
    \node (q2) [state, right = of q1] {$q_2$};
    \node (q3) [state, right = of q2] {$q_3$};
    \node (q4) [state, right = of q3] {$q_4$};
    \node (q5) [state, right = of q4] {$q_5$};

    \path [-stealth, thick]
        (q0) edge node {$[0,10)\mid0$} (q1) 
        (q1) edge node {$[0,10)\mid0$} (q2) 
        (q2) edge node {$[0,10)\mid0$} (q3)
        (q3) edge node {$[0,10)\mid0$} (q4)
        (q4) edge node {$[0,10)\mid0$} (q5)
        
        (q0) edge [loop above]  node[align = center] {$[10,20)\mid 1$ \\ $[20,+\infty)\mid 2$}()
        (q1) edge [loop above]  node[align = center] {$[10,20)\mid 1$ \\ $[20,+\infty)\mid 2$}()
        (q2) edge [loop above]  node[align = center] {$[20,+\infty)\mid 2$}()
        (q3) edge [loop above]  node[align = center] {$[10,20)\mid 1$ \\ $[20,+\infty)\mid 2$}()
        (q4) edge [loop above]  node[align = center] {$[10,20)\mid 1$}()
        (q5) edge [loop right]  node[align = center] {$[0,10)\mid 0$ \\ $[10,20) \mid 1$ \\ $[20, +\infty)\mid 2$}()

        (q2) edge [loop below] node{$[10,20)\mid -1$}()
        (q4) edge [loop below] node{$[20,+\infty)\mid -1$}()
    ;
\end{tikzpicture}     
\caption{s-MA illustrating the independent discovery of the state space and $\SigmaEf$.}
\label{eq-example}
\end{figure}
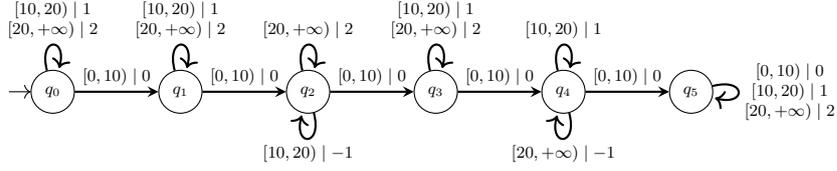
\cref{theorem:complexity} suggests that the number of states of $\targetM$ and the complexity of identifying the predicates in $\targetM$ independently increase the number of equivalence queries in $\LambdaM$.
This independent influence indeed occurs for the oracle returning the lexicographically minimum counterexample for each equivalence query, which is reasonably smart.
Consider s-MAs $\M_{n, k}$ with $2n$ states and $|\SigmaEf| = k$, where 
\begin{ienumeration}
    \item the state space has a sequential shape, 
    \item at the $2i$-th state, $\M_{n, k}$ returns a different output character only for the $i$-th input character in $\SigmaEf$, and
    \item at the initial state, $\M_{n, k}$ returns a different output character for each $a \in \SigmaEf$.
\end{ienumeration}
\cref{eq-example} illustrates one of the instances.
For equivalence queries,
since each $a \in \SigmaEf \setminus \Sigma_E$ is a counterexample, the oracle initially uses each of them as a counterexample;
the oracle then uses words to discover new states as counterexamples.
\cref{proposition:lower_bound} formally states this independent behavior.

\newcommand{\lowerBoundStatement}{%
   Let $\Algebra$ be the interval algebra over naturals (\cref{example:interval-algebra}) and $\partitioning$ be the partitioning function in \cref{algorithm:partitioning_interval_algebra} for $\Algebra$.
   For the oracle returning the lexicographically minimum counterexample for each equivalence query,
   for any $n, k \in \N$ satisfying $n \geq 2$ and $k \geq n$,
   there is an s-MA $\targetM$ with $2 n$ states such that $|\SigmaEf| = k$ and
   $\LambdaM$ requires at least $n + k$
   equivalence queries to learn $\targetM$.
}
\begin{proposition}[Lower Bound]
    \label{proposition:lower_bound}
    \lowerBoundStatement{}
    \qed{}
\end{proposition}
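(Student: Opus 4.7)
The plan is to exhibit an explicit family of target s-MAs $\M_{n,k}$ generalizing the example in \cref{eq-example}, and then to trace the behavior of $\LambdaM$ step by step under the lexicographically-minimum-counterexample oracle. The automaton $\M_{n,k}$ has $2n$ states $q_0, q_1, \ldots, q_{2n-1}$ arranged in a chain via transitions labeled $[0, 10) \mid 0$; at every state the input domain is partitioned into the $k$ intervals $[0, 10), [10, 20), \ldots, [10(k-1), \infty)$. At $q_0$ these $k$ partitions carry the $k$ distinct outputs $0, 1, \ldots, k-1$, which forces $\SigmaEf = \{0, 10, \ldots, 10(k-1)\}$ and hence $|\SigmaEf| = k$. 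At each even state $q_{2i}$ with $1 \leq i \leq n-1$, the output on one specific partition is perturbed to a fresh value $-1$ (as in \cref{eq-example}), chosen so that $q_{2i}$ is distinguishable from $q_0$ only through that partition's essential character. The terminal state $q_{2n-1}$ loops back with the normal outputs, ensuring that $\targetM$ is minimal with exactly $2n$ states; the requirement $k \geq n$ provides enough distinct partitions to assign one perturbation per even state.

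The argument then splits the execution into two phases. In Phase~1 (character discovery), starting from $\Sigma_E = \{0\}$ and the singleton hypothesis, we show by induction that the first $k-1$ equivalence queries each return a single-character counterexample from $\SigmaEf \setminus \Sigma_E$, namely the smallest essential character not yet in $\Sigma_E$. This is because the partitioning function of \cref{algorithm:partitioning_interval_algebra} extends the current partition by one upward-unbounded interval, so the hypothesis mispredicts the output exactly on the next essential character at the initial state. Processing such a counterexample adds one character to $\NextPrefixes$, which is then promoted to $\Sigma_E$ by output-closedness; crucially, no new rows are introduced, since every chain state visited by any short word agrees with $q_0$ on the current $\Sigma_E$. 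After $k-1$ queries we therefore have $\Sigma_E = \SigmaEf$ while the hypothesis still contains only one state.

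In Phase~2 (state discovery), the lex-min counterexample must traverse the chain to reach a perturbed state. We argue inductively that the $i$-th query in this phase (for $1 \leq i \leq n-1$) returns a counterexample of the form $0^{2i} \cdot c_i$, where $c_i$ is the perturbation-triggering character at $q_{2i}$. The key invariant is that processing such a counterexample adds at most two new rows to the observation table: one for the perturbed state $q_{2i}$ itself (via the new prefix $0^{2i}$), and one for the preceding non-perturbed state $q_{2i-1}$, which becomes distinguishable only after a single consistency-induced suffix is added to $\Suffixes$. Discovering the last state $q_{2n-1}$ requires one further query, whose counterexample must first reach $q_{2n-1}$ in the target while the hypothesis loops through the chain; processing it adds exactly one new row via consistency. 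Combined with the concluding equivalence query that certifies correctness, Phase~2 contributes $n + 1$ equivalence queries, for a grand total of at least $(k-1) + (n+1) = n + k$.

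The main obstacle is rigorously establishing the two-new-rows-per-query invariant in Phase~2: without this bound, one could imagine a single long counterexample revealing many rows at once, which would defeat the linear-in-$n$ lower bound. This requires a careful analysis of how the consistency operation introduces exactly one distinguishing suffix per query and how that suffix propagates the distinction to a single preceding prefix. A more subtle point is ruling out that a lex-min counterexample in Phase~1 prematurely exercises the chain; this is ensured by the construction, in which all states $q_i$ agree with $q_0$ on the current $\Sigma_E$ until a perturbed state is reached, so any length-$1$ mismatch at the initial state is lexicographically smaller than any chain-traversing word.
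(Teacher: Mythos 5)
Your proposal is correct and follows essentially the same route as the paper: the same family $\M_{n,k}$ (a length-$2n$ spine of $[0,10)\mid 0$ transitions with one perturbed output per even state and all $k$ interval outputs exposed at $q_0$), the same two-phase trace under the lex-min oracle ($k-1$ single-character counterexamples to exhaust $\SigmaEf$, then counterexamples of the form $0^{2i}\cdot c_i$ revealing two states each, plus the final confirming query), and the same count $(k-1)+n+1=n+k$. The paper's proof is likewise a direct execution trace and does not formally verify the two-new-rows-per-query invariant you flag, so your sketch is at a comparable level of rigor.
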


\section{Experiments}

We implemented our algorithm in Java\footnote{Our implementation is distributed under Apache 2.0 License at \url{https://github.com/SoftwareFoundationGroupAtKyotoU/learningsma}}.
More specifically, we implemented 
\begin{ienumeration}
    \item a learner that executes the $\LambdaM$ algorithm and
    \item an oracle that answers the learner's queries based on the target s-MA given as input.
\end{ienumeration}
Our implementation is based on the library~\cite{library}, which implements various algorithms for s-FA.
We implemented the oracle based on the following design decisions:
\begin{ienumeration}
    \item the oracle answers an equivalence query by exhaustively comparing the target and hypothesis s-MAs,
    \item the returned counterexample may vary even if the same hypothesis is given, and
    \item the oracle constructs counterexamples only using the essential characters $\SigmaEf$.
\end{ienumeration}
Throughout this section, we denote the number of states of the target s-MA by $n$, and the maximum length of counterexamples returned by the oracle for equivalence queries by $m$.
\noindent\textbf{Environments.}
All experiments were performed on a server with Intel(R) Xeon(R) CPU E5-2687W v3 @ 3.10GHz, 251GiB RAM running Ubuntu 22.04.4 LTS.
For each example, we executed each experiment for 10 times. %

\subsection{Efficiency for Practical Examples}%
\label{section:experiments:practical}

To observe the efficiency for practical examples,  we took two benchmarks from MATLAB/Simulink demonstrations: \MH{} (Mars Helicopter)~\cite{MH} and \ATGS{} (Automatic Transmission Gear System)~\cite{ATGS}. 
\cref{practical} summarizes them.
\MH{} takes inputs consisting of both Boolean and numeric values.
\ATGS{} takes purely numeric inputs.
We use the equality algebra (\cref{example:equality-algebra}) over $\{0, 1\}$
for Boolean values and the interval algebra (\cref{example:interval-algebra}) for numeric values. We use the product algebra in~\cite{DBLP:conf/tacas/DrewsD17} for their products.
 We report the number of queries, total runtime including the time to answer queries, the size of the final observation table, and the length of the longest counterexamples returned for equivalence queries.
\begin{table}[tb]
\caption{Summary of the benchmarks and the experimental results for RQ1. The columns ``$|\OUTPUT|$'', ``$n$'', and ``$|\delta|$'' show the size of the output alphabet, the number of states, and the number of transitions\LongVersion{, respectively}. The columns ``\# Bool. '' and ``\# Num. '' show the number of Boolean and numeric values in each input character. The columns ``\# of eq.'', ``\# of oq.'', and ``runtime'' show the average number of equivalence and output queries made during learning and the average runtime. The columns ``$|\SigmaEf|$'', ``$|R|$'', and ``$|E|$'' show their size in the final observation table. The column ``$m$'' shows the average of the maximum length of the counterexamples returned for equivalence queries during learning.}\label{practical}
\centering
 \scriptsize
 \begin{tabular}{cccccc|ccccccc}
  \toprule
  {} &  $|\Gamma|$ &  $n$ &  $|\delta|$ &  \# Bool. & \# Num. &  \# of eq. &    \# of oq. &        runtime [s.] & $|\SigmaEf|$ &    $|R|$ & $|E|$ & $m$\\
  \midrule
  \MH &             4 &      5 &          12 &      1 & 3&   36.0 &    6516.0 &     10.9  &         36.0 &    176.0 &   0.0 & 4\\
  \ATGS     &           4 &     16 &          34 &     0 & 2 &   66.0 &   86446.8 &   2481.2  &         68.0 &   1152.2 &   6.0 & 12 \\
  \bottomrule
 \end{tabular}
\end{table}

\Cref{practical} summarizes the experimental results\LongVersion{ for the practical examples}.
In \cref{practical}, we observe that the number of equivalence queries is at most $|\SigmaEf|$.
This is in contrast to the theoretical upper bound ($n + |\SigmaEf|$) in \cref{theorem:complexity} and the lower bound ($n/2 + |\SigmaEf|$) in \cref{proposition:lower_bound} for an extreme case.\footnote{The fact that the numbers of equivalence queries in \cref{practical} are less than $n/2 + |\SigmaEf|$ does not contradict with \cref{proposition:lower_bound}; this proposition asserts the \emph{existence} of an s-MA that requires at least $n/2 + |\SigmaEf|$ equivalence queries.}
This is because $\LambdaM$ usually identifies the state space and the essential characters $\SigmaEf$ simultaneously.
Moreover, for the target s-MA of \MH{}, any pair of states can be distinguished with a suffix of length $1$.
For such an automaton, $\LambdaM$ identifies the state space only using membership queries once
the essential characters $\SigmaEf$ are identified. Thus, the number of equivalence queries was equal to $|\SigmaEf|$ for \MH{}.
In \ATGS{}, there are pairs of states requiring a suffix of length 2 to be distinguished.
Although the number of equivalence queries can be more than $|\SigmaEf|$ for such an automaton,
\LongVersion{\cref{practical} shows that} it was less than $|\SigmaEf|$.
This is likely because multiple essential characters are identified by one equivalence query.

In \cref{practical}, we observe that the number of output queries is also smaller than their theoretical upper bound, which is
14,309 and 168,592, for \MH{} and \ATGS{}, respectively.
In particular, 
$|R|$ and $|E|$ are smaller than their upper bounds.
$|R|$ is smaller than the upper bound because:
\begin{ienumeration}
    \item $\LambdaM$ made fewer equivalence queries than the upper bound and processed fewer counterexamples, and %
\item %
not all counterexamples are of length $m$. 
\end{ienumeration}
$|E|$ is smaller than the upper bound because not all pairs of states require suffixes of length longer than 1 to be distinguished.

Furthermore, we believe that in practice, the number of equivalence queries is usually linear to $\SigmaEf$ and $|R|$ and $|E|$ tend to be smaller than their theoretical upper bounds. 
For instance, \cite{DBLP:conf/icgi/KrugerGV23} reports that
in many practical non-symbolic examples, any pair of states can be distinguished by suffixes of length 1, in which case the number of equivalence queries is at most $|\SigmaEf|$ and $|E| = 0$.
\reviewer{2}{My main critique is on the first experiment (of the two). RQ1 asks whether the algorithm "requires less queries for practical examples than the theoretical worst case", to which the authors answer "Yes" based on two examples. This is way too little, and I see no reason why this should give an answer to such a general question, and how to generalise from two examples. (There is a claim before "we anticipate that in many practical cases, the number of equivalence queries is usually linear to [the size of the needed essential input alphabet] (...)" which is argued with a reference which observes most counterexamples are size 1 in a real benchmark set; it is not obvious to me from this analysis how this affects the current paper, which considers symbolic Mealy machines).}
\mw{Thanks to your comment, we noticed that the present argument and conclusion of RQ1 were stronger than our intention. We wanted to emphasize that (1) the actual scalability can be much better than the worst-case scenario we theoretically derived, and (2) one previously published result also suggests that, in many practical cases, $|E|$ is indeed very small. We will correct the argument from L499 and "Answer to RQ1" in the final version.}
\reviewer{2}{Further, a weakness in the experiments is that the equivalence oracle has knowledge of the target system; as acknowledged in lines 505-509 this is not realistic. Usually, some form of testing is used to approximate it. Perhaps this type of experiment for the type of question about the size of the essential input alphabet; but it does not say much on the performance of the algorithm in a practical black-box scenario, and in particular it is not clear to me how the interpret that 41 minutes were needed to learn one of the examples (repeated in the introduction); in practice, most of the effort in automata learning is in evaluating equivalence queries through testing.}
In \cref{practical}, we observe that \ATGS{} was learned in 41.353 minutes. 
Although our implementation requires an oracle to have a model of a target system as an s-MA to answer equivalence queries, which itself is not a very realistic assumption, 
this computational cost is not excessive.
This suggests that our learning algorithm has the potential to be used in realistic scenarios, at least for learning the behavior of a system with reasonable execution time.

\subsection{Scalability for Random Symbolic Mealy Automata}\label{section:experiments:random}

To observe the experimental scalability of our algorithm,
we used randomly generated benchmarks. We used the interval algebra over naturals in \cref{example:interval-algebra}. The size of the output alphabet is fixed to 3. The values specified in the random generation are the number of states of the automaton and $\SigmaEf$ for the current oracle. The number of states is 10, 20, 40, 80, and $|\SigmaEf|$ is 10, 20, 30, 40; Thus, we have 16 configurations in total. For each configuration, we randomly generated 10 s-MAs.
See \cref{section:random_benchmark_detail} for the details of the random generation.
We measured the number of output and equivalence queries and $|R|$ and $|E|$ of the final observation table. %
\begin{table}[tb]
\caption{Summary of the experimental setting and the results. The columns ``$n$'' and ``$|\SigmaEf|$'' show the number of states and the size of $\SigmaEf$ of the\LongVersion{ random} s-MAs. The columns ``\# of eq.''\ and ``\# of oq.''\ show the average number of equivalence and output queries made during learning. The columns ``$|R|$'' and ``$|E|$'' show the size of $R$ and $E$ in the final observation table\LongVersion{, respectively}.
}\label{rq1:randomsummary}
\scriptsize
\centering
 \begin{tabular}{cc|rrrr}
  \toprule
  $n$ & $|\SigmaEf|$ &  \# of eq. &  \# of oq. &   $|R|$ &  $|E|$ \\
  \midrule
  10 &           10 &      10.00 &    1015.60 &   91.56 &    0.0 \\
  10 &           20 &       20.00 &     4075.60 &  193.78 &    0.0 \\
  10 &           30 &      29.99 &    9104.70 &  293.49 &    0.0 \\
  10 &           40 &       40.00 &    16161.60 &  394.04 &    0.0 \\
  20 &           10 &      10.10 &    2035.95 &  181.57 &    0.1 \\
  20 &           20 &       20.00 &     8030.80 &  381.54 &    0.0 \\
  20 &           30 &      30.00 &   18080.70 &  582.69 &    0.0 \\
  20 &           40 &       40.00 &    32110.80 &  782.77 &    0.0 \\
  \bottomrule
 \end{tabular}
\hfil
\begin{tabular}{cc|rrrr}
\toprule
$n$ & $|\SigmaEf|$ &  \# of eq. &  \# of oq. &   $|R|$ &  $|E|$ \\
\midrule
 40 &           10 &      10.00 &    4010.90 &  361.09 &    0.0 \\
 40 &           20 &       20.00 &    16036.20 &  761.81 &    0.0 \\
 40 &           30 &      30.00 &   36079.20 & 1162.64 &    0.0 \\
 40 &           40 &       40.00 &    64175.60 & 1564.39 &    0.0 \\
 80 &           10 &      10.18 &    8186.59 &  722.55 &    0.2 \\
 80 &           20 &       20.00 &    32031.80 & 1521.59 &    0.0 \\
 80 &           30 &      30.00 &   72072.00 & 2322.40 &    0.0 \\
 80 &           40 &       40.00 &   128100.80 & 3122.52 &    0.0 \\
\bottomrule
\end{tabular}
\end{table}
\begin{figure}[tb]
    \centering
    \includegraphics[width = \LongVersion{5.0}\ShortVersion{4.5}cm]{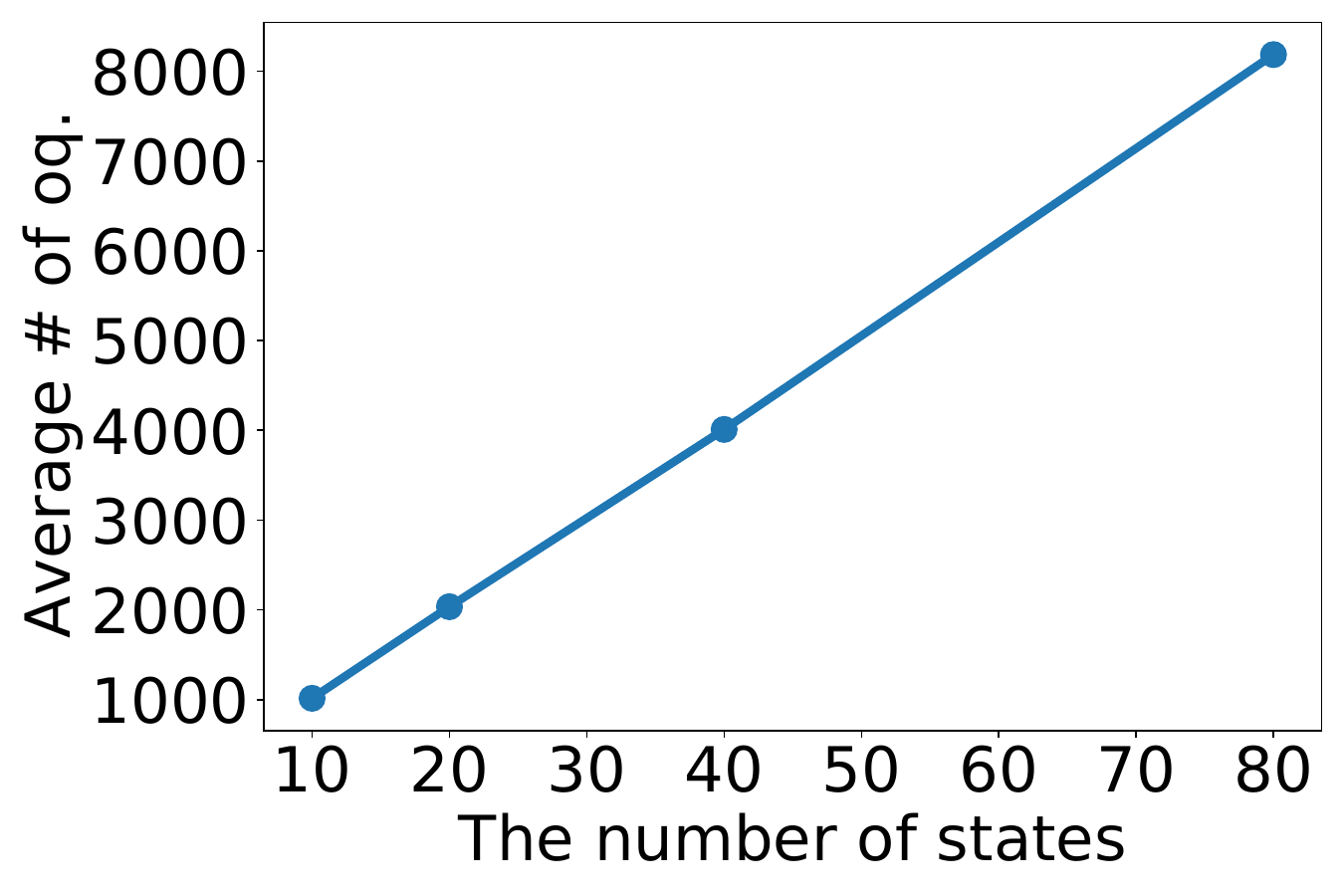}
    \qquad
    \includegraphics[width = \LongVersion{5.0}\ShortVersion{4.5}cm]{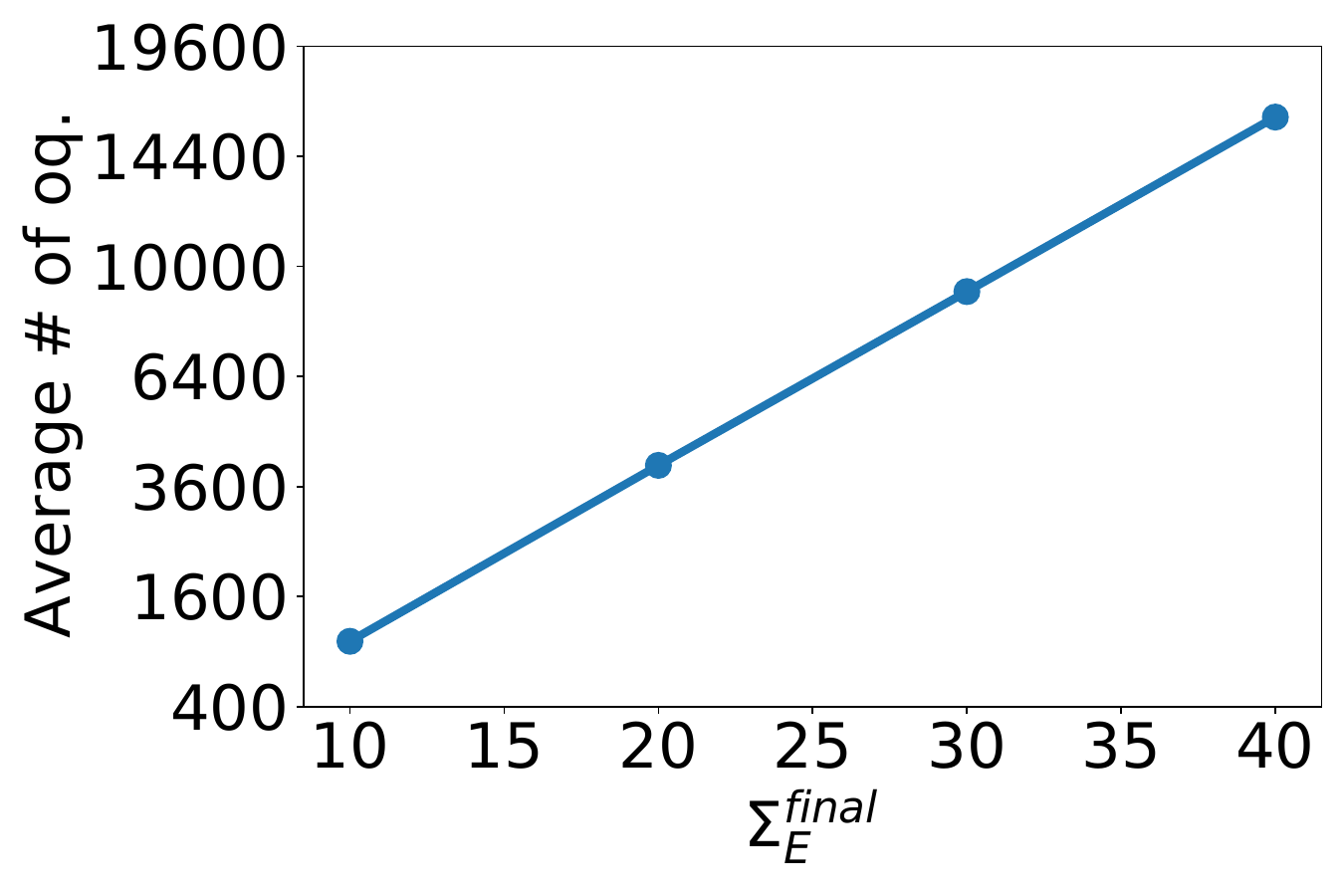}
    \caption{The number of output queries, where the size of $\SigmaEf$ is fixed to 10 (left) and the number of states is fixed to 10 (right). The y-axis of the figure right is on a quadratic scale.}\label{rq1:outputqueries}
\end{figure}

\Cref{rq1:randomsummary} summarizes the experimental results for random s-MAs, and \cref{rq1:outputqueries} plots the number of output queries \wrt{} the generation parameters.
Similarly to the results for RQ1 in \cref{practical}, \cref{rq1:randomsummary} shows that the number of equivalence queries is almost $|\SigmaEf|$. This is for the same reason as discussed in \cref{section:experiments:practical}.
In \cref{rq1:outputqueries}, we observe that the number of output queries is linear to the number of states. This is in contrast to the theoretical upper bound, which is quadratic to the number of states. %
This is because, as shown in \cref{rq1:randomsummary}, $|E|$ is almost 0, and the number of equivalence queries is almost $|\SigmaEf|$. Thus, $|R|$ is almost $m \times |\SigmaEf|$, in which case the $n^2$ term in the upper bound disappears.
\section{Conclusions and Future Work}\label{section:conclusions}
We propose the $\LambdaM$ algorithm for active learning of s-MAs. The central idea is to explicitly maintain and revise the set $\Sigma_E$ of essential inputs during learning. 
In general, the discovery of the state space and the essential inputs can occur independently, and the number of states and the number of essential inputs can independently affect the number of queries.
In contrast, our experimental results show that identification typically occurs simultaneously and that query complexity is typically much better.

 One future direction is to improve the query complexity with the counterexample analysis in~\cite{DBLP:conf/siemens/RivestS93}, as in~\cite{DBLP:conf/fm/ShahbazG09} for Mealy automata.
 Extending recent algorithms for automata learning, \eg{}~\cite{DBLP:conf/cav/ArgyrosD18,DBLP:conf/tacas/VaandragerGRW22}, to the current setting is also future work.
\begin{LongVersionBlock}
Theoretically, investigating variants of \cref{proposition:lower_bound} for other Boolean algebras, or giving a lower bound independent of the choice of Boolean algebras, is an interesting direction.
 Another future direction is to investigate techniques to employ $\LambdaM$ for approximating a black-box system.
 For instance, an efficient equivalence testing algorithm in a black-box setting is a viable future work.
The use of $\LambdaM$ in the context of \emph{black-box checking}~\cite{DBLP:conf/forte/PeledVY99} to test black-box systems, potentially with theoretical justification, is also a future work.
\end{LongVersionBlock}

\subsubsection*{Acknowledgements.}

This work was partially supported by JST PRESTO Grant No.\ JPMJPR22CA, JST CREST Grant No.\ JPMJCR2012, JSPS KAKENHI Grant No.\ 22K17873, JSPS KAKENHI Grant No.\ 25H01113, and JST BOOST Grant No.\ JPMJBY24H8.

\newpage

\bibliographystyle{splncs04}
\bibliography{sample}

\ifdefined\VersionLong%
\newpage
\appendix

\section{Omitted Proofs}

\subsection{Proof of \cref{theo:symbcomp}}
Here, we prove \cref{theo:symbcomp}, the symbolic compatibility of $\hypothesisM$. Firstly, we show the following lemma.
\begin{lemma}\label{closed-consistent}
     Let $T=(\domain, S, R, \Sigma_E, E, f)$ be a cohesive observation table. For all $s_1,s_2 \in S\cup R$, for all $a \in \Sigma_E$ satisfying $s_1 \cdot a$, $s_2 \cdot a \in S \cup R$ if $\row(s_1) = \row(s_2)$ then $\row(s_1\cdot a) = \row(s_2\cdot a)$ and $f(s_1, a) = f(s_2, a)$. 
\end{lemma}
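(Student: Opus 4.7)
\textbf{Proof proposal for \cref{closed-consistent}.}
The plan is to dispatch the two claims separately, each by a direct appeal to a single cohesiveness condition, so this lemma really amounts to unfolding definitions.

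For the second claim, $f(s_1, a) = f(s_2, a)$, I would simply observe that since $a \in \Sigma_E \subseteq \Sigma_E \cup E$, the character $a$ is a legitimate column index of the table. By the definition of $\row$, for any $w \in S \cup R$ we have $\row(w)(a) = f(w, a)$. Therefore the assumption $\row(s_1) = \row(s_2)$, when evaluated at $a$, yields $f(s_1, a) = \row(s_1)(a) = \row(s_2)(a) = f(s_2, a)$.

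For the first claim, $\row(s_1 \cdot a) = \row(s_2 \cdot a)$, I would invoke the consistency property of the cohesive table directly: we have $s_1, s_2 \in S \cup R$ with $\row(s_1) = \row(s_2)$, and by hypothesis $s_1 \cdot a, s_2 \cdot a \in S \cup R$ with $a \in \Sigma_E \subseteq \domain$. The definition of consistency gives exactly $\row(s_1 \cdot a) = \row(s_2 \cdot a)$.

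There is no real obstacle here; the only subtlety to note is why we need $a \in \Sigma_E$ rather than merely $a \in \domain$. The restriction $a \in \Sigma_E$ is what makes $f(s_i, a)$ and hence $\row(s_i)(a)$ defined in the first place (column indices of the table are drawn from $\Sigma_E \cup E$), which is precisely what powers the second claim. The first claim would in fact go through for arbitrary $a \in \domain$ with $s_i \cdot a \in S \cup R$, but stating it for $a \in \Sigma_E$ aligns the lemma with its intended use in the forthcoming proof of \cref{theo:symbcomp}, where transitions of the evidence Mealy automaton are taken over $\Sigma_E$.
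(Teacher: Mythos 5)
Your proof is correct, and for the second conjunct it is actually more direct than the paper's. The paper derives $\row(s_1\cdot a) = \row(s_2\cdot a)$ from consistency \emph{together with} evidence-closedness (which only supplies $s\cdot a \in S\cup R$ for $s \in S$), whereas you use the lemma's explicit hypothesis $s_1\cdot a, s_2\cdot a \in S\cup R$ and apply consistency alone --- this is cleaner and matches the statement as written. For $f(s_1,a) = f(s_2,a)$, the paper takes a detour through closedness and the non-emptiness of $\Sigma_E$, while you simply evaluate the assumed equality $\row(s_1) = \row(s_2)$ at the column index $a \in \Sigma_E \subseteq \Sigma_E \cup E$; this one-line argument is all that is needed, and your closing remark correctly identifies that the restriction to $a \in \Sigma_E$ is exactly what makes $a$ a legitimate column index. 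Nothing is missing.
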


\begin{proof}
Since the observation table $T$ is consistent and evidence-closed, for all $s_1,s_2 \in S\cup R$, for all $a \in \Sigma_E$, if $\row(s_1) = \row(s_2)$, then $\row(s_1\cdot a) = \row(s_2\cdot a)$ holds. Since the observation table $T$ is closed, there exists $s \in S$ such that $\row(s) = \row(s_1\cdot a) = \row(s_2\cdot a)$. Since $\Sigma_E$ is non-empty, for all $s_1,s_2 \in S\cup R$, for all $a \in \Sigma_E$, if $\row(s_1) = \row(s_2)$ then$f(s_1, a) = f(s_2, a)$. 

\end{proof}

We next prove the following lemma. This lemma shows that all words in $S\cup R$ of the observation table correspond to the states in $\hypothesisMe$. Note that this lemma also shows that all states in $\hypothesisMe$ are reachable from the initial state.
\begin{lemma}\label{lemma:reach}
  Let $T=(\domain, S, R, \Sigma_E, E, f)$ be a cohesive observation table. If $\hypothesisMe = (\Sigma_E,\Loc,\initLoc,\OUTPUT,\transe,\outpe)$ is the evidence Mealy automaton constructed from $T$, for all $w \in S\cup R$, $\transe(\initLoc,w) = \row(w)$.
\end{lemma}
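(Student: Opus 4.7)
The plan is to proceed by induction on the length of $w \in S \cup R$, leveraging the cohesiveness properties of $T$ (closedness, consistency, evidence-closedness, and output-closedness) together with \cref{closed-consistent}.

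The base case is $w = \epsilon$. Since $\epsilon \in S$ by the definition of an observation table, the construction in \cref{construction} gives $\initLoc = \row(\epsilon)$, and the extended transition function satisfies $\transe(\initLoc, \epsilon) = \initLoc$, so $\transe(\initLoc, \epsilon) = \row(\epsilon)$.

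For the inductive step, take $w \in S \cup R$ with $|w| \geq 1$ and write $w = w' \cdot a$. Here I need the following preliminary observation, which is the one subtle point: since $S \cup R$ is prefix-closed we have $w' \in S \cup R$, and since $T$ is output-closed we have $a \in \Sigma_E$, so $w'$ is in the domain of the induction hypothesis and $a$ is a legitimate input symbol of $\hypothesisMe$. By the induction hypothesis, $\transe(\initLoc, w') = \row(w')$, hence $\transe(\initLoc, w) = \transe(\row(w'), a)$, and it remains to show this equals $\row(w)$. Consider two cases. If $w' \in S$, the construction in \cref{construction} directly gives $\transe(\row(w'), a) = \row(w' \cdot a) = \row(w)$. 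If $w' \in R \setminus S$, closedness of $T$ provides some $s \in S$ with $\row(s) = \row(w')$; evidence-closedness ensures $s \cdot a \in S \cup R$ (since $a \in \Sigma_E$), and \cref{closed-consistent} then yields $\row(s \cdot a) = \row(w' \cdot a) = \row(w)$. By the construction, $\transe(\row(s), a) = \row(s \cdot a)$, and since $\row(s) = \row(w')$ we conclude $\transe(\row(w'), a) = \row(w)$.

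The main obstacle I anticipate is the case split for the inductive step when $w' \in R$: the definition of $\transe$ in \cref{construction} only directly specifies transitions via representatives in $S$, so one must route through closedness to find a representative $s \in S$ with the same row and then invoke \cref{closed-consistent} (which itself bundles consistency, evidence-closedness, and closedness) to transport the equality of rows across the $a$-transition. Everything else is a straightforward induction, and the output-closedness of $T$ is exactly what guarantees the induction is well-posed because it forces every character occurring in a word of $S \cup R$ to lie in $\Sigma_E$. \qed
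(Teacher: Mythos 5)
Your proof is correct and follows essentially the same route as the paper: induction on $|w|$, with prefix-closedness and output-closedness justifying that the induction is well-posed, and closedness plus \cref{closed-consistent} used to transport the $a$-transition through a representative $s \in S$. The only cosmetic difference is that the paper handles the inductive step uniformly by always invoking closedness (which subsumes your $w' \in S$ case), whereas you split into two cases; both are fine.
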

\begin{proof}
    We prove this lemma by induction on the length of $w$. If the length of $w$ is $0$, it is clear that $\transe(\initLoc,\epsilon) = \initLoc = \row(\epsilon)$.
    We assume that for all $w \in S\cup R$ of length $k$, $\transe(\initLoc,w) = \row(w)$ holds. Let $w' = w \cdot a \in S\cup R$ be a word of length $k + 1$, where $w \in S\cup R$ is a word of length $k$ and $a \in \Sigma_E$ is a character. It is easily seen that $w \in S\cup R$ and $a \in \Sigma_E$ since $S\cup R$ is prefix-closed and $S,R \subseteq \Sigma_E^*$ holds. Since $T$ is closed, there exists $s \in S$ such that $\row(w) = \row(s)$. We can conclude the following.
    \begin{align*}
  \transe(\initLoc,w') &= \transe(\initLoc,w\cdot a) & (\text{$w' = w \cdot a$})   \\
          &=  \transe(\transe(\initLoc,w),a) & (\text{the definition of $\transe$}) \\
          &= \transe(\row(w),a) & (\text{induction hypothesis}) \\
           &= \transe(\row(s),a) &(\text{$\row(w) = \row(s)$})  \\
          &= \row(s\cdot a) &(\text{by \cref{construction}})\\
           &= \row(w\cdot a) & (\text{by \cref{closed-consistent}})\\
          &= \row(w') & (\text{$w\cdot a = w'$})
\end{align*}

\end{proof}

 Here, we prove \cref{theo:comp}, the evidence compatibility of the evidence automaton $\hypothesisMe$.  This lemma shows that every cell of the observation table corresponds to the output of the evidence Mealy automaton.
\newcommand{\evidencecompatibilityStatement}{
 Let $T=(\domain, S, R, \Sigma_E, E, f)$ be a cohesive observation table. A Mealy automaton $\eMealy$ is \emph{evidence compatible} with $T$ if for all $s \in S\cup R$, for all $e \in \Sigma_E \cup E$, $\eMealy(s\cdot e) = f(s, e)$. For the evidence Mealy automaton $\hypothesisMe$ constructed from $T$, $\hypothesisMe$ is evidence compatible with $T$.}
\begin{lemma}
[Evidence Compatibility]\label{theo:comp}
\evidencecompatibilityStatement{}

\end{lemma}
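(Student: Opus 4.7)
The plan is to reduce the statement $\hypothesisMe(s \cdot e) = f(s, e)$ to the simpler identity $\outpe(\row(s), e) = f(s, e)$ via \cref{lemma:reach}, and then prove the latter by induction on the length of $e \in \Sigma_E \cup E$. First, I would establish by a straightforward induction on $|s|$ the standard decomposition $\outpe(\initLoc, s \cdot e) = \outpe(\transe(\initLoc, s), e)$, which holds for any $s \in \Sigma_E^*$ and any nonempty $e \in \Sigma_E^+$ (noting that by output-closedness and an inductive argument on how $E$ is populated, $\Sigma_E \cup E \subseteq \Sigma_E^+$, so $s \cdot e$ is a well-formed input word for $\hypothesisMe$). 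Combining this decomposition with \cref{lemma:reach}, which gives $\transe(\initLoc, s) = \row(s)$ for all $s \in S \cup R$, turns the goal into $\outpe(\row(s), e) = f(s, e)$.

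Next, I would prove $\outpe(\row(s), e) = f(s, e)$ by induction on $|e|$. In the base case $|e| = 1$, necessarily $e \in \Sigma_E$; the construction in \cref{construction} defines $\outpe(\row(s^*), e) = f(s^*, e)$ whenever $s^* \in S$, and for $s \in R$ I reduce to a representative $s^* \in S$ with $\row(s^*) = \row(s)$ (which exists by closedness), noting that $\row(s^*) = \row(s)$ forces $f(s^*, e) = f(s, e)$ for every $e \in \Sigma_E \cup E$. In the inductive step $|e| \geq 2$, we must have $e \in E$, so that $e = a \cdot e'$ with $a \in \Sigma_E$ and $e' \in \Sigma_E \cup E$ by suffix-closedness of $\Sigma_E \cup E$. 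After reducing to $s \in S$ as above, evidence-closedness gives $s \cdot a \in S \cup R$, so that
\[
  \outpe(\row(s), a \cdot e') = \outpe(\transe(\row(s), a), e') = \outpe(\row(s \cdot a), e'),
\]
where the last equality uses \cref{construction}. Applying the induction hypothesis to $s \cdot a$ and $e'$ yields $\outpe(\row(s \cdot a), e') = f(s \cdot a, e')$, and the identity $f(s \cdot a, e') = \targetM(s \cdot a \cdot e') = f(s, a \cdot e')$ closes the loop.

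The main obstacle is bookkeeping around the $S$/$R$ distinction and the single-character-vs-word distinction between $\Sigma_E$ and $E$. Specifically, \cref{construction} defines $\outpe$ only through representative prefixes $s^* \in S$, so in both the base case and the inductive step I must invoke closedness to pick such a representative before inspecting $\outpe$, and then recover the original $s$ using the fact that $\row(s) = \row(s^*)$ implies $f(s, e) = f(s^*, e)$ for all $e \in \Sigma_E \cup E$. Evidence-closedness is precisely what keeps the inductive descent from $e = a \cdot e'$ to $e'$ inside $S \cup R$, so that $\row(s \cdot a)$ is a state of $\hypothesisMe$ and the induction hypothesis applies; \cref{closed-consistent} is essentially what ensures this choice of representative is coherent across the two cases.
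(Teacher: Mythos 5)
Your proof is correct and follows essentially the same route as the paper's: induction on the length of $e \in \Sigma_E \cup E$, using closedness to pass to a representative in $S$, \cref{lemma:reach} to identify $\transe(\initLoc,\cdot)$ with $\row(\cdot)$, the definition in \cref{construction} for the base case and the one-step descent, evidence-closedness to keep $s\cdot a$ in $S\cup R$, and the identity $f(s\cdot a,e')=\targetM(s\cdot a\cdot e')=f(s,a\cdot e')$ to close the induction. The only (harmless) difference is that you explicitly isolate the decomposition $\outpe(\initLoc,s\cdot e)=\outpe(\transe(\initLoc,s),e)$ as a preliminary step, which the paper leaves implicit in its inductive statement and proves separately as \cref{evidence-sufft}.
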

\begin{proof}
    We prove this lemma by induction on the length of $e \in \Sigma_E \cup E$. If the length of $e$ is $1$, then $e \in \Sigma_E$. Since $T$ is closed, for all $w \in S\cup R$, there exists $s \in S$ such that $\row(w) = \row(s)$.
    \begin{align*}
        \outpe(\transe(\initLoc,w),e) &= \outpe(\row(w),e) & (\text{by \cref{lemma:reach}}) \\
        &= \outpe(\row(s),e) & (\text{$\row(w) = \row(s)$})\\
         &= f(s, e) & (\text{by \cref{construction}}) \\ 
        &= f(w, e) & (\text{by $\row(w) = \row(s)$})
    \end{align*}
   We assume that for all $w \in S\cup R$, for all $e \in \Sigma_E \cup E$ of length $k$,  $\outpe(\transe(\initLoc,w),e) = f(w, e)$ holds. Let $e' = a \cdot e\in \Sigma_E \cup E$ be a word of length $k + 1$, where $a \in \Sigma_E$ is a character and $e \in \Sigma_E \cup E$ is a word of length $k$. It is easily seen that $a \in \Sigma_E$ and $e \in \Sigma_E \cup E$ since $\Sigma_E \cup E$ is suffix-closed and $E \subseteq \Sigma_E^*$ holds. Since $T$ is closed, there exists a word $s \in S$ such that $\row(w) =\row(s)$. We can conclude the following.
      \begin{align*}
        \outpe(\transe(\initLoc,w),e') &= \outpe(\transe(\initLoc,w),a\cdot e) & (\text{$e' = a \cdot e$}) \\
        &= \outpe(\row(w),a\cdot e) & (\text{by \cref{lemma:reach}})\\
        &= \outpe(\row(s),a\cdot e) & (\text{$\row(w) = \row(s)$})\\        
        &= \outpe(\transe(\row(s),a), e) & (\text{the definition of $\outpe$})\\
        &= \outpe(\row(s\cdot a), e) & (\text{by \cref{construction}})\\
        &= \outpe(\transe(\initLoc,s\cdot a), e) & (\text{by \cref{lemma:reach}})\\
        &= f(s\cdot a,  e) & (\text{induction hypothesis})\\
        &= \hypothesisMe(s\cdot a \cdot e) & (\text{the definition of $T$}) \\
        &= f(s,a \cdot e)  & (\text{the definition of $T$})  \\
        &= f(s, e') & (\text{$a\cdot e = e'$})\\
        &= f(w, e') & (\text{by $\row(s) = \row(w)$})
    \end{align*}
    
    \end{proof}

Finally, we prove \cref{theo:symbcomp}, the symbolic compatibility of the s-MA $\hypothesisM$.
\recallResult{theo:symbcomp}{\symboliccompatibilityStatement}
\begin{proof}
    Let $\hypothesisMe$ be the evidence Mealy automaton constructed from $T$. By \cref{partition}, for all $w \in \Sigma_E^+$, $\hypothesisMe(w) = \hypothesisM(w)$. Since $\hypothesisMe$ is evidence compatible with $T$, for all $s \in S\cup R$, for all $e \in \Sigma_E \cup E$, $\hypothesisMe(s\cdot e) = f(s, e)$. Thus for all $s \in S\cup R$, for all $e \in \Sigma_E \cup E$, $\hypothesisM(s\cdot e) = f(s,e)$.

\end{proof}

\subsection{Proof of \cref{theo:symbolicminimality}}
Here, we prove \cref{theo:symbolicminimality}. Firstly, we show the following lemma.
\begin{lemma}\label{symbolic-sufft}
    Let $\M = (\Algebra,\Loc,\initLoc,\OUTPUT,\delta)$ be an s-MA. For all $x\in \domain^*$ and $w\in \domain^+$, $\M(x\cdot w) = \outp(\trans(\initLoc,x),w)$.
\end{lemma}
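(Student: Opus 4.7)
The plan is to prove the identity by induction on the length of the prefix $x \in \domain^*$, relying only on the recursive definitions of the extended transition function $\trans$ and the extended output function $\outp$ given earlier in \cref{section:preliminaries}, together with the definition $\M(w) := \outp(\initLoc, w)$.

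For the base case $x = \epsilon$, the left-hand side is $\M(\epsilon \cdot w) = \M(w) = \outp(\initLoc, w)$, while the right-hand side is $\outp(\trans(\initLoc, \epsilon), w) = \outp(\initLoc, w)$ since $\trans(q, \epsilon) = q$ by definition. Hence the two sides agree.

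For the inductive step, I would fix $x \in \domain^*$ for which the statement holds for every $w \in \domain^+$, and consider $x' = x \cdot a$ with $a \in \domain$. For any $w \in \domain^+$, the word $a \cdot w$ belongs to $\domain^+$, so the induction hypothesis applies with prefix $x$ and suffix $a \cdot w$ to yield $\M(x \cdot a \cdot w) = \outp(\trans(\initLoc, x), a \cdot w)$. Unfolding the definition of $\outp$ on a word beginning with $a$ gives $\outp(\trans(\initLoc, x), a \cdot w) = \outp(\trans(\trans(\initLoc, x), a), w)$, and then the recursive definition of $\trans$ on words rewrites $\trans(\trans(\initLoc, x), a)$ as $\trans(\initLoc, x \cdot a) = \trans(\initLoc, x')$. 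Combining these equalities yields $\M(x' \cdot w) = \outp(\trans(\initLoc, x'), w)$, completing the induction.

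No real obstacle is expected here: the statement is essentially a compositionality lemma for the extended transition and output functions, and the proof is a straightforward unfolding of the definitions once the induction is set up on $x$ (rather than on $w$, which would be awkward because the output function only operates on non-empty words). The only point requiring mild care is ensuring that the induction hypothesis is instantiated with a non-empty suffix, which is guaranteed since $a \cdot w \in \domain^+$ whenever $w \in \domain^+$. \qed
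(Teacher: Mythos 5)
Your proposal is correct and follows essentially the same route as the paper's proof: induction on the length of $x$, with the inductive step unfolding the recursive definitions of $\outp$ and $\trans$ to shift one character from the prefix into the suffix (you merely write the chain of equalities in the reverse direction). No gaps.
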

\begin{proof}
 We prove this lemma by induction on the length of $x$. If the length of $x$ is $0$, it is clear that $\M(w) = \outp(\trans(\initLoc,\epsilon),w)$.
     We assume that for all $x \in \domain^*$ of length $k$, and for all $w \in \domain^+$, $\M(x\cdot w) = \outp(\trans(\initLoc,x),w)$ holds.
     Let $x' = x \cdot a \in \domain^*$ be a word of length $k+1$, where $x \in \domain^*$
     is a word of length $k$ and $a \in \domain$ is a character. We have the following.
     \begin{align*}
 \outp(\trans(\initLoc,x'),w) &= \outp(\trans(\initLoc,x\cdot a),w)  & (\text{$x' = x\cdot a$})\\
          &=  \outp(\trans(\initLoc,x\cdot a),w) & (\text{the definition of $\M$})\\
          &=  \outp(\trans(\trans(\initLoc,x),a),w)& (\text{the definition of $\trans$})\\
          &=  \outp(\trans(\initLoc,x),a\cdot w) & (\text{the definition of $\outp$})\\
          &= \M(x\cdot a \cdot w) & (\text{induction hypothesis})\\
          &= \M(x'\cdot w) & (\text{$x\cdot a = x'$})
\end{align*}

\end{proof}

The following proves \cref{theo:symbolicminimality}.
\recallResult{theo:symbolicminimality}{\symbolicminimalityStatement}
\begin{proof}
    Let $T = (\domain, S, R, \Sigma_E, E, f)$.
    We prove this theorem by contradiction.
    To obtain a contradiction, suppose that there is an s-MA ${\M}' = (\domain,\Loc',\initLoc', \OUTPUT,{\transition}')$ compatible with $T$ such that the number of states of ${\M}'$ is less than that of $\hypothesisM$.
    Since we have $|\Prefixes| = |\Loc| > |\Loc'|$, by the pigeon hole principle,
    there exists $s_1,s_2 \in S$ such that $\row(s_1) \neq \row(s_2)$ but ${\trans}'(\initLoc',s_1) = {\trans}'(\initLoc',s_2)$. Since $\row(s_1) \neq \row(s_2)$ holds, there exists $e \in \Sigma_E \cup E$ such that $f(s_1,e) \neq f(s_2,e)$. Since ${\M}'$ is symbolic compatible with $T$, we have ${\M}'(s_1\cdot e) = f(s_1,e)$ and  ${\M}'(s_2\cdot e) = f(s_2,e)$.
    By $f(s_1,e) \neq f(s_2,e)$, we have ${\M}'(s_1\cdot e) \neq {\M}'(s_2\cdot e)$.
    In contrast,
    by \cref{symbolic-sufft}, we have
    ${\M}'(s_1 \cdot e) = {\outp}'({\trans}'(\initLoc',s_1),e)$ and ${\M}'(s_2 \cdot e) = {\outp}'({\trans}'(\initLoc',s_2),e)$. By ${\trans}'(\initLoc',s_1) = {\trans}'(\initLoc',s_2)$, we have ${\M}'(s_1 \cdot e) = {\M}'(s_2 \cdot e)$. However, this contradicts  ${\M}'(s_1\cdot e) \neq {\M}'(s_2\cdot e)$. 
    Therefore, the s-MA $\hypothesisM$ constructed from $T$ has the minimum number of states among s-MAs compatible with $T$. 
    \end{proof}

\subsection{Preservation Property of \cref{alg2}}

The following shows that \cref{alg2} preserves the requirements for observation tables in \cref{mot} as well as guarantees that all characters appearing in the observation table are in $\Sigma_E$.

\begin{theorem}
    Let $T = (\domain, S, R, \Sigma_E, E, f)$ be an observation table at \cref{construct-hypothesisM} of \cref{alg2}. 
    The following statements hold: \begin{ienumeration}
    \item $S\cup R$ is prefix-closed;
    \item $\Sigma_E \cup E$ is suffix-closed;
    \item $\epsilon \in S$ and $\Sigma_E$ is non-empty;
    \item $S,R,E \subseteq \Sigma_E^*$.
\end{ienumeration} 
\end{theorem}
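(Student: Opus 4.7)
The plan is to prove the four properties by induction on the sequence of table updates performed by \cref{alg2} up to \cref{construct-hypothesisM}. For the base case, I would directly verify that the initial observation table, with $S = \{\emptyword\}$, $R = \{a\}$, $\Sigma_E = \{a\}$, and $E = \emptyset$ for the $a \in \domain$ picked on \cref{initializeM}, satisfies all four properties simultaneously: $S \cup R = \{\emptyword, a\}$ is prefix-closed, $\Sigma_E \cup E = \{a\}$ is suffix-closed, $\emptyword \in S$ and $\Sigma_E$ is non-empty, and $S, R, E \subseteq \{a\}^* = \Sigma_E^*$.

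For the inductive step, I would check each update---$\FMakeClosed$, $\FMakeConsistent$, $\FMakeEvidenceClosed$, $\FMakeOutputClosed$, and counterexample processing---individually. Properties (1)--(3) are straightforward invariants: $\FMakeClosed$ merely transfers a word between $S$ and $R$ and does not touch $\Sigma_E$ or $E$; $\FMakeEvidenceClosed$ appends $s \cdot e$ with $s \in S \subseteq S \cup R$ and $e \in \Sigma_E$ to $R$, whose proper non-empty prefixes all lie in $S \cup R$ by the hypothesis on (1); $\FMakeConsistent$ appends $a \cdot e$ to $E$, whose non-empty suffixes are either $a \cdot e$ itself or suffixes of $e \in \Sigma_E \cup E$, the latter being in $\Sigma_E \cup E$ by the hypothesis on (2); $\FMakeOutputClosed$ only enlarges $\Sigma_E$; and counterexample processing explicitly inserts every missing prefix of $\cex$ into $R$. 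Property (3) is monotone under all updates.

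For property (4), I would split the inclusion into $S \cup R \subseteq \Sigma_E^*$ and $E \subseteq \Sigma_E^*$. The first follows at \cref{construct-hypothesisM} from the cohesiveness of the table: by prefix-closedness every non-empty $w \in S \cup R$ decomposes as $w' \cdot a$ with $w' \in S \cup R$, and by output-closedness $a \in \Sigma_E$; an induction on $|w|$ then yields $w \in \Sigma_E^*$. The second inclusion I would carry as an invariant throughout the execution. Since $E$ changes only through $\FMakeConsistent$, it suffices to argue that each inserted word $a \cdot e$ lies in $\Sigma_E^*$; the inductive hypothesis on $E$ gives $e \in \Sigma_E \cup E \subseteq \Sigma_E^*$, so the remaining obligation is to ensure $a \in \Sigma_E$ at each call to $\FMakeConsistent$.

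This last obligation is the main obstacle, because a naive reading of the inner loop allows $\FMakeConsistent$ to be triggered with a trailing character $a \notin \Sigma_E$ right after counterexample processing has inserted fresh characters into $R$. My plan is to strengthen the invariant by asserting that, at every point in the execution, whenever distinct $w_1, w_2 \in S \cup R$ with $\row(w_1) = \row(w_2)$ satisfy $w_1 \cdot a, w_2 \cdot a \in S \cup R$ and $\row(w_1 \cdot a) \neq \row(w_2 \cdot a)$, the character $a$ already belongs to $\Sigma_E$. Establishing this requires a case analysis on the provenance of $w_1 \cdot a$ and $w_2 \cdot a$ in $S \cup R$ (initial table, $\FMakeEvidenceClosed$, or counterexample prefixes), using the priority order of the inner loop to show that output-closedness violations at inconsistency witnesses are repaired before $\FMakeConsistent$ is ever invoked with them. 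Once this strengthened invariant is in hand, the preservation of $E \subseteq \Sigma_E^*$ is immediate, and combined with the argument above for $S \cup R$, property (4) follows at \cref{construct-hypothesisM}.
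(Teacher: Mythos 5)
Your treatment of properties (1)--(3) and of the inclusion $S \cup R \subseteq \Sigma_E^*$ is correct and matches the paper's argument: the paper likewise observes that only counterexample processing and $\FMakeEvidenceClosed$ add words to $S\cup R$, that only $\FMakeOutputClosed$ and $\FMakeConsistent$ enlarge $\Sigma_E \cup E$, that (3) is settled by the initialization, and that at \cref{construct-hypothesisM} output-closedness together with prefix-closedness forces every character occurring in $S \cup R$ into $\Sigma_E$.

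The gap is in your plan for $E \subseteq \Sigma_E^*$. You propose to carry this as a pointwise invariant and to discharge the obligation ``$a \in \Sigma_E$ at each call to $\FMakeConsistent$'' via a strengthened invariant, justified by the claim that the priority order of the inner loop repairs output-closedness violations before $\FMakeConsistent$ is invoked. This misreads the loop: in \cref{alg2} the consistency check \emph{precedes} the output-closedness check, so $\FMakeConsistent$ runs on a table that is closed but possibly not yet output-closed. Concretely, after a counterexample $\cex$ is processed, two prefixes $w_1\cdot a$ and $w_2 \cdot a$ of $\cex$ ending in the same fresh character $a \notin \Sigma_E$ can witness an inconsistency, in which case $a \cdot e$ is added to $E$ while $a \notin \Sigma_E$; your strengthened invariant is false at that moment, and no reordering argument will rescue it. The statement still holds because it is only asserted at \cref{construct-hypothesisM}: every word of $E$ has the form $a \cdot e$ where $a$ is the last character of some word of $S \cup R$ and $e$ was already in $\Sigma_E \cup E$; since $S\cup R$ and $\Sigma_E$ only grow and the table is output-closed at that line, every such $a$ lies in $\Sigma_E$ there, and an induction on the length of words in $E$ (this is essentially how the paper argues, combining $S,R \subseteq \Sigma_E^*$ with suffix-closedness of $\Sigma_E \cup E$) yields $E \subseteq \Sigma_E^*$. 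So the fix is to abandon the pointwise invariant and evaluate the inclusion only at the cohesive point, exactly as you already do for $S \cup R \subseteq \Sigma_E^*$.
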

\begin{proof}

The operations that add words to $S\cup R$ are when adding all prefixes of a counterexample and $\FMakeEvidenceClosed$. Neither does it violate the prefix-closedness of $S\cup R$. Therefore, $S\cup R$ is prefix-closed. 

$\FMakeOutputClosed$ and $\FMakeConsistent$ may add new words to $\Sigma_E \cup E$.
$\FMakeOutputClosed$ adds a character to $\Sigma_E$. Therefore, it does not violate the suffix-closedness of $\Sigma_E \cup E$. $\FMakeConsistent$ adds a word $a \cdot e$ such that $a \in \domain$ and $e \in \Sigma_E \cup E$. Therefore, it does not violate the suffix-closedness of $\Sigma_E \cup E$.

Since $\emptyword$ is added to $S$ in the initialization, $\emptyword \in S$ is obvious. 
It is clear that $\Sigma_E$ is non-empty since the initialization adds a character to $\Sigma_E$.

Since $T$ is output-closed and $S\cup R$ is prefix-closed, $S,R \subseteq \Sigma_E^*$ holds.
The only operation that adds a word to $E$ is $\FMakeConsistent$, which concatenates the last character of a word in $S\cup R$ and a word in $E$ and adds it to $E$. From $S, R \subseteq \Sigma_E^*$, the last character of a word in $S\cup R$ is in $\Sigma_E$. Since $\Sigma_E \cup E$ is suffix-closed, $\Sigma_E \cup E \subseteq \Sigma_E^*$, which proves $E \subseteq \Sigma_E^*$. 

\end{proof}
\subsection{Proof of \cref{theorem:termination}}

Here, we show the proof of \cref{theorem:termination}. We have divided the proof into a sequence of lemmas.
Firstly, we prove that the loop from \cref{cohesiveM} that makes an observation table cohesive terminates.

\begin{lemma}\label{lemma:not-exceed}
     Let $T = (\domain, S, R, \Sigma_E, E, f)$ be an observation table in \cref{alg2}. Let $\targetM = (\Algebra,\Loc,\initLoc,\OUTPUT,\delta) $ be the target s-MA, with $\Loc = \{q_1, q_2, \dots, q_n\}$.
     If the oracle follows $g_1, g_2, \dots, g_n$ at $q_1, q_2, \dots, q_n$, we have $|S| \leq |\Loc|$ and $\Sigma_E \subseteq \SigmaEf$. 
\end{lemma}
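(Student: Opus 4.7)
I would prove the two claims separately by induction on the execution of \cref{alg2}, and I would set up a single invariant that captures both: at every point in the run, (a) the rows indexed by elements of $S$ are pairwise distinct, and (b) every character that occurs anywhere in $S\cup R$ lies in $\SigmaEf$. Once (a) and (b) are established, $|S|\leq|\Loc|$ follows by a pigeon-hole argument against $\targetM$, and $\Sigma_E\subseteq\SigmaEf$ follows because every character ever added to $\Sigma_E$ is a character occurring in $S\cup R$.

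For part 1 ($|S|\leq|\Loc|$), I would first verify the invariant that elements of $S$ have pairwise distinct rows. This is true at initialization (since $S=\{\epsilon\}$) and is preserved by every operation: $\FMakeClosed$ is the only line that enlarges $S$, and by its precondition it moves to $S$ only a word $r\in R$ with $\row(r)\neq\row(s)$ for every $s\in S$; none of the other operations remove from or add to $S$, although they may refine the row function. When a column is added (via $\FMakeConsistent$ or $\FMakeOutputClosed$), rows can only split, not merge, so distinctness is preserved. Now suppose for contradiction that $|S|>|\Loc|$. By the pigeon-hole principle there exist distinct $s_1,s_2\in S$ with $\trans^{\targetM}(\initLoc,s_1)=\trans^{\targetM}(\initLoc,s_2)$. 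Pick any $e\in\Sigma_E$ (nonempty by \cref{mot}); then by \cref{symbolic-sufft}-style reasoning applied to $\targetM$, $\targetM(s_1\cdot e)=\targetM(s_2\cdot e)$, hence $f(s_1,e)=f(s_2,e)$. Doing this for every column $e\in\Sigma_E\cup E$ yields $\row(s_1)=\row(s_2)$, contradicting distinctness.

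For part 2 ($\Sigma_E\subseteq\SigmaEf$), I would track how characters can enter the table. Characters can enter $\Sigma_E$ only via the initialization ($\Sigma_E\gets\{a\}$, where $a$ is the same character added to $R$) and via $\FMakeOutputClosed$, which only adds a character $a$ that already occurs as a one-step extension $w\cdot a\in S\cup R$. So it suffices to prove, by induction on the algorithm's trace, the auxiliary invariant that every character appearing in $S\cup R$ lies in $\SigmaEf$. The base case covers the initial character (which by the standing assumption on the oracle/generators is taken from $\SigmaEf$; concretely, the algorithm is understood to pick this character from the nonempty $\SigmaEf$, consistently with the ``follow'' convention). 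Inductive step: $S\cup R$ can grow only in two ways. (i) $\FMakeEvidenceClosed$ adds $s\cdot e$ with $s\in S\cup R$ and $e\in\Sigma_E$; by the induction hypothesis the characters of $s$ are in $\SigmaEf$, and $e\in\Sigma_E\subseteq\SigmaEf$. (ii) Counterexample processing adds prefixes of some $\cex$; because the oracle follows $g_1,\dots,g_n$, we have $\cex\in(\SigmaEf)^+$, so each prefix uses only $\SigmaEf$ characters. In both cases the invariant is preserved, which gives $\Sigma_E\subseteq\SigmaEf$.

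The main obstacle, and the one I would argue most carefully, is part 2: the other operations ($\FMakeClosed$, $\FMakeConsistent$) reshape $S$, $R$, and $E$, so one must check that they do not sneak in fresh characters outside $\SigmaEf$ (they do not: $\FMakeClosed$ only moves a word from $R$ to $S$, and $\FMakeConsistent$ adds a suffix $a\cdot e$ to $E$ where both $a$ and the characters of $e$ come from existing table entries already known to lie in $\SigmaEf$ by the induction hypothesis). Part 1, in contrast, is essentially forced by the definition of $\FMakeClosed$ together with the determinism of $\targetM$; its only subtlety is the correct handling of the base case $s=\epsilon$, which is covered by using an $e\in\Sigma_E$ (guaranteed nonempty) rather than $e=\epsilon$ in the row comparison.
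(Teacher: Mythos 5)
Your proposal is correct and, for the bound $|S|\leq|\Loc|$, follows essentially the same route as the paper: pigeonhole against the states of $\targetM$, then a distinguishing column $e\in\Sigma_E\cup E$ contradicted by the fact that $s_1$ and $s_2$ reach the same state of $\targetM$ (the paper's \cref{symbolic-sufft} plays exactly the role you assign it). You are more careful than the paper in two places, and both are to your credit. First, you explicitly maintain the invariant that rows indexed by $S$ are pairwise distinct before invoking the pigeonhole; the paper uses this silently here and only proves it later as a separate reducedness lemma (\cref{reduced}). Second, for $\Sigma_E\subseteq\SigmaEf$ the paper offers only ``it follows immediately since the oracle follows the generators,'' whereas you give the actual inductive argument tracking how characters enter $S\cup R$ (via counterexamples in $(\SigmaEf)^+$ and via $\FMakeEvidenceClosed$) and hence $\Sigma_E$ (via initialization and $\FMakeOutputClosed$). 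In doing so you correctly flag the one genuine loose end: the initial character $a$ on \cref{initializeM} is chosen by the learner, not the oracle, so nothing in the ``follow'' definition forces $a\in\SigmaEf$; your explicit convention that it be drawn from $\SigmaEf$ is a reasonable patch, and the paper would need the same (or a harmless enlargement of $\SigmaEf$ by that one character) to make its one-line claim literally true.
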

\begin{proof}
    We prove $|S| \leq |\Loc|$ by contradiction. To obtain a contradiction, suppose that $|S| > |\Loc|$. By the pigeon hole principle, there exist $s_1 ,s_2 \in S$ such that $\row(s_1) \neq \row(s_2)$ but $\trans(\initLoc, s_1) = \trans(\initLoc,s_2)$. By $\row(s_1) \neq \row(s_2)$, there exists $e \in \Sigma_E \cup E$ such that $f(s_1,e) \neq f(s_2,e)$. In contrast, by $\trans(\initLoc, s_1) = \trans(\initLoc,s_2)$, for all $w \in \domain^+$, $\outp(\trans(\initLoc,s_1),w) = \outp(\trans(\initLoc, s_2),w)$. By symbolic compatibility, $f(s_1,e) = \targetM(s_1\cdot e) = \outp(\trans(\initLoc,s_1),w) = \outp(\trans(\initLoc, s_2),w) = \targetM(s_2\cdot e) = f(s_2,e)$. This contradicts $f(s_1,e) \neq f(s_2,e)$. Therefore, we have $|S| \leq |\Loc|$. It follows immediately that $\Sigma_E \subseteq \SigmaEf$ since the oracle follows $g_1, g_2, \dots, g_n$ at $q_1, q_2, \dots, q_n$.
    
\end{proof}

\begin{lemma}\label{termination-cohesive}
    Let $\targetM = (\Algebra,\Loc,\initLoc,\OUTPUT,\delta) $ be the target s-MA, with $\Loc = \{q_1, q_2, \dots, q_n\}$.
   If for all $q_i \in \Loc$, $\Algebra$ is $s_{g_i}$-learnable with $\partitioning$, and the oracle follows $g_1, g_2, \dots, g_n$ at $q_1, q_2, \dots, q_n$, the loop that makes an observation table cohesive always terminates.
\end{lemma}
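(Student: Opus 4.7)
The plan is to introduce a lexicographic progress measure $\mu(T) = (|S|, |\Sigma_E|, |E|, |R|)$ on the observation table, show that each of the four operations $\FMakeClosed$, $\FMakeOutputClosed$, $\FMakeConsistent$, $\FMakeEvidenceClosed$ strictly increases $\mu$ in the lexicographic order on $\N^4$, and verify that $\mu$ takes values in a finite subset of $\N^4$. Termination of the inner loop then follows because a strictly increasing sequence in a finite totally ordered set is finite.

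First I would check strict monotonicity by case analysis on the four operations. $\FMakeClosed$ moves some $r$ from $R$ to $S$, so the first coordinate of $\mu$ strictly increases (the simultaneous drop in the fourth coordinate is harmless under lex order because once a higher-priority coordinate grows, the lower ones are allowed to be arbitrary). $\FMakeOutputClosed$ leaves $S$ fixed and adds a new character to $\Sigma_E$, so the second coordinate strictly increases while the first is preserved. $\FMakeConsistent$ appends a fresh suffix to $E$ without touching $S$ or $\Sigma_E$. $\FMakeEvidenceClosed$ only enlarges $R$. In all four cases $\mu(T)$ strictly grows in the lexicographic order.

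Next I would bound each coordinate. By \cref{lemma:not-exceed}, $|S| \leq |\Loc|$ and $\Sigma_E \subseteq \SigmaEf$; the $s_{g_i}$-learnability hypothesis makes $|\SigmaEf|$ finite, so $|\Sigma_E|$ is bounded. For $|E|$, each call to $\FMakeConsistent$ strictly increases the number of $\row$-equivalence classes on $S \cup R$, because the fresh column $a \cdot e$ separates two rows that were previously identical. The number of distinct rows cannot exceed $|\Loc|$: if $\trans_{\targetM}(\initLoc, w_1) = \trans_{\targetM}(\initLoc, w_2)$ then $\targetM(w_1 \cdot e) = \targetM(w_2 \cdot e)$ for every non-empty suffix $e$, so $\row(w_1) = \row(w_2)$; this is the same pigeonhole argument used in \cref{lemma:not-exceed}, simply extended from $S$ to $S \cup R$. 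Finally, within one inner-loop execution every word added by $\FMakeEvidenceClosed$ has the form $s \cdot e$ with $s \in S$ and $e \in \Sigma_E$, giving at most $|\Loc| \cdot |\SigmaEf|$ fresh additions on top of the finite initial $|R|$.

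The genuinely substantive step is the bound on $|E|$: it requires upgrading the pigeonhole argument of \cref{lemma:not-exceed}, which was stated for $S$, into a bound on the total number of distinct row signatures in $S \cup R$. The remaining coordinate bounds are either immediate from \cref{lemma:not-exceed} or pure counting. Once strict lex monotonicity and the finite range of $\mu$ are both established, the termination of the inner loop is an immediate consequence of well-foundedness.
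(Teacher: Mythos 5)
Your proof is correct and takes essentially the same route as the paper's: both arguments bound the number of applications of each of the four operations via $|S| \leq |\Loc|$ and $\Sigma_E \subseteq \SigmaEf$ (finite by $s_{g_i}$-learnability) from \cref{lemma:not-exceed}, a pigeonhole bound on the number of distinct rows for \FMakeConsistent{}, and the $|\Loc| \times |\SigmaEf|$ count for \FMakeEvidenceClosed{}. Your lexicographic packaging is a cosmetic reorganization, though your explicit extension of the row-counting argument from $S$ to $S \cup R$ makes the \FMakeConsistent{} case cleaner than the paper's terse ``this leads to an increase in the number of $S$.''
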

\begin{proof}
There are four operations that make the observation table cohesive. For each of them, we show that it terminates. 

    If the observation table is not closed, there is a row in $R$ of the table that is different from all rows in the $S$. The operation $\FMakeClosed$ (\cref{closeM}) adds this row to $S$. This leads to an increase in the number of $S$ increases. From \cref{lemma:not-exceed}, we have $|S| \leq |Q|$. Therefore, it always terminates.

    If the observation table is not consistent, there is a word that distinguishes two equal rows in $S\cup R$. The operation $\FMakeConsistent$ (\cref{make-consisitentM}) adds this word to $E$. This leads to an increase in the number of $S$. From $|S| \leq |Q|$, it always terminates.
    
    The number of applications of the operation $\FMakeOutputClosed$ (\cref{output-closeM}) is less than or equal to the number of characters that appear in counterexamples.  Since for all $q_i \in \Loc$, $\Algebra$ is $s_{g_i}$-learnable with $\partitioning$, the set of characters that appear in counterexamples, \ie{} $\SigmaEf$ is finite. Therefore, it always terminates.
    
    Since for all $q_i \in \Loc$, $\Algebra$ is $s_{g_i}$-learnable with $\partitioning$, $\SigmaEf$ is finite. 
    Since $|S| \leq |Q|$ and $\Sigma_E \subseteq \SigmaEf$, the operation $\FMakeEvidenceClosed$ (\cref{evidence-closeM}) always terminates.
    \end{proof}

    We next prove that the loop from \cref{mainloopM} that constructs a hypothesis always terminates. Firstly, we prove the loop from \cref{mainloopM} increases either $S$ or $\Sigma_E$.
    \begin{lemma}\label{lemma:strictly-increase}
   Let $\targetM = (\Algebra,\Loc,\initLoc,\OUTPUT,\delta) $ be the target s-MA, with $\Loc = \{q_1, q_2, \dots, q_n\}$.
   If for all $q_i \in \Loc$, $\Algebra$ is $s_{g_i}$-learnable with $\partitioning$ and the oracle follows $g_1, g_2, \dots, g_n$ at $q_1, q_2, \dots, q_n$, the loop that constructs a hypothesis strictly increases either the number of states of hypothesis or the number of essential characters. 
    \end{lemma}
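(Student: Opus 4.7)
The plan is to examine one iteration of the main loop—from the equivalence query returning a counterexample $\cex$, through adding $\prefixes(\cex)$ to $R$ and re-cohering the table, to the construction of the next hypothesis—and show that either $|S|$ or $|\Sigma_E|$ strictly grows. I will rely on the invariant that the rows $\{\row(s)\}_{s \in S}$ are pairwise distinct (preserved because \FMakeClosed{} only promotes $r \in R$ whose row signature differs from those already in $S$, and no other operation merges rows), so that the number of hypothesis states equals $|S|$. The proof splits on whether every character of $\cex$ lies in $\Sigma_E$.

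If $\cex$ contains some $a \notin \Sigma_E$, let $\cex' \cdot a$ be its shortest prefix ending in such a character. Adding $\prefixes(\cex)$ to $R$ puts both $\cex'$ and $\cex' \cdot a$ into $S \cup R$, violating output-closedness; hence \FMakeOutputClosed{} must fire and $|\Sigma_E|$ strictly increases. Otherwise $\cex \in \Sigma_E^+$, and I argue by contradiction, supposing that after re-cohesion both $|S|$ and $|\Sigma_E|$ are unchanged. Writing the new table components with primes, $S' = S$, $\Sigma_E' = \Sigma_E$, while $R' \supseteq R \cup \prefixes(\cex)$ and $E' \supseteq E$. The critical step is to show that $\hypothesisMeNext$ is isomorphic to $\hypothesisMe$ via $\phi(\row(s)) := \row'(s)$. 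Well-definedness and bijectivity of $\phi$ follow from the distinct-rows invariant on $S$ together with $|S'| = |S|$. For transition preservation, fix $s \in S$ and $a \in \Sigma_E$, let $s'' \in S$ be the unique closedness witness with $\row(s \cdot a) = \row(s'')$ in the old table, and let $s''_{\text{new}} \in S$ be the witness in the new table. Because $E' \supseteq E$, the function $\row'$ is a refinement of $\row$, so $\row'(s''_{\text{new}}) = \row'(s \cdot a)$ yields $\row(s''_{\text{new}}) = \row(s \cdot a) = \row(s'')$; distinctness in $S$ then gives $s''_{\text{new}} = s''$. Outputs coincide because they are determined by $f$ restricted to $\Sigma_E$, which is unchanged.

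Since the list $L_q$ fed to the partitioning function is determined by $S$, $\Sigma_E$, and the restriction of $f$ to $\Sigma_E$—all the same—the stability condition~\ref{item:partitioning-stability} in \cref{partition} ensures $\partitioning$ returns the same partition at each state. Hence $\hypothesisMNext$ and $\hypothesisM$ compute the same function. Applying \cref{theo:symbcomp} to the new cohesive table with $\cex'$, the prefix of $\cex$ of length $|\cex| - 1$ (which lies in $R'$), and the last character $a_k \in \Sigma_E$ gives $\hypothesisMNext(\cex) = f(\cex', a_k) = \targetM(\cex)$; therefore $\hypothesisM(\cex) = \targetM(\cex)$, contradicting that $\cex$ was a counterexample. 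The main obstacle is the transition-preservation argument: even though \FMakeConsistent{} can refine rows by extending $E$, the inclusion $E' \supseteq E$ combined with the distinct-rows invariant on $S$ forces the closedness witnesses to remain stable, which is what keeps the entire hypothesis—not only its state count—unchanged.
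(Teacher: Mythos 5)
Your proof is correct, and it reaches the conclusion by a genuinely more direct route than the paper. The case split is the same (on whether $\cex \in \Sigma_E^+$), and your first case coincides with the paper's. For the case $\cex \in \Sigma_E^+$, however, the paper proceeds through an abstract detour: it proves that the evidence Mealy automaton is \emph{minimal} among all Mealy automata evidence-compatible with $T$ (\cref{evidence-minimality}) and that any evidence-compatible automaton with the \emph{same} number of states computes the same function on $\Sigma_E^+$ (\cref{uniqueness}, built from a chain of bijection- and transition-preservation lemmas), then observes that $\hypothesisMeNext$ is evidence-compatible with the old table but disagrees with $\hypothesisMe$ on $\cex$, so by \cref{smallest} it must have strictly more states. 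You instead exploit the concrete monotone relationship between the two tables ($S' = S$, $\Sigma_E' = \Sigma_E$, $E' \supseteq E$, hence $\row'$ refines $\row$) to show that the closedness witnesses, and therefore the entire constructed hypothesis, are literally unchanged; the contradiction then comes from symbolic compatibility (\cref{theo:symbcomp}) of the \emph{new} hypothesis with the \emph{new} table at the cell $(\cex', a_k)$. Your argument is shorter and avoids quantifying over arbitrary evidence-compatible automata, at the cost of being specialized to the actual pair $(T, T')$; the paper's machinery is heavier but yields reusable uniqueness/minimality facts. Two small points you should make explicit if you write this up: the identification $|S'|=|S| \Rightarrow S'=S$ relies on the fact that the algorithm never removes elements from $S$ (likewise for $\Sigma_E$), and the claim that the partitioning function returns identical partitions rests on the lists $L_q$ being \emph{literally} equal (same sets in the same order, since the enumeration of states and outputs is unchanged), so condition~\ref{item:partitioning-stability}) of \cref{partition} is not even needed there.
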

    \begin{proof}
    Let $\hypothesisM^e = (\Sigma_E,\Loc,\initLoc,\OUTPUT,\transe,\outpe)$ 
 be an evidence Mealy automaton and  $\hypothesisM = (\Algebra,\Loc,\initLoc,\OUTPUT,\delta)$ be the s-MA constructed from an observation table $T = (\domain, S, R, \Sigma_E, E, f)$. The oracle returns a counterexample $\cex \in \domain^+$ such that $\targetM(\cex) \neq \hypothesisM(\cex)$. Then $\LambdaM$ adds all prefixes of $\cex$ to $R$. Let $T' = (\domain, S', R', \Sigma_E', E', f')$ the observation table after making cohesive. Let  $\hypothesisMeNext = (\Sigma_E',\Loc',\initLoc',\OUTPUT,{\transe}',{\outpe}')$ and $\hypothesisMNext = (\Algebra,\Loc',\initLoc',\OUTPUT,{\delta}')$ be the next evidence Mealy automaton and the s-MA.

    There are two cases for the counterexample $\cex$: the case $\cex \notin \Sigma_E^+$ and the case $\cex \in \Sigma_E^+$.

    Firstly, we consider the case $\cex \notin \Sigma_E^+$. In this case, we show that $(\SigmaEf \cap \Sigma_E) \subsetneq (\SigmaEf \cap \Sigma_E')$. By $\cex = a_1 \ldots a_m \notin \Sigma_E^+$, there exists $i \in \{1,\ldots,m\}$ such that $a_i \notin \Sigma_E$. Since the equivalence oracle follows generators, for any $i \in \{1,\dots,m\}$, we have $a_i \in \SigmaEf$. Since all prefixes of the counterexample $\cex$ are added to $R$, the operation $\FMakeOutputClosed$ adds $a_i$ to $\Sigma_E$. Therefore we have $(\SigmaEf \cap \Sigma_E) \subsetneq (\SigmaEf \cap \Sigma_E')$.

    We now turn to the case $\cex \in \Sigma_E^+$. In this case, we show that the number of states of $\hypothesisMeNext$ is greater than that of $\hypothesisMe$, \ie{} $|S'| > |S|$. We have divided the proof into a sequence of lemmas.
\end{proof}

Firstly, we show the following lemmas.
\begin{lemma}\label{evidence-sufft}
    Let $\M = (\INPUT, \Loc, \initLoc, \OUTPUT, \trans, \outp)$ be a Mealy automaton.
    For all $x\in \INPUT^*$ and $w\in \INPUT^+$, $\M(x\cdot w) = \outp(\trans(\initLoc,x),w)$.
\end{lemma}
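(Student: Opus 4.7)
The plan is to mirror the proof of \cref{symbolic-sufft} almost verbatim, since the statement is the non-symbolic analog. I would proceed by induction on the length of $x \in \INPUT^*$.

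For the base case $|x| = 0$, we have $x = \epsilon$, so by the definition of the extended transition function $\trans(\initLoc, \epsilon) = \initLoc$. Hence $\outp(\trans(\initLoc, \epsilon), w) = \outp(\initLoc, w) = \M(w) = \M(\epsilon \cdot w)$, which is the desired equality.

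For the inductive step, I would assume that for some $k \in \N$, the statement holds for every $x \in \INPUT^*$ of length $k$ and every $w \in \INPUT^+$. I would then take $x' = x \cdot a$ of length $k+1$, where $|x| = k$ and $a \in \INPUT$. The chain of equalities is
\begin{align*}
\outp(\trans(\initLoc, x'), w)
 &= \outp(\trans(\initLoc, x \cdot a), w) && (x' = x \cdot a) \\
 &= \outp(\trans(\trans(\initLoc, x), a), w) && (\text{extended definition of } \trans) \\
 &= \outp(\trans(\initLoc, x), a \cdot w) && (\text{extended definition of } \outp) \\
 &= \M(x \cdot a \cdot w) && (\text{induction hypothesis with } a \cdot w \in \INPUT^+) \\
 &= \M(x' \cdot w) && (x \cdot a = x').
\end{align*}

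The only point requiring a moment's care is that the inductive hypothesis has to be applied to the word $a \cdot w$ rather than $w$, and this requires $a \cdot w \in \INPUT^+$; this is immediate since $w \in \INPUT^+$ already. No other obstacle arises: the proof is completely routine and structurally identical to that of \cref{symbolic-sufft}, the only difference being that we work over the finite alphabet $\INPUT$ with concrete transitions rather than the possibly-infinite domain $\domain$ with predicate-labeled transitions.
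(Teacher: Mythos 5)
Your proof is correct and follows essentially the same route as the paper's: induction on $|x|$ with the identical chain of equalities, including the key point of applying the inductive hypothesis to the word $a \cdot w$. The extra detail you supply in the base case and the explicit check that $a \cdot w \in \INPUT^+$ are fine and only make the argument slightly more explicit than the paper's version.
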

\begin{proof}
 We prove this lemma by induction on the length of $x$. If the length of $x$ is $0$, it is clear that $\M(w) = \outp(\trans(\initLoc,\emptyword),w)$. We assume that for all $x \in \INPUT^*$ of length $k$, and for all $w \in \INPUT^+$ $\M(x\cdot w) = \outp(\trans(\initLoc,x),w)$.
     Let $x' = x \cdot a \in \INPUT^*$ be a word of length $k+1$, where $x \in \INPUT^*$ is a word of length $k$ and $a \in \INPUT$ is a character. We have the following.
     \begin{align*}
 \outp(\trans(\initLoc,x'),w) &= \outp(\trans(\initLoc,x\cdot a),w) & (\text{$x'= x \cdot a$})\\
          &=  \outp(\trans(\trans(\initLoc,x),a),w) & (\text{the definition of $\trans$})\\
          &=  \outp(\trans(\initLoc,x),a\cdot w) & (\text{the definition of $\outp$})\\
          &= \M(x\cdot a \cdot w)& (\text{induction hypothesis})\\
          &= \M(x'\cdot w) & (\text{$x\cdot a = x'$})
\end{align*}

\end{proof}
We next prove \cref{evidence-minimality}. This lemma states that the evidence Mealy automaton $\hypothesisMe$ has the minimum number of states among any other Mealy automaton evidence compatible with the observation table.
\begin{lemma}\label{evidence-minimality}
     Let $T=(\domain, S, R, \Sigma_E, E, f)$ be a cohesive observation table and let $\hypothesisMe =(\Sigma_E,\Loc,\initLoc,\OUTPUT,\transe,\outpe)$ be the evidence Mealy automaton constructed from $T$. $\hypothesisMe$ has the minimum number of states among the Mealy automata that are evidence compatible with $T$.
\end{lemma}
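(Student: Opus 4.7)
The plan is to mirror the proof of symbolic minimality (\cref{theo:symbolicminimality}) but in the non-symbolic evidence setting, relying on \cref{evidence-sufft} which has just been established. I will argue by contradiction: suppose there is a Mealy automaton ${\M}' = (\Sigma_E, \Loc', \initLoc', \OUTPUT, {\trans}', {\outp}')$ that is evidence compatible with $T$ and has strictly fewer states than $\hypothesisMe$.

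By the construction of the evidence Mealy automaton in \cref{construction}, the state set of $\hypothesisMe$ is $\{\row(s) \mid s \in \Prefixes\}$, so $|\Loc|$ equals the number of distinct row values over $\Prefixes$. In particular, the map $s \mapsto \row(s)$ is injective once we take one representative per equivalence class, and $|\Loc| \le |\Prefixes|$. Since $|\Loc'| < |\Loc|$, by the pigeonhole principle applied to the mapping $s \mapsto {\trans}'(\initLoc', s)$ restricted to a set of representatives in $\Prefixes$ of the distinct row values, there exist $s_1, s_2 \in \Prefixes$ with $\row(s_1) \neq \row(s_2)$ but ${\trans}'(\initLoc', s_1) = {\trans}'(\initLoc', s_2)$. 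From $\row(s_1) \neq \row(s_2)$, I obtain a witness $e \in \Sigma_E \cup E$ with $f(s_1, e) \neq f(s_2, e)$; since $\Sigma_E$ is non-empty and every $e \in E$ added by $\FMakeConsistent$ is non-empty, this $e$ lies in $\Sigma_E^+$.

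Now I apply evidence compatibility of ${\M}'$ to get ${\M}'(s_1 \cdot e) = f(s_1, e) \neq f(s_2, e) = {\M}'(s_2 \cdot e)$. On the other hand, \cref{evidence-sufft} yields
\[
{\M}'(s_i \cdot e) = {\outp}'({\trans}'(\initLoc', s_i), e) \quad \text{for } i \in \{1,2\},
\]
and the assumption ${\trans}'(\initLoc', s_1) = {\trans}'(\initLoc', s_2)$ forces these two outputs to coincide, contradicting $f(s_1,e) \neq f(s_2,e)$.

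The main obstacle is a bookkeeping one rather than a conceptual one: I must check that $s_i \cdot e$ is a well-formed input word for ${\M}'$, i.e.\ that $s_i \cdot e \in \Sigma_E^+$. This follows because output-closedness together with prefix-closedness of $\Prefixes \cup \NextPrefixes$ gives $\Prefixes \subseteq \Sigma_E^*$, suffix-closedness of $\Sigma_E \cup E$ together with $E$ being built from characters in $\Sigma_E$ gives $\Sigma_E \cup E \subseteq \Sigma_E^*$, and non-emptiness of $e$ places the concatenation in $\Sigma_E^+$. Beyond this, I also want to note for the casework inside \cref{lemma:strictly-increase} that the contradiction genuinely uses only evidence compatibility (i.e.\ is witnessed inside $\Sigma_E^* \times (\Sigma_E \cup E)$), which is precisely what is available here.
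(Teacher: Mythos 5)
Your proof is correct and follows essentially the same route as the paper's: a pigeonhole argument on $s \mapsto {\trans}'(\initLoc', s)$ over $S$, a distinguishing suffix $e$ from $\row(s_1) \neq \row(s_2)$, and the combination of evidence compatibility with \cref{evidence-sufft} to derive the contradiction. The extra bookkeeping you supply (that $s_i \cdot e \in \Sigma_E^+$, so the words are well-formed inputs) is a detail the paper handles separately in its preservation theorem for \cref{alg2}, but it does not change the argument.
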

\begin{proof}
    We prove this theorem by contradiction. To obtain a contradiction, suppose that there is a Mealy automaton ${\M^e}' = (\Sigma_E,\Loc',\initLoc',\OUTPUT,{\transe}',{\outpe}')$ evidence compatible with $T$ such that the number of states of ${\M^e}'$ is less than that of $\hypothesisMe$. Since we have $|S| = |Q| > |Q'|$, by the pigeon hole principle,   there exists $s_1,s_2 \in S$ such that $\row(s_1) \neq \row(s_2)$ but ${\transe}'(\initLoc',s_1) = {\transe}'(\initLoc',s_2)$. By $\row(s_1) \neq \row(s_2)$, there exists $e \in \Sigma_E \cup E$ such that $f(s_1,e) \neq f(s_2,e)$. Since ${\M^e}'$ is evidence compatible with $T$, we have ${\M^e}'(s_1\cdot e) = f(s_1,e)$ and ${\M^e}'(s_2\cdot e) = f(s_2,e)$. By $f(s_1,e) \neq f(s_2,e)$, ${\M^e}'(s_1\cdot e) \neq {\M^e}'(s_2\cdot e)$. In contrast, by \cref{evidence-sufft}, we have ${\M^e}'(s_1 \cdot e) = {\outpe}'({\transe}'(\initLoc',s_1),e)$ and ${\M^e}'(s_2 \cdot e) = {\outpe}'({\transe}'(\initLoc',s_2),e)$. By ${\transe}'(\initLoc',s_1) = {\transe}'(\initLoc',s_2)$, we have ${\M^e}'(s_1 \cdot e) = {\M^e}'(s_2 \cdot e)$. However, this contradicts  ${\M^e}'(s_1\cdot e) \neq {\M^e}'(s_2\cdot e)$. 
    Therefore, the evidence Mealy automaton $\hypothesisMe$ constructed from $T$ has the minimum number of states among the Mealy automata evidence compatible with $T$. 
    
    \end{proof}
    We next prove the property \emph{reduced}, which states that each $s \in S$ is unique.
    \begin{definition}
        [reduced]
        An observation table $T = (\domain, S, R, \Sigma_E, E, f)$ is reduced if for all $s_1, s_2 \in S$, $s_1 \neq s_2$ implies $\row(s_1) \neq \row(s_2)$.
    \end{definition}
    \begin{lemma} \label{reduced}
     At \cref{construct-hypothesisM} of \cref{alg2}, the cohesive observation table is reduced. 
\end{lemma}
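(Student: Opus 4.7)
The plan is to establish a stronger invariant: at \emph{every} point during the execution of \cref{alg2}, the set $S$ satisfies $\row(s_1) \neq \row(s_2)$ whenever $s_1, s_2 \in S$ are distinct. We then specialise this invariant to \cref{construct-hypothesisM} to get the statement. The argument is a straightforward induction on the number of operations performed by the algorithm.

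For the base case, the initialization (\cref{initializeM}) sets $S = \{\epsilon\}$, which is trivially reduced since there are no two distinct elements. For the inductive step, I would enumerate each operation that can alter $S$, $\Sigma_E$, or $E$ (recall that $\row$ depends on $\Sigma_E \cup E$), and check that reducedness is preserved:
\begin{itemize}
    \item \FMakeClosed{} chooses some $r \in R$ precisely because $\row(r) \neq \row(s)$ for every $s \in S$, and then moves it into $S$. By the induction hypothesis the other rows of $S$ were pairwise distinct, and the new row differs from all of them by construction.
    \item \FMakeConsistent{} and \FMakeOutputClosed{} only enlarge $E$ and $\Sigma_E$ respectively, and do not modify $S$. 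Enlarging the column index set can only refine the row equivalence: if two rows differed on some previously existing column $e$, that column remains in $\Sigma_E \cup E$ and the cell values $f(s_1,e), f(s_2,e)$ are unchanged, so $\row(s_1) \neq \row(s_2)$ still holds for the enlarged row functions.
    \item \FMakeEvidenceClosed{} and the counterexample processing (\cref{process-counterexampleM}) only add elements to $R$, leaving $S$ untouched, so reducedness is trivially preserved.
\end{itemize}

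Since every operation preserves the invariant and the initial table satisfies it, reducedness holds whenever the algorithm reaches \cref{construct-hypothesisM}. The only subtlety worth flagging is the monotonicity argument for column additions, namely that extending the domain of $\row$ cannot collapse two previously distinct rows; this is immediate from the fact that $f(s,e)$ depends only on $s$ and $e$, not on the rest of the table. No serious obstacle is anticipated; the proof is essentially a case analysis over the four cohesifying operations plus the counterexample loop.
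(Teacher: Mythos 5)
Your proof is correct and takes essentially the same approach as the paper: an invariant-preservation argument over the algorithm's operations, with \FMakeClosed{} as the only nontrivial case since it is the sole operation (besides initialization) that adds to $S$. You are in fact slightly more thorough than the paper's proof, which does not explicitly note that enlarging $\Sigma_E \cup E$ cannot collapse two previously distinct rows; your observation that old columns persist and their cell values are unchanged closes that small gap cleanly.
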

\begin{proof}
Let $T = (\domain, S, R,\Sigma_E, E, f)$ be the observation table.
The only operations that add words to $S$ are when adding $\emptyword$ in initialization and $\FMakeClosed$. The operation $\FMakeClosed$ adds $r \in R$ such that for all $s \in S$, $\row(r) \neq \row(s)$. These do not violate reducedness of the observation table. Therefore, $T$ is always reduced. 
    \end{proof}
We next prove \cref{injective} and \cref{right-toal}. These lemmas state that the states of Mealy automaton ${\M^e}'$ evidence compatible with $T$ and the states of the evidence automaton $\hypothesisMe$ have a one-to-one correspondence. 
\begin{lemma}\label{injective}
Let $T = (\domain, S, R,\Sigma_E, E, f)$ be a cohesive observation table. For any Mealy automaton ${\M^e}' = (\Sigma_E,\Loc',\initLoc',\OUTPUT,{\transe}',{\outpe}')$, if ${\M^e}'$ is evidence compatible with $T$, for all $s_1, s_2 \in S$, if $s_1 \neq s_2$ then ${\transe}'(\initLoc',s_1) \neq {\transe}'(\initLoc', s_2)$.
\end{lemma}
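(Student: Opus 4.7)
The plan is to prove this by contradiction, leveraging the cohesiveness of $T$ (in particular, reducedness of $S$, Lemma~\ref{reduced}) and the evidence compatibility assumption. Suppose $s_1, s_2 \in S$ with $s_1 \neq s_2$ but ${\transe}'(\initLoc', s_1) = {\transe}'(\initLoc', s_2)$. Because $T$ is cohesive and reduced (Lemma~\ref{reduced}), $s_1 \neq s_2$ implies $\row(s_1) \neq \row(s_2)$, so there exists a column index $e \in \Sigma_E \cup E$ with $f(s_1, e) \neq f(s_2, e)$.

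Next, I would invoke evidence compatibility of ${\M^e}'$ with $T$, which gives ${\M^e}'(s_1 \cdot e) = f(s_1, e)$ and ${\M^e}'(s_2 \cdot e) = f(s_2, e)$. Combining these with the inequality $f(s_1, e) \neq f(s_2, e)$ yields ${\M^e}'(s_1 \cdot e) \neq {\M^e}'(s_2 \cdot e)$. On the other hand, by Lemma~\ref{evidence-sufft} applied with $x = s_i$ and $w = e$ (note $e \in \Sigma_E \cup E \subseteq \Sigma_E^*$, and we need $e$ to be non-empty for $s_i \cdot e$ to lie in the domain of ${\M^e}'$; if $e = \epsilon$ we can instead pick any extension $e \cdot a$ with $a \in \Sigma_E$ witnessing a difference of rows, or alternatively argue directly from the state), we obtain ${\M^e}'(s_i \cdot e) = {\outpe}'({\transe}'(\initLoc', s_i), e)$. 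The assumed equality ${\transe}'(\initLoc', s_1) = {\transe}'(\initLoc', s_2)$ then forces ${\M^e}'(s_1 \cdot e) = {\M^e}'(s_2 \cdot e)$, contradicting what we just derived.

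The only subtlety, and the step that warrants care, is the case where the distinguishing $e$ is the empty word $\epsilon \in E$ (or more generally where Lemma~\ref{evidence-sufft} does not directly apply because it is stated for $w \in \INPUT^+$). In that corner case, we cannot directly compare ${\M^e}'(s_1)$ and ${\M^e}'(s_2)$ via output values on $s_i \cdot \epsilon$. However, this case is handled by either restricting $E$ to non-empty suffixes (as done implicitly in the $\LambdaM$ observation table via $\Sigma_E \cup E$ with $E \subseteq \Sigma_E^*$ and the use of Mealy-style outputs) or by replacing $e$ with $e \cdot a$ for any $a \in \Sigma_E$ using the fact that rows differ on $\Sigma_E$-indexed columns when $e = \epsilon$; evidence-closedness of $T$ guarantees $s_i \cdot a \in S \cup R$ so $f(s_i, a)$ is defined. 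Everything else is a routine chain of equalities, so this will be the only place requiring a brief case split.
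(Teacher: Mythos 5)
Your proposal is correct and follows essentially the same route as the paper: reducedness of $S$ (\cref{reduced}) yields a distinguishing column $e$, evidence compatibility and \cref{evidence-sufft} translate $f(s_i,e)$ into ${\outpe}'({\transe}'(\initLoc',s_i),e)$, and the inequality of outputs forces the two states apart (the paper phrases this directly rather than by contradiction, which is immaterial). The corner case $e=\epsilon$ that you flag does not actually arise, since $f(w,e)=\targetM(w\cdot e)$ requires $w\cdot e\in\domain^+$ and $E$ never contains $\epsilon$ in $\LambdaM$, so your brief case split is unnecessary but harmless.
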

\begin{proof}
    By \cref{reduced}, for all $s_1, s_2 \in S$, we have $\row(s_1) \neq \row(s_2)$.
    Therefore, there exists $e \in \Sigma_E \cup E$ such that $f(s_1,e) \neq f(s_2,e)$. Since ${\M^e}'$ is evidence compatible with $T$, we have $f(s_1,e) = {\M^e}'(s_1\cdot e) = {\outpe}'({\transe}'(\initLoc',s_1), e)$ and $f(s_2,e) = {\M^e}'(s_2\cdot e) = {\outpe}'({\transe}'(\initLoc',s_2), e)$. By $f(s_1,e) \neq f(s_2,e)$, we have ${\transe}'(\initLoc',s_1) \neq {\transe}'(\initLoc', s_2)$. 
\end{proof}

\begin{lemma} \label{right-toal}
 Let $T = (\domain, S, R,\Sigma_E, E, f)$ be  a cohesive observation table, and let $\hypothesisMe = (\Sigma_E,\Loc,\initLoc,\OUTPUT,\transe,\outpe)$ be the evidence Mealy automaton constructed from $T$. For any Mealy automaton ${\M^e}' = (\Sigma_E,\Loc',\initLoc',\OUTPUT,{\transe}',{\outpe}')$, if ${\M^e}'$ is evidence compatible with $T$, and the number of states of $\hypothesisMe$ is equal to that of ${\M^e}'$, for all $q \in \Loc'$, there exists $s \in S$ satisfying $q = {\transe}'(\initLoc, s)$.
\end{lemma}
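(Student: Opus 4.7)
The plan is to exhibit a concrete injection from $S$ to $\Loc'$ and argue it must be a bijection by a counting argument, since both sets have the same (finite) cardinality.

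More precisely, I would first observe that $|S| = |\Loc|$. This follows directly from the definition of the evidence Mealy automaton (\cref{construction}), where $\Loc = \{\row(s) \mid s \in S\}$, together with \cref{reduced}, which guarantees that distinct elements of $S$ yield distinct rows. Combining this with the hypothesis $|\Loc| = |\Loc'|$ gives $|S| = |\Loc'|$, and this is a finite cardinality.

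Next, I would define the map $\phi \colon S \to \Loc'$ by $\phi(s) := {\transe}'(\initLoc', s)$. By \cref{injective}, applied to ${\M^e}'$ under the hypothesis of evidence compatibility with $T$, the map $\phi$ is injective: for $s_1, s_2 \in S$ with $s_1 \neq s_2$, we have ${\transe}'(\initLoc', s_1) \neq {\transe}'(\initLoc', s_2)$.

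Since $\phi$ is an injection between finite sets of equal cardinality, it is also surjective. Hence, for every $q \in \Loc'$, there exists $s \in S$ such that $q = \phi(s) = {\transe}'(\initLoc', s)$, which is exactly the conclusion of the lemma. I do not foresee a main obstacle here; this is essentially a routine pigeonhole/counting argument once the earlier lemmas (\cref{reduced} and \cref{injective}) are in place, and the only care needed is to invoke the correct definition of $\Loc$ from \cref{construction} to match $|S|$ with $|\Loc|$.
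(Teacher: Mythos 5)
Your proposal is correct and matches the paper's own proof essentially verbatim: both establish $|S| = |\Loc| = |\Loc'|$ via \cref{reduced} and the definition of $\Loc$ in \cref{construction}, then apply \cref{injective} to get an injection $s \mapsto {\transe}'(\initLoc', s)$ from $S$ into $\Loc'$, and conclude surjectivity by the finite pigeonhole argument. Your write-up is, if anything, slightly more explicit than the paper's about naming the map and where each cardinality equality comes from.
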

\begin{proof}
     Since ${\M^e}'$ and $\hypothesisMe$ have the same number of states and \cref{reduced} holds, we have $|S| = |Q| = |Q'|$. In general, an injection between sets with the same number of elements is a surjection.
     \Cref{injective} holds and $|S| = |Q| = |Q'|$ is finite, which proves the lemma. 
\end{proof}
We next prove \cref{preserve-transition} and \cref{preserve-output}. These lemmas state that the transition function and the output function of a Mealy automaton ${\M^e}'$ evidence compatible with $T$ are equal to those of the evidence automaton $\hypothesisMe$.
\begin{lemma}\label{preserve-transition}
    Let $T = (\domain, S, R,\Sigma_E, E, f)$ be a cohesive observation table, and let $\hypothesisMe = (\Sigma_E,\Loc,\initLoc,\OUTPUT,\transe,\outpe)$ be the evidence Mealy automaton constructed from $T$. For any Mealy automaton ${\M^e}' = (\Sigma_E,\Loc',\initLoc',\OUTPUT,{\transe}',{\outpe}')$, if ${\M^e}'$ is evidence compatible with $T$, and the number of states of $\hypothesisMe$ is equal to that of ${\M^e}'$, for all $s_1,s_2,s_3 \in S$, $a \in \Sigma_E$, if $\transe(\transe(\initLoc,s_1), a) = \transe(\initLoc,s_2)$ and ${\transe}'({\transe}'(\initLoc',s_1),a) = {\transe}'(\initLoc',s_3)$ then $s_2 = s_3$.
\end{lemma}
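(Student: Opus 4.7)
I would argue by contradiction. Assume $s_2 \neq s_3$. Since $s_2, s_3 \in S$ and, by \cref{reduced}, the cohesive table $T$ is reduced, we have $\row(s_2) \neq \row(s_3)$, so there exists a suffix $e \in \Sigma_E \cup E$ with $f(s_2, e) \neq f(s_3, e)$. The plan is to show that evidence compatibility of ${\M^e}'$ forces $f(s_1 \cdot a, e) = f(s_3, e)$, while the structure of $\hypothesisMe$ forces $f(s_1 \cdot a, e) = f(s_2, e)$, yielding a contradiction.

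First I would analyze the left-hand side of the assumed equality using $\hypothesisMe$. By \cref{lemma:reach}, $\transe(\initLoc, s_1) = \row(s_1)$ and $\transe(\initLoc, s_2) = \row(s_2)$. Unfolding the extended transition function and using the hypothesis $\transe(\transe(\initLoc, s_1), a) = \transe(\initLoc, s_2)$ together with \cref{lemma:reach} applied to $s_1 \cdot a \in S \cup R$ (which lies in $S \cup R$ by evidence-closedness), I obtain $\row(s_1 \cdot a) = \transe(\initLoc, s_1 \cdot a) = \transe(\initLoc, s_2) = \row(s_2)$. In particular, $f(s_1 \cdot a, e) = f(s_2, e)$.

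Next I would analyze the right-hand side using ${\M^e}'$. By \cref{evidence-sufft} applied with $x = s_1 \cdot a$ and a suitably split suffix (and, when $e = \epsilon$, just reading off the output directly), together with the hypothesis ${\transe}'({\transe}'(\initLoc', s_1), a) = {\transe}'(\initLoc', s_3)$, I get
\[
{\M^e}'(s_1 \cdot a \cdot e) = {\outpe}'({\transe}'(\initLoc', s_1 \cdot a), e) = {\outpe}'({\transe}'(\initLoc', s_3), e) = {\M^e}'(s_3 \cdot e).
\]
Since ${\M^e}'$ is evidence compatible with $T$, both endpoints are actually table entries: ${\M^e}'(s_1 \cdot a \cdot e) = f(s_1 \cdot a, e)$ (using $s_1 \cdot a \in S \cup R$ and $e \in \Sigma_E \cup E$) and ${\M^e}'(s_3 \cdot e) = f(s_3, e)$. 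Hence $f(s_1 \cdot a, e) = f(s_3, e)$.

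Combining the two steps gives $f(s_2, e) = f(s_1 \cdot a, e) = f(s_3, e)$, contradicting the choice of $e$. The main obstacle I anticipate is the careful bookkeeping around the empty-suffix case (ensuring $s_1 \cdot a \cdot e$ is a legitimate non-empty input to ${\M^e}'$, which it is because $a$ is already a character, and symmetrically for $s_3 \cdot e$ when $s_3 = \epsilon$, which is handled because the existence of a distinguishing $e$ forces $s_3 \cdot e$ to be non-empty) and the bridging between $\transe'$ on a single character versus on a word, both of which are routine applications of \cref{evidence-sufft}.
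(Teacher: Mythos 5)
Your proposal is correct and follows essentially the same route as the paper: a contradiction argument that picks a distinguishing suffix $e$ from reducedness and shows both $f(s_2,e)$ and $f(s_3,e)$ equal $f(s_1\cdot a,e)$, using evidence-closedness to place $s_1\cdot a$ in $S\cup R$ and \cref{evidence-sufft} plus evidence compatibility to evaluate ${\M^e}'$. The only cosmetic difference is that you handle the $\hypothesisMe$ side via row equality through \cref{lemma:reach} rather than through the output function and evidence compatibility of $\hypothesisMe$ as the paper does, which is an equivalent bookkeeping choice.
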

   \begin{proof}
       We prove this lemma by contradiction.
       To obtain a contradiction, suppose that there exist $s_1,s_2,s_3 \in S$, $a \in \Sigma_E$ satisfying $\transe(\transe(\initLoc,s_1), a) = \transe(\initLoc,s_2)$, ${\transe}'({\transe}'(\initLoc',s_1),a) = {\transe}'(\initLoc',s_3)$, and $s_2 \neq s_3$.
       By \cref{reduced}, there exists $e \in \Sigma_E \cup E$ such that $f(s_2,e) \neq f(s_3,e)$. Since ${\M^e}'$ and $\hypothesisMe$ are evidence compatible with $T$, we have 
       $\hypothesisMe(s_2\cdot e) = f(s_2,e)$ and 
       ${\M^e}'(s_3\cdot e) = f(s_3,e)$. Therefore $\hypothesisMe(s_2\cdot e) \neq {\M^e}'(s_3\cdot e)$. In contrast, since $\hypothesisMe$ is compatible with $T$, we have the following.
       \begin{align*}
        \hypothesisMe(s_2\cdot e) &= 
        \outpe(\initLoc,s_2\cdot e) & (\text{the definition of $\hypothesisMe$})\\
        &=\outpe(\transe(\initLoc,s_2),e) & (\text{by \cref{evidence-sufft}})\\
&=\outpe(\transe(\transe(\initLoc,s_1), a),e) & (\text{$ \transe(\initLoc,s_2)= \transe(\transe(\initLoc,s_1), a)$}) \\ 
        &= \outpe(\transe(\initLoc,s_1), a\cdot e) & (\text{the definition of $\outpe$}) \\ 
        &= \outpe(\initLoc,s_1\cdot a\cdot e) & (\text{by \cref{evidence-sufft}}) \\
        &= \hypothesisMe(s_2\cdot e) & (\text{the definition of $\hypothesisMe$})\\
        &= f(s_1 \cdot a ,e) & (\text{evidence compatibility of $\hypothesisMe$})
        \end{align*}
        By evidence-closedness of $T$, $s_1 \cdot a \in S\cup R$. Since $\hypothesisMe$ is evidence compatible with $T$, we have $\hypothesisMe(s_1\cdot a\cdot e) = f(s_1 \cdot a ,e) $. Similarly, we have the following. 
       \begin{align*}
           {\M^e}'(s_3\cdot e) &= 
           {\outpe}'(\transe(\initLoc,s_3),e) & (\text{the definition of ${\M^e}'$})\\ 
           &= {\outpe}'(\transe(\initLoc,s_3),e) & (\text{by \cref{evidence-sufft}})\\ 
           &= {\outpe}'(\transe(\transe(\initLoc,s_1), a),e) & (\text{${\transe}'(\initLoc',s_3) = {\transe}'({\transe}'(\initLoc',s_1),a)$})\\ 
           &= {\outpe}'(\transe(\initLoc,s_1), a\cdot e)&(\text{the definition of $\outpe$})\\ 
           &= {\outpe}'(\initLoc,s_1\cdot a\cdot e) & (\text{by \cref{evidence-sufft}})\\
           &= {\M^e}'(s_1\cdot a\cdot e) & (\text{the definition of ${\M^e}'$})\\
           &= f(s_1 \cdot a, e) & (\text{evidence compatibility of ${\M^e}$})
       \end{align*}
       Therefore, $\hypothesisMe(s_2\cdot e) = {\M^e}'(s_3\cdot e)$.
       
       However, this contradicts $\hypothesisMe(s_2\cdot e) \neq {\M^e}'(s_3\cdot e)$. Therefore $s_2 = s_3$. 
   \end{proof}
\begin{lemma}\label{preserve-output}
    Let $T = (\domain, S, R,\Sigma_E, E, f)$ be a cohesive observation table, and let $\hypothesisMe = (\Sigma_E,\Loc,\initLoc,\OUTPUT,\transe,\outpe)$ be the evidence Mealy automaton constructed from $T$. For any Mealy automaton ${\M^e}' = (\Sigma_E,\Loc',\initLoc',\OUTPUT,{\transe}',{\outpe}')$, if ${\M^e}'$ is evidence compatible with $T$, and the number of states of $\hypothesisMe$ is equal to that of ${\M^e}'$, for all $s \in S$, $a \in \Sigma_E$, $\outpe(\transe(\initLoc,s),a) = {\outpe}'({\transe}'(\initLoc',s),a)$. 
\end{lemma}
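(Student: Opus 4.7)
The plan is to prove the equation by showing that both sides equal $f(s, a)$, so that no direct comparison of the two automata's internal structures is needed. The key observation is that the hypothesis about equal numbers of states is actually not required for this particular lemma; only evidence compatibility of both automata together with evidence-closedness of $T$ is used. This makes the argument quite short compared with \cref{preserve-transition}.

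First, I would use cohesiveness of $T$, in particular evidence-closedness, to conclude that for every $s \in S$ and every $a \in \Sigma_E$ we have $s \cdot a \in S \cup R$. Since $a \in \Sigma_E \cup E$ as well, the value $f(s, a)$ is defined and appears as a cell of the observation table.

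Next, I would apply evidence compatibility twice. By \cref{theo:comp}, $\hypothesisMe$ is evidence compatible with $T$, so $\hypothesisMe(s \cdot a) = f(s, a)$. By the assumption of the lemma, ${\M^e}'$ is also evidence compatible with $T$, hence ${\M^e}'(s \cdot a) = f(s, a)$ as well. Combining these two equalities yields $\hypothesisMe(s \cdot a) = {\M^e}'(s \cdot a)$.

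Finally, I would invoke \cref{evidence-sufft} with $x := s \in \Sigma_E^*$ and $w := a \in \Sigma_E^+$ on both automata, which rewrites the left-hand side as $\outpe(\transe(\initLoc, s), a)$ and the right-hand side as ${\outpe}'({\transe}'(\initLoc', s), a)$. This establishes the desired equation. The main (minor) obstacle is simply making sure the side conditions of \cref{evidence-sufft} are met, namely $a \in \Sigma_E^+$ rather than $\Sigma_E^*$, which holds because $a$ is a single character. No case analysis or induction is needed.
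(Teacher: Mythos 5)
Your proposal is correct and is essentially the paper's own (one-line) argument spelled out: both sides equal $f(s,a)$ by evidence compatibility of $\hypothesisMe$ (via \cref{theo:comp}) and of ${\M^e}'$, with \cref{evidence-sufft} unfolding $\eMealy(s\cdot a)$ into $\outpe(\transe(\initLoc,s),a)$. Your observation that the equal-state-count hypothesis is not used here is also accurate.
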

\begin{proof}
 We have $\outpe(\transe(\initLoc,s),a) = {\outpe}'({\transe}'(\initLoc',s),a)$ because$\hypothesisMe$ and ${\M^e}'$ are evidence compatible with $T$.
\end{proof}
We finally prove \cref{bigger}. This lemma states that the number of states of $\hypothesisMeNext$ is greater than that of $\hypothesisMe$. We have divided the proof into a sequence of lemmas.
\begin{lemma} \label{table-state-exist}
         Let $T = (\domain, S, R,\Sigma_E, E, f)$ be a cohesive observation table, and let $\hypothesisMe = (\Sigma_E,\Loc,\initLoc,\OUTPUT,\transe,\outpe)$ be the evidence Mealy automaton constructed from $T$. For any Mealy automaton ${\M^e}' = (\Sigma_E,\Loc',\initLoc',\OUTPUT,{\transe}',{\outpe}')$, if ${\M^e}'$ is evidence compatible with $T$, and the number of states of $\hypothesisMe$ is equal to that of ${\M^e}'$, for all $w \in S\cup R$ , there exists $s \in S$ such that $\transe(\initLoc,w) = \transe(\initLoc, s)$ and ${\transe}'(\initLoc',w) = {\transe}'(\initLoc', s)$. 
\end{lemma}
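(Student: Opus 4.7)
\textbf{Proof plan for Lemma~\ref{table-state-exist}.}

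The plan is to proceed by induction on the length of $w$. For the base case $w = \emptyword$, simply pick $s = \emptyword \in S$; the two required equalities then hold trivially because $\transe(\initLoc, \emptyword) = \initLoc$ and ${\transe}'(\initLoc', \emptyword) = \initLoc'$ by the extension conventions.

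For the inductive step, write $w = w' \cdot a$ with $a \in \Sigma_E$ (possible because $S \cup R \subseteq \Sigma_E^*$ and $S \cup R$ is prefix-closed, so $w' \in S \cup R$). Apply the induction hypothesis to $w'$ to obtain some $s' \in S$ with $\transe(\initLoc, w') = \transe(\initLoc, s')$ and ${\transe}'(\initLoc', w') = {\transe}'(\initLoc', s')$. By evidence-closedness, $s' \cdot a \in S \cup R$; by closedness, pick $s \in S$ with $\row(s) = \row(s' \cdot a)$. I claim this $s$ satisfies both equalities.

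For the $\hypothesisMe$ side, unfold the definition of $\transe$ on words, apply the induction hypothesis, use Definition~\ref{construction} to get $\transe(\row(s'), a) = \row(s' \cdot a) = \row(s)$, and then reuse Lemma~\ref{lemma:reach} to rewrite $\row(s)$ as $\transe(\initLoc, s)$; this chain of identities yields $\transe(\initLoc, w) = \transe(\initLoc, s)$. The more delicate equality is the ${\M^e}'$ side, and this is where the main work happens. Unfolding gives ${\transe}'(\initLoc', w) = {\transe}'({\transe}'(\initLoc', s'), a)$, which is a state of $Q'$. By the equal-cardinality assumption together with Lemma~\ref{right-toal}, there is some $s_3 \in S$ with ${\transe}'({\transe}'(\initLoc', s'), a) = {\transe}'(\initLoc', s_3)$. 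To identify $s_3$ with $s$, I would invoke Lemma~\ref{preserve-transition} with $s_1 := s'$, $s_2 := s$, and the chosen $s_3$: its hypothesis $\transe(\transe(\initLoc, s'), a) = \transe(\initLoc, s)$ is exactly what was established above for the $\hypothesisMe$ side, and the conclusion $s = s_3$ gives ${\transe}'(\initLoc', w) = {\transe}'(\initLoc', s)$.

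The main obstacle I anticipate is making sure the bookkeeping between $\row$-equivalence and state-equivalence in the two automata is correct: the natural $s$ to pick is the one with $\row(s) = \row(s' \cdot a)$, but this choice is only forced to agree with the ${\M^e}'$-side via the nontrivial transfer provided by Lemma~\ref{preserve-transition}, which in turn needs Lemma~\ref{right-toal} to produce a candidate $s_3 \in S$ in the first place. Beyond that detail the argument is a routine induction, and the lemma follows.
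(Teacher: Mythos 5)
Your proposal is correct and follows essentially the same route as the paper's proof: induction on $|w|$, using closedness (via \cref{lemma:reach} and \cref{construction}) to produce the representative $s \in S$ on the $\hypothesisMe$ side, \cref{right-toal} to produce a candidate representative on the ${\M^e}'$ side, and \cref{preserve-transition} to identify the two. The only cosmetic difference is that you name the evidence-closedness step explicitly, which the paper leaves implicit.
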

\begin{proof}
     We prove this lemma by induction on the length of $w$. If the length of $w$ is $0$, it is clear that by $\emptyword \in S$,  $\transe(\initLoc,\emptyword) = \transe(\initLoc, \emptyword)$ and ${\transe}'(\initLoc',\emptyword) = {\transe}'(\initLoc', \emptyword)$ hold.
     We assume that for all $w \in S\cup R$ of length $k$, there exists $s \in S$ such that $\transe(\initLoc,w) = \transe(\initLoc, s)$ and ${\transe}'(\initLoc',w) = {\transe}'(\initLoc', s)$ hold.
      Let $w' = w\cdot a \in S\cup R$ be a word of length $k + 1$, where $w \in \Sigma_E^*$ is a word of length $k$ and $a \in \Sigma_E$ is a character.
      Since $S\cup R$ is prefix-closed, $w\in S\cup R$ holds. Then, we have the following.
       \begin{align*}
         \transe(\initLoc,w') 
         & = \transe(\initLoc, w\cdot a) & (\text{$w' = w\cdot a$})\\
         &= \transe(\transe(\initLoc,w), a) & (\text{the definition of $\transe$}) \\
         &= \transe(\transe(\initLoc,s), a) &(\text{induction hypothesis})\\ 
         &=
         \transe(\row(s),a) & (\text{by \cref{lemma:reach}})\\
         &= \row(s\cdot a) &( \text{by \cref{construction}})
         \end{align*}
         By closedness of $T$, there is $s_1\in S$ such that $\row(s \cdot a) = \row(s_1)$. For such $s_1$, by \cref{lemma:reach}, we have $\row(s_1) = \transe(\initLoc,s_1)$. We proceed to show that ${\transe}'(\initLoc', w') = {\transe}'(\initLoc',s_1)$. Similarly, we have the following.
      \begin{align*}
         {\transe}'(\initLoc',w') 
         &= {\transe}'(\initLoc',w\cdot a) & (\text{$w' = w \cdot a$})\\
         &= {\transe}'({\transe}'(\initLoc',w), a) & (\text{the definition of ${\hypothesisMe}'$}) \\
         &= {\transe}'({\transe}'(\initLoc',s), a) & (\text{induction hypothesis})\\ 
         &= {\transe}'(\initLoc',s \cdot a) & (\text{the definition of ${\hypothesisMe}'$})
         \end{align*}
         By \cref{right-toal}, there is $s_2 \in S$ such that 
         ${\transe}'(\initLoc',s \cdot a) = {\transe}'(\initLoc',s_2)$. 
         Since $\transe(\transe(\initLoc,s),a) = \transe(\initLoc,s_1)$ and ${\transe}'({\transe}'(\initLoc',s),a) = {\transe}'(\initLoc',s_2)$ hold, by \cref{preserve-transition}, we have $s_1 = s_2$. This completes the proof. 
\end{proof}
\begin{lemma}\label{word-state-exist}
         Let $T = (\domain, S, R,\Sigma_E, E, f)$ be a cohesive observation table, and let $\hypothesisMe = (\Sigma_E,\Loc,\initLoc,\OUTPUT,\transe,\outpe)$ be an evidence Mealy automaton constructed from $T$. For any Mealy automaton ${\M^e}' = (\Sigma_E,\Loc',\initLoc',\OUTPUT,{\transe}',{\outpe}')$, if ${\M^e}'$ is evidence compatible with $T$, and the number of states of $\hypothesisMe$ is equal to that of ${\M^e}'$, for all $w \in \Sigma_E^*$ , there exists $s \in S$ such that $\transe(\initLoc,w) = \transe(\initLoc, s)$ and ${\transe}'(\initLoc',w) = {\transe}'(\initLoc', s)$.
\end{lemma}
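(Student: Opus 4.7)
The plan is to prove this lemma by induction on the length of $w \in \Sigma_E^*$, reducing the general case to the already-proved \cref{table-state-exist}, which handled only words in $S\cup R$. The key bridge between the two lemmas is evidence-closedness of $T$: whenever $s\in S$ and $a\in \Sigma_E$, the word $s\cdot a$ is guaranteed to lie in $S\cup R$, so \cref{table-state-exist} is applicable to it.

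For the base case ($|w|=0$), we have $w=\epsilon$, and since $\epsilon \in S$, we can take $s = \epsilon$; both equalities hold trivially. For the inductive step, assume the claim holds for every word of length $k$, and let $w' = w \cdot a$ with $|w|=k$ and $a\in \Sigma_E$. By the induction hypothesis, there exists $s \in S$ satisfying $\transe(\initLoc, w) = \transe(\initLoc, s)$ and ${\transe}'(\initLoc', w) = {\transe}'(\initLoc', s)$. Extending by $a$ using the definitions of $\transe$ and ${\transe}'$ yields
\begin{align*}
\transe(\initLoc, w') &= \transe(\transe(\initLoc, w), a) = \transe(\transe(\initLoc, s), a) = \transe(\initLoc, s\cdot a),\\
{\transe}'(\initLoc', w') &= {\transe}'({\transe}'(\initLoc', w), a) = {\transe}'({\transe}'(\initLoc', s), a) = {\transe}'(\initLoc', s\cdot a).
\end{align*}

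Since $s\in S$ and $a\in \Sigma_E$, evidence-closedness of $T$ gives $s\cdot a \in S\cup R$. We may therefore apply \cref{table-state-exist} to the word $s\cdot a$: there exists $s_1 \in S$ such that $\transe(\initLoc, s\cdot a) = \transe(\initLoc, s_1)$ and ${\transe}'(\initLoc', s\cdot a) = {\transe}'(\initLoc', s_1)$. Combining with the displayed equalities yields $\transe(\initLoc, w') = \transe(\initLoc, s_1)$ and ${\transe}'(\initLoc', w') = {\transe}'(\initLoc', s_1)$, which completes the inductive step.

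The reasoning is essentially routine once the right lemma is invoked; the only subtlety to watch for is that the intermediate witness $s$ produced by the induction hypothesis need not make $s\cdot a$ appear in the range of $\transe$ as a fresh state, but this is irrelevant because evidence-closedness places $s\cdot a$ in $S\cup R$, so that \cref{table-state-exist} can be applied directly. I expect no major obstacle; the proof is a clean extension of \cref{table-state-exist} from $S\cup R$ to all of $\Sigma_E^*$ via the fact that $\Sigma_E^*$ is generated one character at a time and each one-step extension from a state in $S$ lands back in $S\cup R$.
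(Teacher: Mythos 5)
Your proof is correct and follows essentially the same route as the paper's: induction on $|w|$, using the induction hypothesis to rewrite $\transe(\initLoc,w')$ as $\transe(\initLoc,s\cdot a)$ (and likewise for ${\transe}'$), then invoking evidence-closedness to place $s\cdot a$ in $S\cup R$ so that \cref{table-state-exist} supplies the witness in $S$. No gaps.
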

\begin{proof}
     We prove this lemma by induction on the length of $w$. 
     If the length of $w$ is $0$, it is clear since $\emptyword \in S$.
     We assume that for all $w \in \Sigma_E^*$ of length $k$, there exists $s \in S$ such that $\transe(\initLoc,w) = \transe(\initLoc, s)$ and ${\transe}'(\initLoc',w) = {\transe}'(\initLoc', s)$ hold.
      Let $w' = w\cdot a \in \Sigma_E^+$ be a word of length $k + 1$, where $w \in \Sigma_E^*$ is a word of length $k$ and $a \in \Sigma_E$ is a character. We have the following.
     \begin{align*}
         \transe(\initLoc,w') 
         &= \transe(\initLoc,w \cdot a) & (\text{$w' = w \cdot a$})\\
         &= \transe(\transe(\initLoc,w), a) &(\text{the definition of $\transe$}) \\
         &= \transe(\transe(\initLoc,s), a) &(\text{induction hypothesis})\\ 
         &= \transe(\initLoc, s \cdot a) &(\text{the definition of $\transe$})
     \end{align*} 
     Similarly, we have the following.
     \begin{align*}
         {\transe}'(\initLoc',w') 
         &= {\transe}'(\initLoc',w \cdot a) & (\text{$w' = w \cdot a$})\\
         &={\transe}'({\transe}'(\initLoc',w),a) & (\text{the definition of ${\transe}'$})\\ 
         &={\transe}'({\transe}'(\initLoc',s),a) & (\text{induction hypothesis}    ) \\
         &={\transe}'(\initLoc', s \cdot a) & (\text{the definition of ${\transe}'$})
     \end{align*}
     Since $T$ is evidence-closed, $s \cdot a \in S\cup R$ holds. By \cref{table-state-exist}, there exists $s' \in S$ such that $\transe(\initLoc,s\cdot a) = \transe(\initLoc, s')$ and ${\transe}'(\initLoc',s \cdot a) = {\transe}'(\initLoc', s')$, and the proof is complete. 
\end{proof}
\begin{lemma} \label{uniqueness}
     Let $T = (\domain, S, R,\Sigma_E, E, f)$ be a cohesive observation table, and let $\hypothesisMe = (\Sigma_E,\Loc,\initLoc,\OUTPUT,\transe,\outpe)$ be an evidence Mealy automaton constructed from $T$. For any Mealy automaton ${\M^e}' = (\Sigma_E,\Loc',\initLoc',\OUTPUT,{\transe}',{\outpe}')$, if ${\M^e}'$ is evidence compatible with $T$, and the number of states of $\hypothesisMe$ is equal to that of ${\M^e}'$, for all $w \in \Sigma_E^+$, we have $\hypothesisMe(w) = {\M^e}'(w)$.
\end{lemma}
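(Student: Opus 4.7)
The plan is to reduce an arbitrary computation over $\Sigma_E^+$ to one that ends at a representative prefix in $S$, and then invoke \cref{preserve-output} to conclude that the two automata produce the same final output character.

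First I would write $w \in \Sigma_E^+$ as $w = x \cdot a$ with $x \in \Sigma_E^*$ and $a \in \Sigma_E$. By \cref{word-state-exist}, there exists $s \in S$ with $\transe(\initLoc, x) = \transe(\initLoc, s)$ and ${\transe}'(\initLoc', x) = {\transe}'(\initLoc', s)$. Using \cref{evidence-sufft} applied to $\hypothesisMe$, I have
\begin{align*}
\hypothesisMe(w) &= \outpe(\transe(\initLoc, x), a) = \outpe(\transe(\initLoc, s), a),
\end{align*}
and symmetrically ${\M^e}'(w) = {\outpe}'({\transe}'(\initLoc', s), a)$. So the goal reduces to showing $\outpe(\transe(\initLoc, s), a) = {\outpe}'({\transe}'(\initLoc', s), a)$, which is exactly the conclusion of \cref{preserve-output} for this $s \in S$ and $a \in \Sigma_E$. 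Chaining the three equalities yields $\hypothesisMe(w) = {\M^e}'(w)$.

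The main obstacle is not really in this lemma itself, since all the heavy lifting is done by \cref{word-state-exist} and \cref{preserve-output}; rather, the care needed is ensuring the decomposition $w = x \cdot a$ is well-defined (\ie{} $w$ has at least one character, which is guaranteed by $w \in \Sigma_E^+$) and that both \cref{evidence-sufft} and \cref{preserve-output} are applied with matching states. Once the state $s \in S$ is fixed consistently for both automata via \cref{word-state-exist}, the argument is a short computation and no induction on $|w|$ is needed beyond what is already inside the cited lemmas.
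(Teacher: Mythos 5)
Your proposal is correct and matches the paper's own proof essentially step for step: the same decomposition $w = x\cdot a$, the same appeal to \cref{word-state-exist} to find a common representative $s \in S$, the same use of \cref{evidence-sufft} to rewrite both outputs, and the same final invocation of \cref{preserve-output}. No gaps.
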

\begin{proof}
    Let $w' = w\cdot a \in \Sigma_E^+$, where $w \in \Sigma_E^*$ and $a \in \Sigma_E$. 
    By \cref{word-state-exist}, there exists $s\in S$ such that $\transe(\initLoc,w) = \transe(\initLoc, s)$ and ${\transe}'(\initLoc',w) = {\transe}'(\initLoc', s)$. 
 By \cref{preserve-output}, $\outpe(\transe(\initLoc,s),a) = {\outpe}'({\transe}'(\initLoc',s),a)$ holds. 
Therefore, we have the following.
     \begin{align*}
        \hypothesisMe(w')
        &=\hypothesisMe(w\cdot a) & (\text{$w' = w\cdot a$})\\
        &= \outpe(\initLoc, w \cdot a) & (\text{the definition of $\hypothesisMe$})\\
        &= \outpe(\transe(\initLoc,w),a) & (\text{by \cref{evidence-sufft}}) \\
        &= \outpe(\transe(\initLoc,s),a) & (\text{by \cref{word-state-exist}}) \\
        &=  {\outpe}'({\transe}'(\initLoc',s),a) & (\text{by \cref{preserve-output}})\\
        &= {\outpe}'({\transe}'(\initLoc',w),a) & (\text{by \cref{word-state-exist}})\\
        &= {\outpe}'(\initLoc',w \cdot a) & (\text{by \cref{evidence-sufft}})\\
        &= {\M^e}'(w \cdot a) & (\text{the definition of ${\M^e}'$})\\
        &= {\M^e}'(w') & (\text{$w\cdot a = w'$})
    \end{align*}

\end{proof}
\begin{lemma}\label{smallest}
    Let $T=(\domain, S, R, \Sigma_E, E, f)$ be a cohesive observation table, and let $\hypothesisMe = (\Sigma_E,\Loc,\initLoc,\OUTPUT,\transe,\outpe)$ be the evidence Mealy automaton constructed from $T$. For any Mealy automaton ${\M^e}' = (\Sigma_E,\Loc',\initLoc',\OUTPUT,{\transe}',{\outpe}')$, if ${\M^e}'$ is evidence compatible with $T$, and there exists $w \in \Sigma_E^+$ such that $\hypothesisMe(w) \neq {\M^e}'(w)$, ${\M^e}'$ has more states than $\hypothesisMe$.
\end{lemma}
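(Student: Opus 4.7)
The plan is to prove \cref{smallest} by contraposition, leveraging the two lemmas already established just above it: \cref{evidence-minimality} (which asserts that $\hypothesisMe$ attains the minimum number of states among Mealy automata evidence compatible with $T$) and \cref{uniqueness} (which asserts that any evidence-compatible Mealy automaton with the same number of states as $\hypothesisMe$ is input-output equivalent to $\hypothesisMe$ on every $w \in \Sigma_E^+$).

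Concretely, I would assume for contradiction that ${\M^e}'$ has at most as many states as $\hypothesisMe$, \ie{} $|\Loc'| \le |\Loc|$. Then I would combine this with \cref{evidence-minimality}, which gives $|\Loc'| \ge |\Loc|$ since ${\M^e}'$ is evidence compatible with $T$. Hence $|\Loc'| = |\Loc|$. Now \cref{uniqueness} applies directly: both $\hypothesisMe$ and ${\M^e}'$ are evidence compatible with $T$ and have the same number of states, so $\hypothesisMe(w) = {\M^e}'(w)$ for every $w \in \Sigma_E^+$. This contradicts the hypothesis that there exists $w \in \Sigma_E^+$ with $\hypothesisMe(w) \neq {\M^e}'(w)$, yielding $|\Loc'| > |\Loc|$ as required.

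There is essentially no obstacle here; the lemma is a direct corollary of \cref{evidence-minimality} and \cref{uniqueness}, the heavy lifting having been done in the earlier lemmas (\cref{preserve-transition}, \cref{preserve-output}, \cref{word-state-exist}, \cref{table-state-exist}) that build up to \cref{uniqueness}. The only subtlety worth double-checking is that both sides of the equality $|\Loc'| = |\Loc|$ correspond to the hypothesis required by \cref{uniqueness}, and that evidence compatibility of ${\M^e}'$ is preserved as an assumption throughout. Once this is verified, the contradiction closes the proof in essentially one line.
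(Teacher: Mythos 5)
Your proposal is correct and follows essentially the same route as the paper's own proof: both use \cref{evidence-minimality} to exclude the case $|\Loc'| < |\Loc|$ and \cref{uniqueness} to exclude $|\Loc'| = |\Loc|$, leaving $|\Loc'| > |\Loc|$. Your write-up is in fact slightly cleaner in how it organizes the case split, but there is no substantive difference.
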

\begin{proof}
    We prove this lemma by contradiction. To obtain a contradiction, suppose that there exists ${\M^e}'$ such that ${\M^e}'$ is evidence compatible with $T$, the number of ${\M^e}'$ is equal to that of $\hypothesisMe$, and there exists $w \in \Sigma_E^+$ such that $\hypothesisMe(w) \neq {\M^e}'(w)$. In contrast, by \cref{uniqueness}, 
    for all $w \in \Sigma_E^+$  $\hypothesisMe(w) = {\M^e}'(w)$. This contradicts there exists $w \in \Sigma_E^+$ such that $\hypothesisMe(w) \neq {\M^e}'(w)$. By \cref{evidence-minimality}, 
$\hypothesisMe$ has the minimum number of states among any other Mealy automaton evidence compatible with $T$. Hence ${\M^e}'$ has more states than $\hypothesisMe$. 
\end{proof}
\begin{lemma}\label{bigger}
Let $T = (\domain, S, R, \Sigma_E, E, f)$ be a cohesive observation table, and 
let $\hypothesisM^e = (\Sigma_E,\Loc,\initLoc,\OUTPUT,\transe,\outpe)$ 
 be the evidence Mealy automaton constructed from $T$.
 If the oracle returns a counterexample $\cex \in \Sigma_E^+$ such that $\targetM(\cex) \neq \hypothesisM(\cex)$, let $T' = (\domain, S', R', \Sigma_E', E', f')$ be the observation table after making cohesive, and let  $\hypothesisMeNext = (\Sigma_E',\Loc',\initLoc',\OUTPUT,{\transe}',{\outpe}')$ be the next evidence Mealy automaton, then $\hypothesisMeNext$ has more states than $\hypothesisMe$.
\end{lemma}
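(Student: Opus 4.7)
The plan is to reduce Lemma~\ref{bigger} to Lemma~\ref{smallest} by exhibiting $\hypothesisMeNext$ as a Mealy automaton over $\Sigma_E$ that is evidence compatible with the \emph{old} table $T$ but disagrees with $\hypothesisMe$ on some word in $\Sigma_E^+$, namely on $\cex$ itself.

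First, I would verify that processing the counterexample does not enlarge the essential alphabet. Since $\cex \in \Sigma_E^+$, every prefix added to $R$ lies in $\Sigma_E^*$, and among the makeCohesive operations only $\FMakeOutputClosed$ can introduce new characters to $\Sigma_E$. The invariants $S \cup R \subseteq \Sigma_E^*$ and $E \subseteq \Sigma_E^*$ are preserved by $\FMakeClosed$, $\FMakeConsistent$, and $\FMakeEvidenceClosed$, so $\FMakeOutputClosed$ is never triggered. Hence $\Sigma_E' = \Sigma_E$, and $\hypothesisMeNext$ and $\hypothesisMe$ are Mealy automata over the same input alphabet $\Sigma_E$.

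Next, I would observe that $T'$ is a refinement of $T$: $S \cup R \subseteq S' \cup R'$ and $\Sigma_E \cup E \subseteq \Sigma_E' \cup E'$, while $f'$ restricted to $(S\cup R)\times(\Sigma_E\cup E)$ coincides with $f$ because both record the outputs of $\targetM$. Since $\hypothesisMeNext$ is evidence compatible with $T'$ by construction (\cref{theo:comp} applied to $T'$), it is in particular evidence compatible with $T$. For the disagreement, take $w = \cex$ and write $\cex = \cex' \cdot a$ with $a \in \Sigma_E$ (possible because $\cex \in \Sigma_E^+$). Prefix-closedness and the fact that all prefixes of $\cex$ are pushed into $R$ give $\cex', \cex \in S' \cup R'$, and $a \in \Sigma_E' \cup E'$. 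Evidence compatibility of $\hypothesisMeNext$ with $T'$ then yields
\[
   \hypothesisMeNext(\cex) \;=\; \hypothesisMeNext(\cex' \cdot a) \;=\; f'(\cex', a) \;=\; \targetM(\cex' \cdot a) \;=\; \targetM(\cex).
\]
On the other hand, by the construction of $\hypothesisM$ from $\hypothesisMe$ via the partitioning function (and the stability clause of \cref{partition}), $\hypothesisM(w) = \hypothesisMe(w)$ for every $w \in \Sigma_E^+$; combined with the hypothesis $\hypothesisM(\cex) \neq \targetM(\cex)$, this gives $\hypothesisMe(\cex) \neq \targetM(\cex) = \hypothesisMeNext(\cex)$.

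Applying Lemma~\ref{smallest} to $\hypothesisMeNext$ and the original table $T$ (with witness $w = \cex$) then concludes that $\hypothesisMeNext$ has strictly more states than $\hypothesisMe$. The main obstacle I anticipate is the bookkeeping step showing $\Sigma_E' = \Sigma_E$ and $E' \subseteq \Sigma_E^*$, which is what makes the two evidence automata comparable via Lemma~\ref{smallest}; once that invariant is in place, the rest reduces to combining evidence compatibility with the partition-stability identity $\hypothesisM|_{\Sigma_E^+} = \hypothesisMe$.
\qed
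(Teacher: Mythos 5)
Your proposal is correct and follows essentially the same route as the paper's own proof: establish that $\hypothesisMeNext$ is evidence compatible with the old table $T$, show $\hypothesisMeNext(\cex) = \targetM(\cex) \neq \hypothesisMe(\cex)$ by writing $\cex = c' \cdot a$ and invoking evidence compatibility with $T'$, and then conclude via \cref{smallest}. If anything, you are more careful than the paper, which leaves implicit both the check that $\Sigma_E' = \Sigma_E$ (needed so that \cref{smallest} applies to two automata over the same alphabet) and the bridging identity $\hypothesisM(w) = \hypothesisMe(w)$ for $w \in \Sigma_E^+$ that converts the counterexample condition on $\hypothesisM$ into one on $\hypothesisMe$.
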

\begin{proof}
    It is sufficient to show that $\hypothesisMe(\cex) \neq \hypothesisMeNext(\cex)$. Let $\cex = c'\cdot a$ where $c' \in \Sigma_E^*$ and $a \in \Sigma_E$. Then we have the following.
\begin{align*}
    \hypothesisMeNext(\cex) &= 
    \hypothesisMeNext(c' \cdot a) & (\text{$\cex = c' \cdot a$})\\
    &= f'(c',a) & (\text{evidence compatibility of $\hypothesisMeNext$})\\ &
    = \targetM(c' \cdot a) & (\text{the definition of $T'$})\\
     &
    = \targetM(cex) & (\text{$c' \cdot a = \cex$})\\
    &\neq \hypothesisMe(\cex) & (\text{$\cex$ is the counterexample})
\end{align*}  
Since $\hypothesisMeNext$ is compatible with $T$, by \cref{smallest}, $\hypothesisMeNext$ has more states than $\hypothesisMe$. 
\end{proof}

\begin{lemma}\label{lemma:correctness}
Let $T = (\domain, S, R, \Sigma_E, E, f)$ be a cohesive observation table, and let $\targetM$ be the target s-MA, with $\Loc = \{q_1, q_2, \dots, q_n\}$.
   If the oracle follows $g_1, g_2, \dots, g_n$ at $q_1, q_2, \dots, q_n$ and for any $q_i \in \Loc$, $\Algebra$ is $s_{g_i}$-learnable with $\partitioning$, $|S| = |Q|$, and $\Sigma_E = \SigmaEf$, the $\LambdaM$ algorithm can correctly identify $\targetM$.
\end{lemma}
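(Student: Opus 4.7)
The plan is to show that the s-MA $\hypothesisM$ constructed at \cref{construct-hypothesisM} is semantically equivalent to $\targetM$, so the next equivalence query returns \KTrue{}. The argument proceeds in three steps: (i) match states of $\hypothesisM$ with states of $\targetM$ using symbolic compatibility and minimality; (ii) show that at each matched pair of states, the list of collected characters is enough to recover the correct partition via the stability condition of $\partitioning$; (iii) conclude equivalence of outputs on every input word.

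First, let $\hypothesisM=(\Algebra, \Loc_{\mathrm{hyp}}, \initLoc_{\mathrm{hyp}}, \OUTPUT, \delta_{\mathrm{hyp}})$ be the hypothesis built from $T$. By \cref{theo:symbcomp}, $\hypothesisM$ is symbolic compatible with $T$. By \cref{theo:symbolicminimality}, $\hypothesisM$ has the minimum number of states among s-MAs symbolic compatible with $T$. Because $\targetM$ is symbolic compatible with $T$ too (its outputs define $f$), we have $|\Loc_{\mathrm{hyp}}| \leq |\Loc| = n$. Combining with $|S| = n$ and reducedness of $S$ in \cref{reduced}, I would argue $|\Loc_{\mathrm{hyp}}| = n$ and that the map $s \mapsto \trans(\initLoc, s) \in \Loc$ gives a bijection between $S$ (equivalently, $\Loc_{\mathrm{hyp}}$) and $\Loc$ that commutes with transitions under any input character in $\Sigma_E$; this mimics the arguments I used in \cref{lemma:reach,preserve-transition,preserve-output}, only applied against $\targetM$ rather than a general evidence-compatible Mealy machine.

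Second, fix a hypothesis state $q \in \Loc_{\mathrm{hyp}}$, the corresponding target state $q^\star\in \Loc$ (with target partition $\pi^\star$ and generator $g^\star$), and let $L_q$ be the list of pairwise-disjoint sets that \texttt{sepPred} feeds to $\partitioning$ at $q$. By construction $\flatten{L_q} = \Sigma_E \cap (\text{inputs seen from } q)$, and by step (i) the transitions and outputs observed from $q$ are exactly the restrictions of $\targetM$'s transitions and outputs at $q^\star$. Hence each element of $L_q$ is contained in the corresponding block of $\pi^\star$. Now the hypothesis that $\Sigma_E = \SigmaEf$, together with $\SigmaEf \supseteq l_{g^\star}(\pi^\star) = \flatten{L^{\pi^\star}_\infty}$, will let me compare $L_q$ with the list $L^{\pi^\star}_\infty$ on which $\partitioning$ returns $\pi^\star$. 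Using condition~\ref{item:partitioning-stability} of \cref{partition} (stability under enlarging within the same block), I would conclude that $\partitioning(L_q) = \pi^\star$: both $L^{\pi^\star}_\infty$ and $L_q$ lie within the same list of predicate denotations, and contain the same witnesses guaranteed by the generator.

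Third, once each local partition agrees with the target, the transition-output relations of $\hypothesisM$ and $\targetM$ coincide via the state bijection from step (i), so for every $w \in \domain^+$ one gets $\hypothesisM(w) = \targetM(w)$, and the oracle returns \KTrue{} at \cref{construct-hypothesisM}. The main obstacle I expect is step (ii): carefully reconciling the list $L_q$ built at $q$ (which may contain extra characters harvested by the oracle from counterexamples, interleaved across many states) with the canonical sequence $L^{\pi^\star}_{\infty}$ produced by the generator, so that the stability clause of \cref{partition} actually applies. The bijection of step (i) is what lets me argue that every element of $\flatten{L_q}$ lies in the right block of $\pi^\star$, which is exactly the hypothesis needed to invoke condition~\ref{item:partitioning-stability}. \qed
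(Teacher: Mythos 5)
Your proposal is correct in outline but takes a genuinely different route from the paper. The paper argues \emph{indirectly}: it supposes the hypothesis at \cref{construct-hypothesisM} differs from $\targetM$, notes that the oracle must then return a counterexample $\cex \in (\SigmaEf)^+ = \Sigma_E^+$ (by the ``follows'' assumption together with $\Sigma_E = \SigmaEf$), invokes \cref{bigger} to conclude that processing $\cex$ strictly increases the number of states of the next evidence automaton beyond $|S| = |\Loc|$, and derives a contradiction with \cref{lemma:not-exceed}. Your direct route---exhibiting a state bijection against $\targetM$ and then proving each local partition is recovered---is viable, and your step (i) is sound: $\targetM$ restricted to $\Sigma_E$ is evidence compatible with $T$ and has $n$ states, so the machinery of \cref{injective,right-toal,preserve-transition,preserve-output} applies essentially verbatim. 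What the paper's contradiction buys is precisely that it never has to establish $\partitioning(L_q) = \pi^\star$; what your route buys is a stronger, constructive conclusion (the hypothesis is \emph{syntactically} the right automaton, partition by partition), at the cost of confronting the reconciliation you flag.

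That obstacle in your step (ii) is a genuine gap as written: condition~\ref{item:partitioning-stability}) of \cref{partition} compares two lists \emph{of the same length with aligned indices}, whereas $L_q$ fed to \texttt{sepPred} has length $|\Loc|\times|\OUTPUT|$ (mostly empty blocks) and $L^{\pi^\star}_{\infty}$ has length $|\pi^\star|$; to invoke stability you must first argue that padding with empty blocks and permuting indices does not change the resulting partition, which \cref{partition} does not literally guarantee. The paper's own proof asserts the corresponding claim in one sentence (``the learned partitions are not broken even if redundant characters are in the lists''), but there it is auxiliary, whereas in your argument it is the load-bearing step. If you keep the direct route, you must either strengthen \cref{partition} to cover padded and reordered lists or supply that reconciliation explicitly; otherwise, switching to the state-counting contradiction via \cref{bigger} and \cref{lemma:not-exceed} closes the proof without reasoning about $\partitioning(L_q)$ at all.
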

\begin{proof}
The learned partitions are not broken even if redundant characters are in the lists by condition \ref{item:partitioning-stability}) in \cref{partition}. Therefore, if $\Sigma_E = \SigmaEf$, $\LambdaM$ can identify all the partitions of $\targetM$.

    Let $\hypothesisM = (\domain,\Loc,\initLoc,\OUTPUT,\transition)$ be the s-MA constructed from $T$. We prove this lemma by contradiction. To obtain a contradiction, suppose that $|S| = |Q|$ but $\hypothesisM$ is not equal to $\targetM$. Then, the oracle returns $\cex \in \SigmaEf$, and the next observation table $T' = (\domain, S,' R,' \Sigma_E,' E,' f')$ is constructed. By \cref{bigger}, the next hypothesis has more states than $\hypothesisM$, \ie{} $|S'| > |Q|$. This contradicts \cref{lemma:not-exceed}. 
    
\end{proof}

The following proves \cref{theorem:termination}.

\recallResult{theorem:termination}{\terminationStatement}
\begin{proof}
     By \cref{termination-cohesive}, the loop from \cref{cohesiveM} will terminate. By \cref{lemma:strictly-increase} and \cref{lemma:correctness}, the loop from \cref{mainloopM} will terminate, and $\LambdaM$ correctly identifies $\targetM$.

\end{proof}

\subsection{Proof of \cref{theorem:complexity}}

Here, we show the proof of \cref{theorem:complexity}.
\recallResult{theorem:complexity}{\queryComplexityStatement}

\begin{proof}
    Each equivalence query strictly increases $\Sigma_E$ or $\Prefixes$;
    the latter is by adding a new prefix $\prefix$ to $\NextPrefixes$ violating closedness.
    Thus, the number of equivalence queries is less than $n + |\SigmaEf|$.
    The number of output queries is bounded by the maximum size of the observation table.
    \LongVersion{By \Cref{reduced},}\ShortVersion{Since} for any $\prefix, \prefix' \in \Prefixes$,
    $\prefix \neq \prefix'$ implies $\row(\prefix) \neq \row(\prefix')$\LongVersion{. }%
    \LongVersion{Therefore}, we have $|\Prefixes| \leq n$.
    In $\LambdaM$,
    only $\FMakeConsistent$ increases $\Suffixes$.
    Since $\FMakeConsistent$ is executed when inconsistent prefixes are found,
    the number of its execution is at most $n - 1$.
    Since each execution of $\FMakeConsistent$ increases $\Suffixes$ by one,
    we have $|\Suffixes| < n$.
    $\NextPrefixes$ can be increased when handling a counterexample from an equivalence query or by $\FMakeEvidenceClosed$.
    Since the number of equivalence queries is less than $n + |\SigmaEf|$, and
    $\NextPrefixes$ can increase at most by $m$ for each equivalence query,
    counterexamples can increase $\NextPrefixes$ at most by  $m \times (n + |\SigmaEf|)$ in total.
    By $\FMakeEvidenceClosed$, $\NextPrefixes$ can increase at most by $|\Prefixes| \times |\SigmaEf|$, which is bounded by $n \times |\SigmaEf|$.
    Thus, we have $|\NextPrefixes| \leq m \times (n + |\SigmaEf|) + n \times |\SigmaEf|$.
    Overall, the number of output queries is bounded by $(|\Prefixes| + |\NextPrefixes|) \times (|\SigmaEf| + |\Suffixes|) < (n + m \times (n + |\SigmaEf|) + n \times |\SigmaEf|) \times (|\SigmaEf| + n ) = (|\SigmaEf| + m + 1) \times n^2 + (2 m + |\SigmaEf| + 1) \times |\SigmaEf| \times n + m |\SigmaEf|^2$.
\end{proof}

\subsection{Proof of \cref{proposition:lower_bound}}
Here, we show the proof of \cref{proposition:lower_bound}. Our construction in the proof is inspired by that of~\cite[Theorem~3]{DBLP:conf/icgi/KrugerGV23}.
\recallResult{proposition:lower_bound}{\lowerBoundStatement}
\begin{proof}
Let $n \geq 2$ and $k \geq n$ be natural numbers. We consider the s-MA $\M_{n,k} = (\Algebra,\Loc,\initLoc,\OUTPUT,\transition)$, where $\Algebra$ is the interval algebra over naturals (\cref{example:interval-algebra}), $\Loc = \{q_0,q_1,q_2,\dots, q_{2n-1}\}$, $\initLoc = q_0$, $\OUTPUT = \{-1,0,1,\dots,k-1\}$, $\transition = \transition^{\text{spine}} \cup \transition^{\text{loop-even}} \cup \transition^{\text{loop-odd}}\cup \transition^{\text{end}}$, and $\transition^{\text{spine}}$, $\transition^{\text{loop-even}}$, $\transition^{\text{loop-odd}}$, and $\transition^{\text{end}}$ are as follows:

\begin{itemize}
\item $\transition^{\text{spine}} = \{ (q_{i}, [0,10), q_{i+1},0) \mid 0 \leq i < 2n - 1\}$

\item $\transition^{\text{loop-even}} = \{ (q_{2m}, [10l, 10l + 10), q_{2m},l) \mid 0 \leq m < n, 1\leq l < k -1, l \neq m\}
\cup 
\{(q_{2m}, [10(k-1), +\infty), q_{2m}, k-1) \mid 0 \leq m < n, m \neq k-1\} 
\cup 
\{(q_{2m}, [10m, 10m + 10), q_{2m}, -1) \mid 0 \leq m < n, m \neq k -1\} 
\cup 
\{(q_{2(k-1)}, [10(k-1), +\infty),q_{2(k-1)},-1) \mid k = n\}$

\item $\transition^{\text{loop-odd}} = \{ (q_{2m + 1},[10l,10l + 10), q_{2m + 1},l) \mid 0 \leq m < n, 1 \leq l < k -1\} \cup \{ (q_{2m + 1}, [10(k-1), +\infty), q_{2m+1}, k-1)\mid 0 \leq m < n\}$

\item $\transition^{\text{end}} = \{(q_{2n -1}, [10l, 10l + 10), q_{2n -1}, l) \mid 0 \leq l < k - 1\}  \cup \{(q_{2n -1}, [10(k-1), +\infty), q_{2n-1}, k-1) \}$
\end{itemize}

\Cref{eq-example} shows $\M_{3, 3}$, \ie{} $\M_{n,k}$ %
with $n = 3$ and $k = 3$.

We prove that the learner learns $\M_{n,k}$ using at least $n + k$ equivalence queries, where $|\SigmaEf| = k$. By assumption, the oracle returns lexicographically minimal counterexamples. 
For simplicity, we assume that 
the first character chosen by the learner be $0$
\footnote{The number of equivalence queries is the same if the first character is one of $0, 10, \dots, 10 (k-1)$. Otherwise, the learner requires another equivalence query to identify all of them.}.
Firstly, the learner learns the partition and the corresponding output at the initial state $q_0$. The oracle returns $10,20,30,\ldots,10(k -1)$ as a counterexample. This learning corresponds to the learning of the partition, and the number of equivalence queries corresponds to $k-1 = |\SigmaEf| - 1 $. \Cref{one-state} shows the observation table and the hypothesis after learning the partition at the initial state.
\begin{figure}[t]
    \centering
    \quad
    \begin{minipage}{.4\linewidth}
  \begin{tabular}{l|ccccc|}
      &0& 10 & 20 & $\dots$ & $10(k-1)$  \\ \hline
    $\epsilon$  & 0 & 1 & 2 & $\dots$ & $k-1$ \\ \hline
    0 & 0 & 1 & 2 & $\dots$ & $k-1$  \\
    10 & 0 & 1 & 2 & $\dots$ & $k-1$ \\
    20 & 0 & 1 & 2 & $\dots$ & $k-1$\\
    $\vdots$ & $\vdots$ & $\vdots$ & $\vdots$ & $\dots$ & $\vdots$ \\
    $10(k-1)$ & 0 & 1 & 2 & $\dots$ & $k-1$ \\
  \end{tabular}
  \end{minipage}
  \hfill`
      \begin{minipage}{.3\linewidth}
   \begin{tikzpicture} [node distance = 2.5cm, on grid, auto]
 
\node (q0) [state, initial, initial text = {}] {$q_0$};
 
\path [-stealth, thick]
   (q0) edge [loop above]  node[align = center] {$[0,10)\mid 0$ \\ $[10,20)\mid 1$ \\ $\vdots$ \\ $[10(k-1), +\infty) \mid k-1$}();
\end{tikzpicture}
  \end{minipage}\quad
     \caption{The observation table and the hypothesis s-MA just after identifying all $\SigmaEf$.}\label{one-state}
\end{figure}

The learner then proceeds to learn the states of $\M_{n,k}$. As a counterexample to distinguish the states, $0\cdot 0\cdot 10$ is first returned as a counterexample, and the states $q_1$ and $q_2$ appear in the hypothesis. Next, $0\cdot 0\cdot 0\cdot 0\cdot 20$ is returned as a counterexample, and the states $q_3$ and $q_4$ appear in the hypothesis. Similarly, equivalence queries are made $n - 1$ times until states $q_{2n-3}$ and $q_{2n-2}$ appear in the hypothesis. Finally, the learner asks an equivalence query to make state $q_{2n-1}$ appear. \Cref{2n1-state} shows the final observation table. Therefore, $n$ queries are required to learn the states of $\M_{n,k}$. Finally, the learner asks an equivalence query that the answer is true. Thus, the overall number of equivalence queries is $(k - 1) + n + 1 = n + k$.
\begin{figure}[t]
  \centering
  \scalebox{.8}{
  \begin{tabular}{l|ccccccc|ccccc}
      & 0 & 10 & 20 & $\dots$ & $10(n-1)$ & $\dots$ & $10(k-1)$ & $0 \cdot 10$ & $0\cdot 20$ & $\dots$ & $0\cdot 10(n-1)$ & $0\cdot 0\cdot 10$  \\ \hline
    $\epsilon$  & 0 & 1 & 2 & $\dots$ & $n-1$ & $\dots$ & $k-1$ & 1 & 2 & $\dots$ & $n-1$ & -1\\ 
    0 & 0 & 1 & 2 & $\dots$ & $n-1$ & $\dots$  & $k-1$ & -1 & 2 & $\dots$ & $n-1$ & 1\\ 
    $0\cdot 0$ & 0 & -1 & 2 & $\dots$ & $n-1$ & $\dots$ & $k-1$ & 1 & 2 & $\dots$ & $n-1$ & 1\\
    $0\cdot 0\cdot 0$ & 0 & 1 & 2 & $\dots$ & $n-1$ & $\dots$  & $k-1$ & 1 & -1 & $\dots$ & $n-1$ & 1\\ 
    $0\cdot 0\cdot 0\cdot 0$ & 0 & 1 & -1 & $\dots$ & $n-1$ & $\dots$  & $k-1$ & 1 & 2 & $\dots$ & $n-1$ & 1\\
    \qquad$\vdots$ & $\vdots$ &$\vdots$ &$\vdots$ &$\vdots$ &$\vdots$ &$\vdots$ &$\vdots$ &$\vdots$ &$\vdots$ &$\vdots$ &$\vdots$ &$\vdots$ \\
     $\underbrace{0\cdot 0\cdot \cdots \cdot 0}_{2n-3\text{ times}}$ & 0 & 1 & 2 & $\dots$ & $n-1$ & $\dots$  & $k-1$ & 1 & 2 & $\dots$ & $-1$ & $1$\\ 
    $\underbrace{0\cdot 0\cdot \cdots \cdot 0 \cdot 0}_{2n-2\text{ times}}$ & 0 & 1 & 2 & $\dots$ & $-1$ & $\dots$  & $k-1$ & 1 & 2 & $\dots$ & $n-1$ & $1$\\ 
    $\underbrace{0\cdot 0\cdot \cdots \cdot 0 \cdot 0\cdot 0}_{2n -1\text{ times}}$ & 0 & 1 & 2 & $\dots$& $n-1$ & $\dots$   & $k-1$ & 1 & 2 & $\dots$ & $n-1$ & 1\\ 
    \hline
     \qquad$\vdots$ & $\vdots$ &$\vdots$ &$\vdots$ &$\vdots$ &$\vdots$ &$\vdots$ &$\vdots$ &$\vdots$ &$\vdots$ &$\vdots$ & $\vdots$ & $\vdots$
  \end{tabular}}
  \caption{The final observation table.}\label{2n1-state}
\end{figure}

\end{proof}

\section{A Worked Example}
We consider the target s-MA $\targetM$ shown in \cref{ex-target}.
We use the interval algebra over naturals (\cref{example:interval-algebra}) and the partitioning function in \cref{algorithm:partitioning_interval_algebra}.

First, the initial observation table $T_1$ in \cref{fig:initial-round} is created. $T_1$ is cohesive, then the first hypothesis evidence Mealy automaton and hypothesis s-MA in \cref{fig:initial-round} are constructed. 

The first hypothesis is not correct. Thus, the oracle returns a counterexample $20$. $20$ is added to $R$. $T_2$ in \cref{fig:second-round} is the observation table after the counterexample is added. $T_2$ is not output-closed since $20 \notin \Sigma_E$. \FMakeOutputClosed is applied to $T_2$, so $20$ is added to $\Sigma_E$. $T_3$ is the observation table after \FMakeOutputClosed is applied. $T_3$ is cohesive, then the second hypothesis evidence Mealy automaton and hypothesis s-MA in \cref{fig:second-round} is constructed.

The second hypothesis is not correct. Thus, the oracle returns a counterexample $0\cdot 0 \cdot 0$. Then, $0 \cdot 0$ and $0 \cdot 0 \cdot 0$ are added to $R$. $T_4$ in \cref{fig:third-round} is the observation table after the counterexample is added. $T_4$ is not consistent since $\row(\emptyword) = \row(0)$ but $\row(0) \neq \row(0\cdot 0)$. \FMakeConsistent is applied to $T_4$, then $0 \cdot 0$ is added to $E$. $T_5$ in \cref{fig:third-round} is the observation table after \FMakeConsistent is applied. $T_5$ is not closed since there is no $s \in S$ such that $ \row(0) = \row(s)$, $ \row(20) = \row(s)$, and $ \row(0 \cdot 0) = \row(s)$. \FMakeClosed is applied to $T_5$, then $0, 20, 0\cdot 0$ is moved to $S$ from $R$. $T_6$ in \cref{fig:third-round} is the observation table after \FMakeClosed is applied. $T_6$ in \cref{fig:third-round} is not evidence-closed. Then, \FMakeEvidenceClosed is applied to $T_6$. $T_7$ is the observation table after \FMakeEvidenceClosed is applied. $T_7$ is cohesive, then the third hypothesis evidence Mealy automaton and hypothesis s-MA in \cref{fig:third-round} is constructed. 

The third hypothesis is not correct. Thus, the oracle returns a counterexample $0 \cdot 0 \cdot 10 \cdot 0$. Then, $0\cdot 0 \cdot 10$ and $0 \cdot 0 \cdot 10 \cdot 0$ are added to $R$. $T_8$ in \cref{fig:final-round} is the observation table after the counterexample is added. $T_8$ is not output-closed since $10 \notin \Sigma_E$. \FMakeOutputClosed is applied to $T_8$, then $10$ is added to $\Sigma_E$. $T_9$ in \cref{fig:final-round} is the observation table after \FMakeOutputClosed is applied. $T_9$ is not evidence-closed. Then, \FMakeEvidenceClosed is applied to $T_9$. $T_{10}$ in \cref{fig:final-round} is the observation table after \FMakeEvidenceClosed is applied. $T_{10}$ is cohesive, then the fourth hypothesis evidence Mealy automaton and hypothesis s-MA in \cref{fig:final-round} is constructed. This hypothesis is equal to $\targetM$. Therefore, the $\LambdaM$ algorithm returns it and terminates.
\begin{figure}[tbp]
\begin{tikzpicture} [node distance = 3cm, on grid, auto]
 
\node (q0) [state, initial, initial text = {}] {$q_0$};
\node (q1) [state, right = of q0] {$q_1$};
\node (q2) [state, below = of q1] {$q_2$};
\node (q3) [state, below = of q0] {$q_3$};
 
\path [-stealth, thick]
    (q0) edge  []node {$[0,20)\mid \text{S}$} (q1) 
    (q0) edge [loop above]  node {$[20,\infty)\mid \text{B}$}()
    (q1) edge [loop above] node {$[20,\infty)\mid \text{B}$}()
    (q1) edge  [bend left ]node {$[0,20)\mid \text{S}$} (q2)
    (q2) edge  [bend left ]node {$[10,\infty)\mid \text{P}$} (q1)
    (q2) edge []  node {$[0,10)\mid \text{P}$}(q3)
    (q3) edge  []node {$[0,\infty)\mid \text{P}$}   (q0);
\end{tikzpicture}
\centering
\caption{The target s-MA $\targetM$}
\label{ex-target}
\end{figure}

\begin{figure}[tbp]
\begin{minipage}{.3\linewidth}
\begin{tabular}{l|c}
      $T_1$& 0  \\\hline
      $\epsilon$ & S  \\\hline
      0 & S  \\
    \end{tabular}
\end{minipage}
\begin{minipage}{.3\linewidth}
\begin{tikzpicture} [node distance = 2.5cm, on grid, auto]
 
\node (q0) [state, initial, initial text = {}] {$q_0$};
 
\path [-stealth, thick]
   (q0) edge [loop above]  node[align = center] {$0\mid \text{S}$}();
\end{tikzpicture} 
\end{minipage}
\begin{minipage}{.3\linewidth}
\begin{tikzpicture} [node distance = 2.5cm, on grid, auto]
 
\node (q0) [state, initial, initial text = {}] {$q_0$};
 
\path [-stealth, thick]
   (q0) edge [loop above]  node[align = center] {$[0,\infty)\mid \text{S}$}();
\end{tikzpicture} 
\end{minipage}
\centering
\caption{The initial observation table, the first hypothesis evidence Mealy automaton, and the first hypothesis s-MA.}
\label{fig:initial-round}
\end{figure}
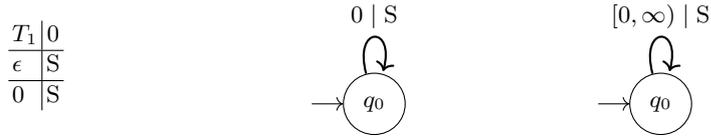
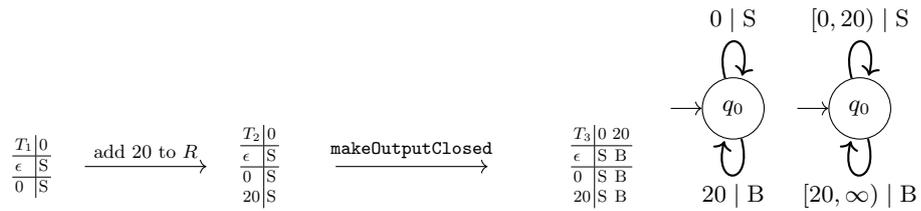
\begin{figure}[tbp]
\begin{tikzpicture}
  \node at (0, 0) {
  \scalebox{.7}{
   \begin{tabular}{l|c}
      $T_1$& 0  \\\hline
      $\epsilon$ & S  \\\hline
      0 & S  \\
    \end{tabular}
    }
  };

  \draw[->] (0.7,0) -- (2.3, 0) node[midway, above] {\scalebox{.8}{add 20 to $R$}};

  \node at (3, 0) {
  \scalebox{.7}{
  \begin{tabular}{l|c}
      $T_2$& 0  \\\hline
      $\epsilon$ & S  \\\hline
      0 & S \\
      20 & S  \\
    \end{tabular}
    }
  };
  \draw[->] (4,0) -- (6, 0) node[midway, above] {\scalebox{.8}{$\FMakeOutputClosed$}};

  \node at (7.5, 0) {
  \scalebox{.7}{
  \begin{tabular}{l|cc}
      $T_3$& 0 & 20 \\\hline
      $\epsilon$ & S & B \\\hline
      0 & S & B\\
      20 & S & B \\
    \end{tabular}
    }
  };
  
\end{tikzpicture}
\begin{tikzpicture} [node distance = 2.5cm, on grid, auto]
 
\node (q0) [state, initial, initial text = {}] {$q_0$};
 
\path [-stealth, thick]
   (q0) edge [loop above]  node[align = center] {$0\mid \text{S}$}()
   (q0) edge [loop below]  node[align = center] {$20\mid \text{B}$}();
\end{tikzpicture} 
\begin{tikzpicture} [node distance = 2.5cm, on grid, auto]
 
\node (q0) [state, initial, initial text = {}] {$q_0$};
 
\path [-stealth, thick]
   (q0) edge [loop above]  node[align = center] {$[0,20)\mid \text{S}$}()
   (q0) edge [loop below]  node[align = center] {$[20,\infty)\mid \text{B}$}();
\end{tikzpicture} 
\centering
\caption{The observation tables in the second loop and the second hypothesis evidence and s-MA.}
\label{fig:second-round}
\end{figure}

\begin{figure}[tbp]
\begin{tikzpicture}
  \node at (0, 0) {
  \scalebox{.7}{
   \begin{tabular}{l|cc}
      $T_3$& 0 & 20 \\\hline
      $\epsilon$ & S & B \\\hline
      0 & S & B\\
      20 & S & B \\
    \end{tabular}
    }
  };

  \draw[->] (1.5,0) -- (3, 0) node[midway, above,align = center] {add $0\cdot 0$, \\$0\cdot 0 \cdot 0$ to $R$};

  \node at (5, 0) {
  \scalebox{.7}{
   \begin{tabular}{l|cc}
      $T_4$& 0 & 20 \\\hline
      $\epsilon$ & S & B \\\hline
      0 & S & B\\
      20 & S & B \\
      $0\cdot 0$ & P & P \\
      $0\cdot 0 \cdot 0$ & P & P \\
    \end{tabular}
    }
  };
  \draw[->] (7,0) -- (9, 0) node[midway, above] {$\FMakeConsistent$};

  \node at (11, 0) {
  \scalebox{.7}{
  \begin{tabular}{l|cc|c}
      $T_5$& 0 & 20 & $0\cdot 0$\\\hline
      $\epsilon$ & S & B & S\\\hline
      0 & S & B & P\\
      20 & S & B & S \\
      $0\cdot 0$ & P & P & P \\
      $0\cdot 0 \cdot 0$ & P & P & S \\
    \end{tabular}
    }
  };
  
\end{tikzpicture}

\begin{tikzpicture}
  \draw[->] (0,0) -- (2, 0) node[midway, above] {$\FMakeClosed$};

  \node at (3.8, 0) {
  \scalebox{.7}{
  \begin{tabular}{l|cc|c}
      $T_6$& 0 & 20 & $0\cdot 0$\\\hline
      $\epsilon$ & S & B & S\\
      0 & S & B & P\\
      $0\cdot 0$ & P & P & P \\
      $0\cdot 0 \cdot 0$ & P & P & S \\\hline
      20 & S & B & S \\
    \end{tabular}
    }
  };
  \draw[->] (6,0) -- (9, 0) node[midway, above] {$\FMakeEvidenceClosed$};

  \node at (11.5, 0) {
  \scalebox{.7}{
  \begin{tabular}{l|cc|c}
      $T_7$& 0 & 20 & $0\cdot 0$\\\hline
      $\epsilon$ & S & B & S\\
      0 & S & B & P\\
      $0\cdot 0$ & P & P & P \\
      $0\cdot 0 \cdot 0$ & P & P & S \\\hline
      20 & S & B & S \\
      $0\cdot 20$ & S & B & P \\
      $0\cdot 0 \cdot 20$ & S & B & P \\
      $0\cdot 0 \cdot 0\cdot 0$ & S & B & S \\
      $0\cdot 0 \cdot 0 \cdot 20$ & S & B & S \\
    \end{tabular}
    }
  };
\end{tikzpicture}

\begin{tikzpicture} [node distance = 3cm, on grid, auto]
 
\node (q0) [state, initial, initial text = {}] {$q_0$};
\node (q1) [state, right = of q0] {$q_1$};
\node (q2) [state, below = of q1] {$q_2$};
\node (q3) [state, below = of q0] {$q_3$};
 
\path [-stealth, thick]
    (q0) edge  []node {$0\mid \text{S}$} (q1) 
    (q0) edge [loop above]  node {$20\mid \text{B}$}()
    (q1) edge [loop above] node {$20\mid \text{B}$}()
    (q1) edge  [bend left ]node {$0\mid \text{S}$} (q2)
    (q2) edge  [bend left ]node {$20\mid \text{P}$} (q1)
    (q2) edge []  node {$0\mid \text{P}$}(q3)
    (q3) edge  []node {$0,20\mid \text{P}$}   (q0);
\end{tikzpicture}
\begin{tikzpicture} [node distance = 3cm, on grid, auto]
 
\node (q0) [state, initial, initial text = {}] {$q_0$};
\node (q1) [state, right = of q0] {$q_1$};
\node (q2) [state, below = of q1] {$q_2$};
\node (q3) [state, below = of q0] {$q_3$};
 
\path [-stealth, thick]
    (q0) edge  []node {$[0,20)\mid \text{S}$} (q1) 
    (q0) edge [loop above]  node {$[20,\infty)\mid \text{B}$}()
    (q1) edge [loop above] node {$[20,\infty)\mid \text{B}$}()
    (q1) edge  [bend left ]node {$[0,20)\mid \text{S}$} (q2)
    (q2) edge  [bend left ]node {$[20,\infty)\mid \text{P}$} (q1)
    (q2) edge []  node {$[0,20)\mid \text{P}$}(q3)
    (q3) edge  []node {$[0,\infty)\mid \text{P}$}   (q0);
\end{tikzpicture}
\caption{The observation tables in the third loop and the third hypothesis evidence and s-MA.}
\label{fig:third-round}
\end{figure}
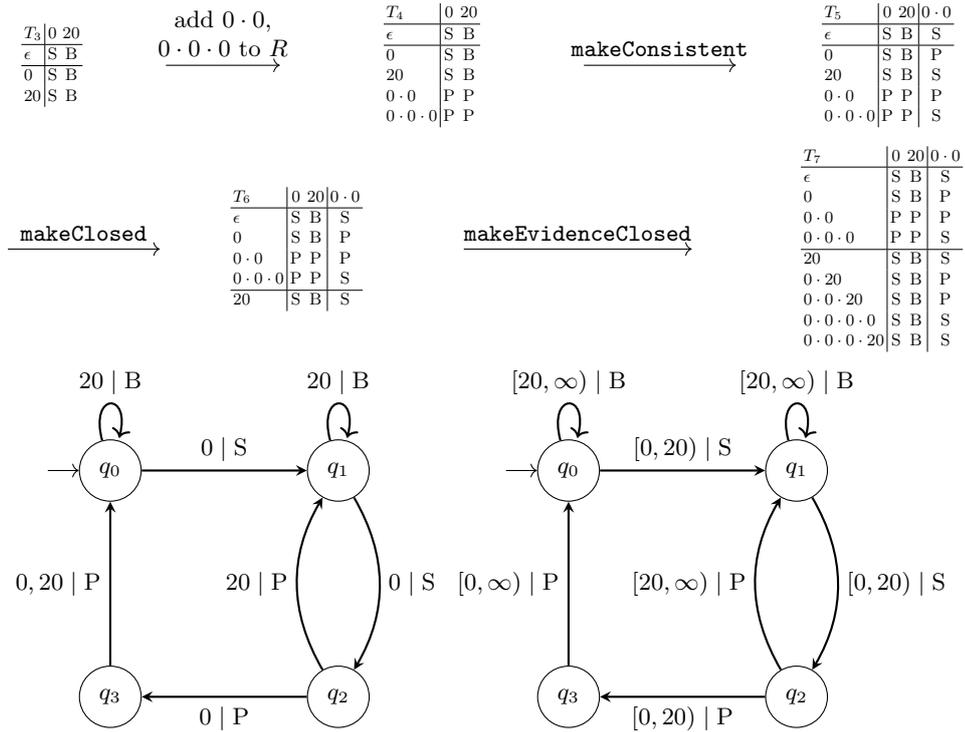

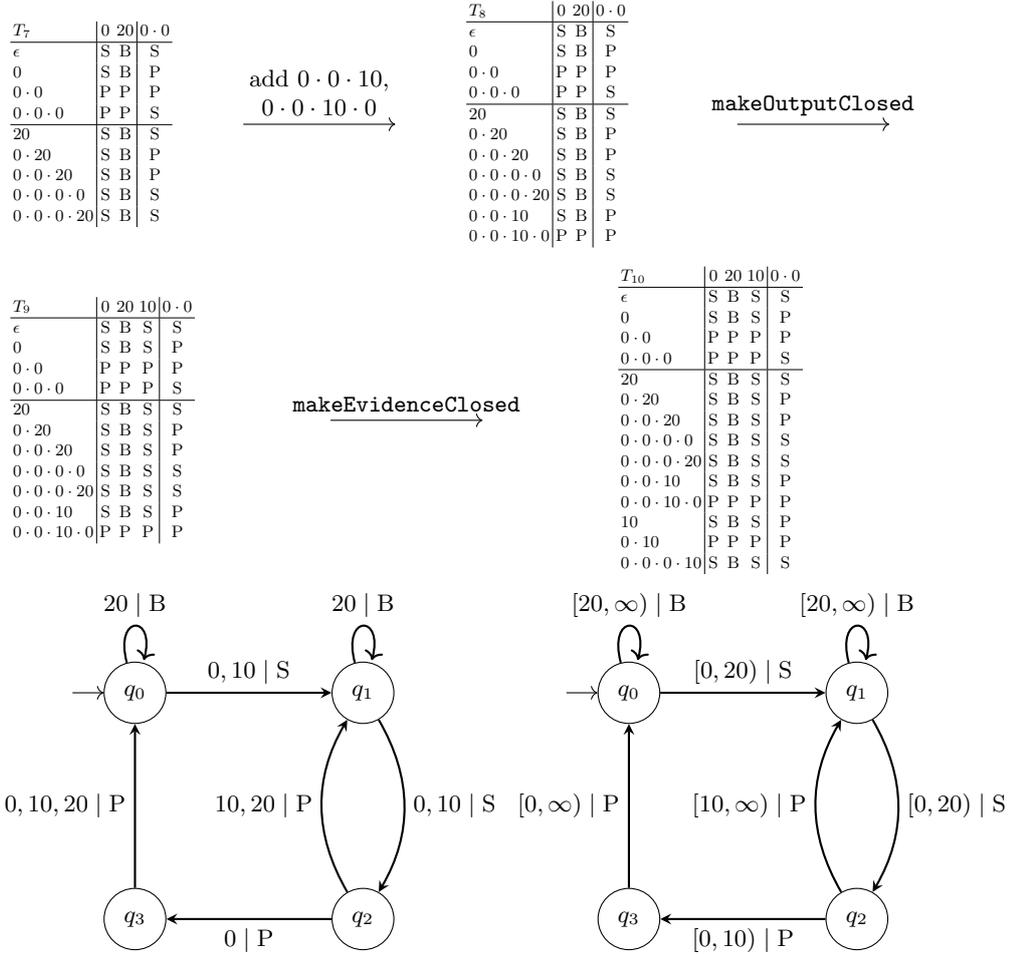
\begin{figure}
\begin{tikzpicture}
  \node at (0, 0) {
  \scalebox{.7}{
  \begin{tabular}{l|cc|c}
      $T_7$& 0 & 20 & $0\cdot 0$\\\hline
      $\epsilon$ & S & B & S\\
      0 & S & B & P\\
      $0\cdot 0$ & P & P & P \\
      $0\cdot 0 \cdot 0$ & P & P & S \\\hline
      20 & S & B & S \\
      $0\cdot 20$ & S & B & P \\
      $0\cdot 0 \cdot 20$ & S & B & P \\
      $0\cdot 0 \cdot 0\cdot 0$ & S & B & S \\
      $0\cdot 0 \cdot 0 \cdot 20$ & S & B & S \\
    \end{tabular}
    }
  };

  \draw[->] (2,0) -- (4, 0) node[midway, above, align = center] {add $0\cdot 0 \cdot 10$, \\ $0\cdot 0 \cdot 10 \cdot 0$};

  \node at (6, 0) {
  \scalebox{.7}{
  \begin{tabular}{l|cc|c}
      $T_8$& 0 & 20 & $0\cdot 0$\\\hline
      $\epsilon$ & S & B & S\\
      0 & S & B & P\\
      $0\cdot 0$ & P & P & P \\
      $0\cdot 0 \cdot 0$ & P & P & S \\\hline
      20 & S & B & S \\
      $0\cdot 20$ & S & B & P \\
      $0\cdot 0 \cdot 20$ & S & B & P \\
      $0\cdot 0 \cdot 0\cdot 0$ & S & B & S \\
      $0\cdot 0 \cdot 0 \cdot 20$ & S & B & S \\
      $0\cdot 0 \cdot 10$ & S & B & P \\
      $0\cdot 0 \cdot 10 \cdot 0$ & P & P & P \\
    \end{tabular}
    }
  };
  \draw[->] (8.5,0) -- (10.5, 0) node[midway, above, align = center] {$\FMakeOutputClosed$};

\end{tikzpicture}

\begin{tikzpicture}
\node at (0, 0) {
\scalebox{.7}{
  \begin{tabular}{l|ccc|c}
      $T_9$& 0 & 20 & 10& $0\cdot 0$\\\hline
      $\epsilon$ & S & B &S & S\\
      0 & S & B &S & P\\
      $0\cdot 0$ & P & P &P& P \\
      $0\cdot 0 \cdot 0$ & P & P&P & S \\\hline
      20 & S & B &S& S \\
      $0\cdot 20$ & S & B & S& P \\
      $0\cdot 0 \cdot 20$ & S & B &S& P \\
      $0\cdot 0 \cdot 0\cdot 0$ & S & B &S& S \\
      $0\cdot 0 \cdot 0 \cdot 20$ & S & B &S& S \\
      $0\cdot 0 \cdot 10$ & S & B &S& P \\
      $0\cdot 0 \cdot 10 \cdot 0$ & P & P &P& P \\
    \end{tabular}
    }
  };
      \draw[->] (3,0) -- (5, 0) node[midway, above, align = center] {$\FMakeEvidenceClosed$};

  \node at (8, 0) {
  \scalebox{.7}{
  \begin{tabular}{l|ccc|c}
      $T_{10}$& 0 & 20 & 10& $0\cdot 0$\\\hline
      $\epsilon$ & S & B &S & S\\
      0 & S & B &S & P\\
      $0\cdot 0$ & P & P &P& P \\
      $0\cdot 0 \cdot 0$ & P & P&P & S \\\hline
      20 & S & B &S& S \\
      $0\cdot 20$ & S & B & S& P \\
      $0\cdot 0 \cdot 20$ & S & B &S& P \\
      $0\cdot 0 \cdot 0\cdot 0$ & S & B &S& S \\
      $0\cdot 0 \cdot 0 \cdot 20$ & S & B &S& S \\
      $0\cdot 0 \cdot 10$ & S & B &S& P \\
      $0\cdot 0 \cdot 10 \cdot 0$ & P & P &P& P \\
      $10$ & S & B & S& P \\
      $0\cdot10$ & P & P & P& P \\
      $0\cdot 0 \cdot 0\cdot 10$ & S & B & S& S \\
    \end{tabular}
    }
  };
\end{tikzpicture}

\begin{tikzpicture} [node distance = 3cm, on grid, auto]
 
\node (q0) [state, initial, initial text = {}] {$q_0$};
\node (q1) [state, right = of q0] {$q_1$};
\node (q2) [state, below = of q1] {$q_2$};
\node (q3) [state, below = of q0] {$q_3$};
 
\path [-stealth, thick]
    (q0) edge  []node {$0,10\mid \text{S}$} (q1) 
    (q0) edge [loop above]  node {$20\mid \text{B}$}()
    (q1) edge [loop above] node {$20\mid \text{B}$}()
    (q1) edge  [bend left ]node {$0,10\mid \text{S}$} (q2)
    (q2) edge  [bend left ]node {$10,20\mid \text{P}$} (q1)
    (q2) edge []  node {$0\mid \text{P}$}(q3)
    (q3) edge  []node {$0,10,20\mid \text{P}$}   (q0);
\end{tikzpicture}
\begin{tikzpicture} [node distance = 3cm, on grid, auto]
 
\node (q0) [state, initial, initial text = {}] {$q_0$};
\node (q1) [state, right = of q0] {$q_1$};
\node (q2) [state, below = of q1] {$q_2$};
\node (q3) [state, below = of q0] {$q_3$};
 
\path [-stealth, thick]
    (q0) edge  []node {$[0,20)\mid \text{S}$} (q1) 
    (q0) edge [loop above]  node {$[20,\infty)\mid \text{B}$}()
    (q1) edge [loop above] node {$[20,\infty)\mid \text{B}$}()
    (q1) edge  [bend left ]node {$[0,20)\mid \text{S}$} (q2)
    (q2) edge  [bend left ]node {$[10,\infty)\mid \text{P}$} (q1)
    (q2) edge []  node {$[0,10)\mid \text{P}$}(q3)
    (q3) edge  []node {$[0,\infty)\mid \text{P}$}   (q0);
\end{tikzpicture}
\caption{The observation tables in the final loop and the final hypothesis evidence and s-MA.}
\label{fig:final-round}
\end{figure}

\section{Detail of the Benchmarks}
We used two Boolean algebras and their product algebra. We used an interval algebra over naturals and an interval algebra over 64bit floating-point numbers.
The partitioning function for the interval algebra over double-precision floating-point numbers can handle only left-closed and right-open intervals.
So, we encode, for example, $0 < x \leq 100$ with $\na(0) \leq x < \na(100)$, where $\na(y)$ is the nearest floating-point number greater than $y$. 
For the Boolean values, we use the interval algebra over naturals restricted to $\{0, 1\}$.

\subsection{\MH{} (Mars Helicopter)}
\MH{} is a benchmark on a control system for a helicopter on Mars. We took the model from~\cite{MH}.  
Each input character of \MH{} consists of the following four values: a Boolean value prepareFlight and the floating-point values for the altitude, the temperature, and the state of charge. 
The output alphabet is $\OUTPUT = \{\emptyset,\{\text{heater}\},\{\text{fly}\}, \{\text{fly,altitudeRef}\}\}$. 
\Cref{MH} shows the target s-MA. %

\begin{figure}
\centering
\begin{tikzpicture}[shorten >=1pt, node distance=3cm, on grid, auto,font=\scriptsize]

\tikzstyle{every state}=[fill=none,draw=black,text=black]

\node (q0) [state, initial, initial text = {}] {$q_0$};
\node (q1) [state, above = of q0] {$q_1$};
\node (q2) [state, below = of q0] {$q_2$};
\node (q3) [state, below = of q2] {$q_3$};
\node (q4) [state, below = of q3] {$q_4$};

\path [-stealth, thick]
(q0) edge [loop right] node [align = center]{$([1,1]\times[0,1e5)$\\$\times[-15,1e4)\times[0,\na(0.4)))$\\$\cup([0,0]\times[0,1e5)$\\$\times[-15,1e4)\times[0,\na(1)))$ $\mid$ $\emptyset$} ()
(q0) edge [bend left] node [align = center] {$[1,1]\times[0,1e5)$\\$\times[-15,1e4) \times[\na(0.4),\na(1))$ \\$\mid$ \{heater\}} (q2)
(q0) edge [bend left] node [align = center] {$[0,1]\times[0,1e5)$\\$\times[-274,-15)\times[0,\na(1))$ $\mid$  \{heater\}} (q1)
(q1) edge [loop above] node [align = center] {$[0,1]\times[0,1e5)\times[-274,\na(-10))\times[0.2,\na(1))$ $\mid$  \{heater\}} ()
(q1) edge [bend left] node [align = center] {$([0,1]\times[0,1e5)$\\$\times[\na(-10),1e4)\times[0,\na(1)))$\\$\cup([0,1]\times[0,1e5)$\\$\times[-274,\na(-10))\times[0,0.2))$ $\mid$ $\emptyset$} (q0)
(q2) edge [loop right] node [align = center]{$[1,1]\times[0,1e5)$\\$\times[-274,\na(10))\times[0.3,\na(1))$ \\$\mid$  \{heater\}} ()
(q2) edge node [align = center]{$[1,1]\times[0,1e5)$\\$\times[\na(10),1e4)\times[0.3,\na(1))$ \\$\mid$  \{fly,altitudeRef\}} (q3)
(q2) edge [bend left] node [align = center]{$([0,0]\times[0,1e5)$\\$\times[-274,1e4)\times[0,\na(1)))$\\$\cup([1,1]\times[0,1e5)$\\$\times[-274,1e4)\times[0,0.3))$ $\mid$  $\emptyset$} (q0)
(q3) edge [loop right] node [align = center]{$[1,1]\times[0,1e5)$\\$\times[-274,1e4)\times[0.3,\na(1))$ \\$\mid$  \{fly,altitudeRef\}} ()
(q3) edge node [align = center]{$([0,0]\times[0,1e5)$\\$\times[-274,1e4)\times[0,\na(1)))$\\$\cup([1,1]\times[0,1e5)$\\$\times[-274,1e4)\times[0,0.3))$ $\mid$  \{fly\}} (q4)
(q4) edge [loop below] node {$[0,1]\times[0.1,1e5)\times[-274,1e4)\times[0,\na(1))$ $\mid$  \{fly\}} ()
(q4) edge [bend left=20] node [align = center]{$[0,1]\times[0,0.1)$\\$\times[-274,1e4)\times[0,\na(1))$ $\mid$ $\emptyset$} (q0);

\end{tikzpicture}
\caption{\MH}\label{MH}
\end{figure}

\subsection{\ATGS{} (Automatic Transmission Gear System)}

\ATGS{} is a benchmark on an automatic transmission gear system for a car. We took the model from \cite{ATGS}. \ATGS{} takes two inputs: the throttle and the velocity. 
The output alphabet is $\OUTPUT = \{\text{gear1},\text{gear2},\text{gear3},\text{gear4}\}$. \Cref{ATGS} shows the s-MA representing the behavior of it, where the predicates in \cref{ATGS} are as follows.

\begin{figure}
    \centering
    \begin{tikzpicture}[shorten >=1pt, node distance=3cm, on grid, auto,scale=1.0,font=\tiny,every node/.style={transform shape,initial text=}]

    \tikzstyle{every state}=[fill=none,draw=black,text=black]

    \node (q100) [state, initial, initial text = {}] {$q_{100}$};
    \node (q101) [state, below left = of q100] {$q_{101}$};
    \node (q102) [state, below right = of q100] {$q_{102}$};
    \node (q200) [state, below = of q102] {$q_{200}$};
    \node (q201) [state, below right = of q200] {$q_{201}$};
    \node (q202) [state, below = 2cm of q201] {$q_{202}$};
    \node (q210) [state, left = 4cm of q200] {$q_{210}$};
    \node (q220) [state, above = 2cm of q210] {$q_{220}$};
    \node (q300) [state, left = of q202] {$q_{300}$};
    \node (q301) [state, below right = of q300] {$q_{301}$};
    \node (q302) [state, below =2cm of q301] {$q_{302}$};
    \node (q310) [state, left = 4cm of q300] {$q_{310}$};
    \node (q320) [state, above = 2cm of q310] {$q_{320}$};
    \node (q400) [state, left = of q302] {$q_{400}$};
    \node (q410) [state, left = 4cm of q400] {$q_{410}$};
    \node (q420) [state, above = 2cm of q410] {$q_{420}$};
    \path [-stealth, thick]
    (q100) edge [loop above] node [align = center]{$\varphi_{100\_100}$ $\mid$ gear1} ()
    (q100) edge [bend left] node [align = center]{$\varphi_{100\_101}$ $\mid$ gear1} (q101)
    (q101) edge [bend right] node [above, align = center]{$\varphi_{101\_102}$ $\mid$ gear1} (q102)
    (q101) edge [bend left] node [ align = center]{$\varphi_{101\_100}$ $\mid$ gear1} (q100)
    (q102) edge [] node [right, align = center]{$\varphi_{102\_200}$ $\mid$ gear2} (q200)
    (q102) edge [bend right] node [right, align = center]{$\varphi_{102\_100}$ $\mid$ gear1} (q100)
     (q200) edge [bend left] node [align = center]{$\varphi_{200\_201}$ $\mid$ gear2} (q201)
     (q200) edge [loop right] node [align = center]{$\varphi_{200\_200}$ $\mid$ gear2} ()
     (q200) edge [bend left] node [below, align = center]{$\varphi_{200\_210}$ $\mid$ gear2} (q210)
     (q201) edge [] node [align = center]{$\varphi_{201\_202}$ $\mid$ gear2} (q202)
     (q201) edge [bend left] node [align = center]{$\varphi_{201\_200}$ $\mid$ gear2} (q200)
     (q210) edge [] node [left, align = center]{$\varphi_{210\_220}$ $\mid$ gear2} (q220)
     (q210) edge [bend left] node [below, align = center]{$\varphi_{210\_200}$ $\mid$ gear2} (q200)
      (q220) edge [bend left] node [left, align = center]{$\varphi_{220\_200}$ $\mid$ gear2} (q200)
      (q220) edge [bend left = 90] node [left, align = center]{$\varphi_{220\_100}$ $\mid$ gear1} (q100)
      (q202) edge [bend left] node [left,align = center]{$\varphi_{202\_200}$ $\mid$ gear2} (q200)
     (q202) edge [] node [above, align = center]{$\varphi_{202\_300}$ $\mid$ gear3} (q300)
     (q300) edge [bend left] node [right, align = center]{$\varphi_{300\_301}$ $\mid$ gear3} (q301)
     (q300) edge [loop above] node [align = center]{$\varphi_{300\_300}$ $\mid$ gear3} ()
     (q301) edge [] node [align = center]{$\varphi_{301\_302}$ $\mid$ gear3} (q302)
     (q301) edge [bend left] node [align = center]{$\varphi_{301\_300}$ $\mid$ gear3} (q300)
     (q300) edge [bend left] node [align = center]{$\varphi_{300\_310}$ $\mid$ gear3} (q310)
     (q310) edge [] node [left, align = center]{$\varphi_{310\_320}$ $\mid$ gear3} (q320)
     (q320) edge [bend right = 20] node [left, align = center]{$\varphi_{320\_200}$ $\mid$ gear2} (q200)
     (q320) edge [bend left = 20] node [right, align = center]{$\varphi_{320\_300}$ $\mid$ gear3} (q300)
     (q310) edge [bend left] node [align = center]{$\varphi_{310\_300}$ $\mid$ gear3} (q300)
     (q302) edge [bend left] node [align = center]{$\varphi_{302\_300}$ $\mid$ gear3} (q300)
     (q302) edge [] node [align = center]{$\varphi_{302\_400}$ $\mid$ gear4} (q400)
     (q400) edge [bend left] node [align = center]{$\varphi_{400\_410}$ $\mid$ gear4} (q410)
     (q400) edge [loop below] node [align = center]{$\varphi_{400\_400}$ $\mid$ gear4} (q400)
     (q410) edge [bend left] node [align = center]{$\varphi_{410\_400}$ $\mid$ gear4} (q400)
     (q410) edge [] node [align = center]{$\varphi_{410\_420}$ $\mid$ gear4} (q420)
    (q420) edge [bend right = 20] node [left, align = center]{$\varphi_{420\_300}$ $\mid$ gear3} (q300)
     (q420) edge [bend left = 20] node [align = center]{$\varphi_{420\_400}$ $\mid$ gear4} (q400)

    ;
     \end{tikzpicture}
    \caption{\ATGS}\label{ATGS}
\end{figure}

$\varphi_{100\_100}=([50,90)\times[0,\na(23))\cup([0,35)\times[0,\na(10))\cup([35,50)\times[0,\na(15))\cup([90,100)\times[0,\na(40))$

$\varphi_{100\_101}=([50,90)\times[\na(23),1e6))\cup([0,35)\times[\na(10),1e6))\cup([35,50)\times[\na(15),1e6))\cup([90,100)\times[\na(40),1e6))$

$\varphi_{101\_102}=
([50,90)\times[23,1e6))\cup([0,35)\times[10,1e6))\cup([35,50)\times[15,1e6))\cup([90,100)\times[40,1e6))$

$\varphi_{101\_100}=
([50,90)\times[0,23))\cup([0,35)\times[0,10))\cup([35,50)\times[0,15))\cup([90,100)\times[0,40))$

$\varphi_{102\_200}=
([50,90)\times[23,1e6))\cup([0,35)\times[10,1e6))\cup([35,50)\times[15,1e6))\cup([90,100)\times[40,1e6))$

$\varphi_{102\_100}=
([50,90)\times[0,23))\cup([0,35)\times[0,10))\cup([35,50)\times[0,15))\cup([90,100)\times[0,40))$

$\varphi_{200\_201}=
([50,90)\times[\na(41),1e6))\cup([0,50)\times[\na(30),1e6))\cup([90,100)\times[\na(70),1e6))$

$\varphi_{200\_200}=
([0,50)\times[5,\na(30)))\cup([90,100)\times[30,\na(70)))\cup([50,90)\times[5,\na(41)))$

$\varphi_{200\_210}=
([0,90)\times[0,5))\cup([90,100)\times[0,30))$

$\varphi_{201\_202}=
([50,90)\times[41,1e6))\cup([0,50)\times[30,1e6))\cup([90,100)\times[70,1e6))$

$\varphi_{201\_200}=
([50,90)\times[0,41))\cup([0,50)\times[0,30))\cup([90,100)\times[0,70))$

$\varphi_{210\_220}=
([0,90)\times[0,\na(5)))\cup([90,100)\times[0,\na(30)))$

$\varphi_{210\_200}=
([0,90)\times[\na(5),1e6))\cup([90,100)\times[\na(30),1e6))$

$\varphi_{220\_200}=
([0,90)\times[\na(5),1e6))\cup([90,100)\times[\na(30),1e6))$

$\varphi_{220\_100}=
([0,90)\times[0,\na(5)))\cup([90,100)\times[0,\na(30)))$

$\varphi_{202\_200}=
([50,90)\times[0,41))\cup([0,50)\times[0,30))\cup([90,100)\times[0,70))$

$\varphi_{202\_300}=
([50,90)\times[41,1e6))\cup([0,50)\times[30,1e6))\cup([90,100)\times[70,1e6))$

$\varphi_{300\_301}=
([50,90)\times[\na(60),1e6))\cup([0,50)\times[\na(50),1e6))\cup([90,100)\times[\na(100),1e6))$

$\varphi_{300\_300}=
([0,40)\times[20,\na(50)))\cup([90,100)\times[50,\na(100)))\cup([40,50)\times[25,\na(50)))\cup([50,90)\times[30,\na(60)))$

$\varphi_{300\_310}=
([0,40)\times[0,20))\cup([40,50)\times[0,25))\cup([50,90)\times[0,30))\cup([90,100)\times[0,50))$

$\varphi_{301\_302}=
([50,90)\times[60,1e6))\cup([0,50)\times[50,1e6))\cup([90,100)\times[100,1e6))$

$\varphi_{301\_300}=
([50,90)\times[0,60))\cup([0,50)\times[0,50))\cup([90,100)\times[0,100))$

$\varphi_{310\_320}=
([0,40)\times[0,\na(20)))\cup([40,50)\times[0,\na(25)))\cup([50,90)\times[0,\na(30)))\cup([90,100)\times[0,\na(50)))$

$\varphi_{310\_300}=
([0,40)\times[\na(20),1e6))\cup([40,50)\times[\na(25),1e6))\cup([50,90)\times[\na(30),1e6))\cup([90,100)\times[\na(50),1e6))$

$\varphi_{320\_200}=
([0,40)\times[0,\na(20)))\cup([40,50)\times[0,\na(25)))\cup([50,90)\times[0,\na(30)))\cup([90,100)\times[0,\na(50)))$

$\varphi_{320\_300} = ([0,40)\times[\na(20),1e6))\cup([40,50)\times[\na(25),1e6))\cup([50,90)\times[\na(30),1e6))\cup([90,100)\times[\na(50),1e6))$

$\varphi_{302\_300} =
([50,90)\times[0,60))\cup([0,50)\times[0,50))\cup([90,100)\times[0,100))$

$\varphi_{302\_400} =
([50,90)\times[60,1e6))\cup([0,50)\times[50,1e6))\cup([90,100)\times[100,1e6))$

$\varphi_{400\_410} =
([0,40)\times[0,35))\cup([40,50)\times[0,40))\cup([50,90)\times[0,50))\cup([90,100)\times[0,80))$

$\varphi_{400\_400} =
([0,40)\times[35,1e6))\cup([40,50)\times[40,1e6))\cup([50,90)\times[50,1e6))\cup([90,100)\times[80,1e6))$

$\varphi_{410\_400} =
([0,40)\times[\na(35),1e6))\cup([40,50)\times[\na(40),1e6))\cup([50,90)\times[\na(50),1e6))\cup([90,100)\times[\na(80),1e6))$

$\varphi_{410\_420} =
([0,40)\times[0,\na(35)))\cup([40,50)\times[0,\na(40)))\cup([50,90)\times[0,\na(50)))\cup([90,100)\times[0,\na(80)))$

$\varphi_{420\_300}=
([0,40)\times[0,\na(35)))\cup([40,50)\times[0,\na(40)))\cup([50,90)\times[0,\na(50)))\cup([90,100)\times[0,\na(80)))$

$\varphi_{420\_400}=
([0,40)\times[\na(35),1e6))\cup([40,50)\times[\na(40),1e6))\cup([50,90)\times[\na(50),1e6))\cup([90,100)\times[\na(80),1e6))$

\subsection{Random Benchmarks}\label{section:random_benchmark_detail}
The random generation method for the s-MAs used in \cref{section:experiments:random} is as follows. 
First, we randomly generate a partition by sampling $|\SigmaEf| - 1$ naturals and make the corresponding partition with the partitioning function $\partitioning$ in \cref{algorithm:partitioning_interval_algebra}. 
Then, we randomly choose the destination state and the output character for each state and predicate.

\section{Detail of Experimental Results}

\subsection{$\MH$}
We show a series of counterexamples in one of the executions of learning $\MH$ as follows.

$(0, 0, -15, 0)$

$(1, 0, -15, \na(0.4))$

$(1, 0, -15, 0)$

$(1, 0, -274, 0)$

$(0, 0, -274, 0), (0, 0, -274, 0.2)$

$(0, 0, -274, 0), (0, 0, -15, 0.2)$

$(0, 0, -274, 0), (1, 0, -274, 0.2)$

$(0, 0, -274, 0), (1, 0, -15, 0.2)$

$(0, 0, -274, 0), (0, 0, \na(-10), 0.2)$

$(0, 0, -274, 0), (1, 0, \na(-10), 0.2)$

$(1, 0, \na(-10), \na(0.4))$

$(1, 0, -15, \na(0.4)), (1, 0, -274, 0.3)$

$(1, 0, -15, \na(0.4)), (1, 0, -15, 0.3)$

$(1, 0, -15, \na(0.4)), (1, 0, \na(-10), 0.3)$

$(1, 0, -15, \na(0.4)), (1, 0, \na(10), 0.3)$

$(1, 0, \na(10), \na(0.4))$

$(1, 0, -15, \na(0.4)), (1, 0, \na(10), 0)$

$(1, 0, -15, \na(0.4)), (1, 0, \na(10), 0.3), (0, 0, -274, 0), (0, 0.1, -274, 0)$

$(0, 0.1, -15, 0)$

$(0, 0, -274, 0), (0, 0.1, -274, 0.2)$

$(0, 0, -274, 0), (0, 0.1, -15, 0.2)$

$(0, 0, -274, 0), (0, 0.1, \na(-10), 0.2)$

$(1, 0, -15, \na(0.4)), (1, 0, \na(10), 0.3), (0, 0, -274, 0), (1, 0.1, -274, 0)$

$(1, 0.1, -15, 0)$

$(1, 0.1, -15, \na(0.4))$

$(0, 0, -274, 0), (1, 0.1, -274, 0.2)$

$(0, 0, -274, 0), (1, 0.1, -15, 0.2)$

$(0, 0, -274, 0), (1, 0.1, \na(-10), 0.2)$

$(1, 0.1, \na(-10), \na(0.4))$

$(1, 0, -15, \na(0.4)), (1, 0.1, -274, 0.3)$

$(1, 0, -15, \na(0.4)), (1, 0.1, -15, 0.3)$

$(1, 0, -15, \na(0.4)), (1, 0.1, \na(-10), 0.3)$

$(1, 0, -15, \na(0.4)), (1, 0.1, \na(10), 0.3)$

$(1, 0.1, \na(10), \na(0.4))$

$(1, 0, -15, \na(0.4)), (1, 0.1, \na(10), 0)$

\subsection{$\ATGS$}
We show a series of counterexamples in one of the executions of learning $\ATGS$ as follows.

\footnotesize
$(0, \na(10)), (0, 10), (0, 10)$

$(35, \na(10)), (0, 10), (0, 10)$

$(0, \na(10)), (0, 10), (35, 15)$

$(35, \na(15)), (0, 10), (0, 10)$

$(50, \na(15)), (0, 10), (0, 10)$

$(0, \na(10)), (50, 23), (0, \na(10))$

$(50, \na(23)), (0, 10), (0, 10)$

$(90, \na(23)), (0, 10), (0, 10)$

$(90, \na(40)), (0, 10), (0, 10)$

$(0, \na(10)), (0, 10), (90, 40)$

$(0, \na(10)), (0, 10), (0, 10), (0, 0), (0, 0), (0, 0)$

$(0, \na(10)), (0, 10), (0, 10), (35, 0), (35, 0), (35, 0)$

$(0, \na(10)), (0, 10), (0, 10), (50, 0), (50, 0), (50, 0)$

$(0, \na(10)), (0, 10), (0, 10), (0, 0), (0, \na(5)), (0, 0)$

$(0, \na(10)), (0, 10), (0, 10), (0, 0), (35, \na(5)), (0, 0)$

$(0, \na(10)), (0, 10), (0, 10), (0, 0), (50, \na(5)), (0, 0)$

$(0, \na(10)), (0, 10), (0, 10), (0, 0), (90, \na(30)), (0, 0)$

$(0, \na(10)), (0, 10), (0, 10), (0, 5), (0, 0), (0, 0)$

$(0, \na(10)), (0, 10), (0, 10), (35, 5), (0, 0), (0, 0)$

$(0, \na(10)), (0, 10), (0, 10), (50, 5), (0, 0), (0, 0)$

$(0, \na(10)), (0, 10), (0, 10), (90, 30), (0, 0), (0, 0)$

$(0, \na(10)), (0, 10), (0, 10), (0, \na(30)), (0, 30), (0, 30)$

$(0, \na(10)), (0, 10), (0, 10), (0, \na(30)), (35, 30), (0, \na(30))$

$(0, \na(10)), (0, 10), (0, 10), (35, \na(30)), (0, 30), (0, 30)$

$(0, \na(10)), (0, 10), (0, 10), (0, \na(30)), (0, 30), (50, 41)$

$(0, \na(10)), (0, 10), (0, 10), (50, \na(41)), (0, 30), (0, 30)$

$(0, \na(10)), (0, 10), (0, 10), (0, \na(30)), (90, 70), (0, \na(30))$

$(0, \na(10)), (0, 10), (0, 10), (90, \na(70)), (0, 30), (0, 30)$

$(0, \na(10)), (0, 10), (0, 10), (0, \na(30)), (0, 30), (0, 30), (0, 20), (0, 20), (0, 20)$

$(0, \na(10)), (0, 10), (0, 10), (0, \na(30)), (0, 30), (0, 30), (35, 20), (0, 20), (0, 20)$

$(0, \na(10)), (0, 10), (0, 10), (0, \na(30)), (0, 30), (0, 30), (50, 30), (0, 20), (0, 20)$

$(0, \na(10)), (0, 10), (0, 10), (0, \na(30)), (0, 30), (0, 30), (90, 50), (0, 20), (0, 20)$

$(0, \na(10)), (0, 10), (0, 10), (0, \na(30)), (0, 30), (0, 30), (40, 20), (0, 20), (0, 20)$

$(40, 0), (0, 10), (0, 10)$

$(40, \na(15)), (0, 10), (0, 10)$

$(0, \na(10)), (0, 10), (40, 15)$

$(0, \na(10)), (0, 10), (0, 10), (0, 0), (40, \na(5)), (0, 0)$

$(0, \na(10)), (0, 10), (0, 10), (40, 5), (0, 0), (0, 0)$

$(0, \na(10)), (0, 10), (0, 10), (0, \na(30)), (40, 30), (0, \na(30))$

$(0, \na(10)), (0, 10), (0, 10), (40, \na(30)), (0, \na(30)), (0, 30)$

$(0, \na(10)), (0, 10), (0, 10), (0, \na(30)), (0, 30), (0, 30), (40, 25), (0, 20), (0, 20)$

$(0, \na(10)), (0, 10), (0, 10), (0, \na(30)), (0, 30), (0, 30), (0, 0), (0, \na(20)), (0, 20)$

$(0, \na(10)), (0, 10), (0, 10), (0, \na(30)), (0, 30), (0, 30), (0, 0), (35, \na(20)), (0, 20)$

$(0, \na(10)), (0, 10), (0, 10), (0, \na(30)), (0, 30), (0, 30), (0, 0), (40, \na(25)), (0, 20)$

$(0, \na(10)), (0, 10), (0, 10), (0, \na(30)), (0, 30), (0, 30), (0, 0), (50, \na(30)), (0, 20)$

$(0, \na(10)), (0, 10), (0, 10), (0, \na(30)), (0, 30), (0, 30), (0, 0), (90, \na(50)), (0, 20)$

$(0, \na(10)), (0, 10), (0, 10), (0, \na(30)), (0, 30), (0, 30), (0, \na(50)), (0, 50), (0, 50)$

$(0, \na(10)), (0, 10), (0, 10), (0, \na(30)), (0, 30), (0, 30), (35, \na(50)), (0, 50), (0, 50)$

$(0, \na(10)), (0, 10), (0, 10), (0, \na(30)), (0, 30), (0, 30), (0, \na(50)), (0, 50), (35, 50)$

$(0, \na(10)), (0, 10), (0, 10), (0, \na(30)), (0, 30), (0, 30), (0, \na(50)), (40, 50), (0, \na(50))$

$(0, \na(10)), (0, 10), (0, 10), (0, \na(30)), (0, 30), (0, 30), (40, \na(50)), (0, 50), (0, 50)$

$(0, \na(10)), (0, 10), (0, 10), (0, \na(30)), (0, 30), (0, 30), (0, \na(50)), (50, 60), (0, \na(50))$

$(0, \na(10)), (0, 10), (0, 10), (0, \na(30)), (0, 30), (0, 30), (0, \na(50)), (90, 100), (0, \na(50))$

$(0, \na(10)), (0, 10), (0, 10), (0, \na(30)), (0, 30), (0, 30), (50, \na(60)), (0, \na(50)), (0, 50)$

$(0, \na(10)), (0, 10), (0, 10), (0, \na(30)), (0, 30), (0, 30), (90, \na(100)), (0, \na(50)), (0, 50)$

$(0, \na(10)), (0, 10), (0, 10), (0, \na(30)), (0, 30), (0, 30), (0, \na(50)), (0, 50), (0, 50), (0, 35), (0, 35), (0, 35)$

$(0, \na(10)), (0, 10), (0, 10), (0, \na(30)), (0, 30), (0, 30), (0, \na(50)), (0, 50), (0, 50), (35, 35), (0, 35), (0, 35)$

$(0, \na(10)), (0, 10), (0, 10), (0, \na(30)), (0, 30), (0, 30), (0, \na(50)), (0, 50), (0, 50), (40, 40), (0, 35), (0, 35)$

$(0, \na(10)), (0, 10), (0, 10), (0, \na(30)), (0, 30), (0, 30), (0, \na(50)), (0, 50), (0, 50), (50, 50), (0, 35), (0, 35)$

$(0, \na(10)), (0, 10), (0, 10), (0, \na(30)), (0, 30), (0, 30), (0, \na(50)), (0, 50), (0, 50), (90, 80), (0, 35), (0, 35)$

$(0, \na(10)), (0, 10), (0, 10), (0, \na(30)), (0, 30), (0, 30), (0, \na(50)), (0, 50), (0, 50), (0, 0), (0, \na(35)), (0, 35)$

$(0, \na(10)), (0, 10), (0, 10), (0, \na(30)), (0, 30), (0, 30), (0, \na(50)), (0, 50), (0, 50), (0, 0), (35, \na(35)), (0, 35)$

$(0, \na(10)), (0, 10), (0, 10), (0, \na(30)), (0, 30), (0, 30), (0, \na(50)), (0, 50), (0, 50), (0, 0), (40, \na(40)), (0, 35)$

$(0, \na(10)), (0, 10), (0, 10), (0, \na(30)), (0, 30), (0, 30), (0, \na(50)), (0, 50), (0, 50), (0, 0), (50, \na(50)), (0, 35)$

$(0, \na(10)), (0, 10), (0, 10), (0, \na(30)), (0, 30), (0, 30), (0, \na(50)), (0, 50), (0, 50), (0, 0), (90, \na(80)), (0, 35)$
\normalsize
\subsection{Random Benchmarks}
\Cref{table:detailResultRandom} summarizes the detailed results for the random benchmarks.
\begin{table}[tb]
\caption{Summary of the experimental setting and the results. The columns ``$n$'' and ``$|\SigmaEf|$'' show the number of states and the size of $\SigmaEf$ of the random s-MAs. The columns ``\# of eq.''\ and ``\# of oq.''\ show the average number of equivalence and output queries made during learning. The columns ``$|R|$'' and ``$|E|$'' show the size of $R$ and $E$ in the final observation table, respectively.
The columns applied by V() represent the respective variances.
}\label{table:detailResultRandom}
\scalebox{.9}{
\begin{tabular}{ll|rrrrrrrr}
\toprule
$n$ & $|\SigmaEf|$ & \# of eq. &  \# of oq. & $|R|$  &   $|E|$ &  V(\# of eq.) &  V(\# of oq.) &  V($|R|$) &  V($|E|$) \\
\midrule
    10 &     10 &  10.00 &   1015.60 &   91.56 & 0.0 &        0.00 &        78.64 &    0.79 &    0.00 \\
    10 &     20 &  20.00 &   4075.60 &  193.78 & 0.0 &        0.00 &      4036.64 &   10.09 &    0.00 \\
    10 &     30 &  29.99 &   9104.70 &  293.49 & 0.0 &        0.01 &      7406.91 &    8.23 &    0.00 \\
    10 &     40 &  40.00 &  16161.60 &  394.04 & 0.0 &        0.00 &      9661.44 &    6.04 &    0.00 \\
    20 &     10 &  10.10 &   2035.95 &  181.57 & 0.1 &        0.09 &      4146.25 &    0.79 &    0.09 \\
    20 &     20 &  20.00 &   8030.80 &  381.54 & 0.0 &        0.00 &       299.36 &    0.75 &    0.00 \\
    20 &     30 &  30.00 &  18080.70 &  582.69 & 0.0 &        0.00 &      4602.51 &    5.11 &    0.00 \\
    20 &     40 &  40.00 &  32110.80 &  782.77 & 0.0 &        0.00 &      8955.36 &    5.60 &    0.00 \\
    40 &     10 &  10.00 &   4010.90 &  361.09 & 0.0 &        0.00 &         8.19 &    0.08 &    0.00 \\
    40 &     20 &  20.00 &  16036.20 &  761.81 & 0.0 &        0.00 &       669.56 &    1.67 &    0.00 \\
    40 &     30 &  30.00 &  36079.20 & 1162.64 & 0.0 &        0.00 &      3753.36 &    4.17 &    0.00 \\
    40 &     40 &  40.00 &  64175.60 & 1564.39 & 0.0 &        0.00 &     20348.64 &   12.72 &    0.00 \\
    80 &     10 &  10.18 &   8186.59 &  722.55 & 0.2 &        0.19 &    113588.20 &    4.35 &    0.16 \\
    80 &     20 &  20.00 &  32031.80 & 1521.59 & 0.0 &        0.00 &       696.76 &    1.74 &    0.00 \\
    80 &     30 &  30.00 &  72072.00 & 2322.40 & 0.0 &        0.00 &      4014.00 &    4.46 &    0.00 \\
    80 &     40 &  40.00 & 128100.80 & 3122.52 & 0.0 &        0.00 &      7791.36 &    4.87 &    0.00 \\
\bottomrule
\end{tabular}
}
\end{table}
\fi
\end{document}